\documentclass[usenames,dvipsnames,12pt]{article}
\usepackage{graphicx}
\usepackage{fancybox,amssymb, amsthm,
graphicx,graphics,color,epsfig,subfigure,latexsym,tabularx,times,rotating,amsmath,multirow,amsfonts, amssymb, amsthm, amsbsy, amscd, bm,
  bbm,enumitem,picture,wrapfig,
verbatim,setspace, array,
lscape}
\usepackage[ruled]{algorithm2e}
\usepackage{subfigure}
\usepackage{siunitx}
\sisetup{output-exponent-marker=\ensuremath{\mathrm{e}}}
\PassOptionsToPackage{hyphens}{url}\usepackage{hyperref}
\usepackage[square, comma, sort&compress, numbers]{natbib}
\usepackage{tikz}
\input{tikzlibrarybayesnet.code}

\usepackage{lscape,ctable}

\newcommand{\beginsupplement}{%
        \setcounter{table}{0}
        \renewcommand{\thetable}{S\arabic{table}}%
        \setcounter{figure}{0}
        \renewcommand{\thefigure}{S\arabic{figure}}%
     }

\usepackage[T1]{fontenc}
\usepackage[sc]{mathpazo}

\setlength{\oddsidemargin}{0in}
\setlength{\evensidemargin}{0in}
\setlength{\topmargin}{0in}
\setlength{\textheight}{8.5in}
\setlength{\textwidth}{6.5in}
\setlength{\parindent}{0.3in}
\setlength{\headsep}{0.1in}

\renewcommand{\algorithmcfname}{TreeBH}

\definecolor{Cerulean}{rgb}{0.0, 0.48, 0.65}

\theoremstyle{plain}
\newtheorem{thm}{Theorem}[]
\newtheorem{prop}{Proposition}[]

\newtheorem{definition}{Definition}[]
\newtheorem{lem}{Lemma}[]

\definecolor{ggreen}{rgb}{0.01, 0.75, 0.24}

\pdfminorversion=4

\begin{document}

\begin{center}

{\bf  \large Testing hypotheses on a tree: new error rates and controlling strategies}

\vspace*{.5cm}
 Marina Bogomolov$^*$, Christine B. Peterson$^*$, Yoav Benjamini, Chiara Sabatti

\vspace*{.2cm}

 * These authors contributed equally to this work.

 \vspace*{1cm}

\end{center}

\begin{abstract}
We introduce a multiple testing procedure (TreeBH) which addresses the challenge of controlling error rates at multiple levels of resolution. Conceptually, we frame this problem as the selection of hypotheses which are organized hierarchically in a tree structure. We describe a fast algorithm for the proposed sequential procedure, and prove that it controls relevant error rates given certain assumptions on the dependence among the $p$-values.  Through simulations, we demonstrate that TreeBH offers the desired guarantees under a range of dependency structures---including  one similar to that  encountered in genome-wide association studies---and that it has the potential of gaining  power over alternative methods.
We also introduce a  modified version of TreeBH which we prove to control  the relevant error rates   under any dependency structure.
 We conclude with two case studies: we first analyze data collected as part of the Genotype-Tissue Expression (GTEx) project, which aims to characterize the genetic regulation of gene expression across multiple tissues in the human body, and secondly, data examining the relationship between the gut microbiome and colorectal cancer.
\end{abstract}

\section*{Introduction}

A common feature of current research is that measurements on a vast number of variables are gathered prior to the specification of  hypotheses of scientific interest. Only in a second stage is this bounty of data explored to identify a set of promising hypotheses among all the possible relations. This {\em modus operandi} is largely driven by a preference for comprehensive data acquisition. For example, the most commonly adopted  technology to measure gene expression does not provide us with quantifications for a few selected genes that we suspect are involved in the biological mechanism under study, but all the genes in a cell;  the routine way to evaluate
genetic variation does not target candidate loci in DNA, but assesses hundreds of thousands of sites; resequencing of the 16S rRNA gene in microbiome studies quantifies the abundance of all bacteria in a community, rather than pre-identified interesting types. In this context, statistical hypothesis testing is not an instrument for confirmatory investigation, but rather serves as an exploratory tool. It  acts as a sieve to sort out the findings that have higher chances of being replicable. This goal is formalized in the notion of false discovery rate (FDR) control. The procedure introduced in \cite{BH95} does not provide guarantees on each specific finding, but assures scientists that on average only a small portion of the discoveries they put forward will not be replicated. This paper deals with contexts where guarantees of this type are appropriate.

Another characteristic of studies that start with comprehensive measurements of a vast collection of variables is that the interpretation of the resulting discoveries is  challenging. It often relies on scientists recognizing some structure among the many tested hypotheses. For example, eQTL studies start with testing the association between millions of genetic variants and tens of thousands of genes across multiple tissues. However,  rather than reporting all of the hypotheses that have been rejected, researchers focus on aggregate discoveries,  identifying genes whose expression is subject to genetic regulation (eGenes) and SNPs that affect the gene expression of one or more  genes 
(eSNPs) across tissues \cite{GTEx17}. Similarly, in microbiome studies, rather than listing all single bacteria whose abundance has been associated with a health outcome, findings are summarized at the levels of species, families, classes, etc.\ \cite{SH14}. This practice of reporting discoveries that are at a coarser resolution than the individual hypotheses tested has important consequences.

Firstly, because FDR-controlling strategies
are crucially based on the way in which discoveries are counted,  a guarantee on the average proportion of unreplicable findings for the hypotheses at the highest resolution does not translate to a similar guarantee for discoveries derived by {\em post hoc} aggregation.
This problem has been addressed in specific contexts when the resolution of scientific interest is coarser, including
 MRI imaging \cite{PetW04, BH07, SetS15}, copy number variant detection \cite{SiegmundFDR}, and genome wide association studies \cite{BetS16}.
In other settings, 
scientists are interested in discoveries at multiple resolutions, and error control guarantees should be provided for all of these. This viewpoint motivated our previous work \cite{BB14,PetS16},  the p-filter method recently introduced by \cite{FBR15,RetJ17}, and its generalization to the context of model selection \cite{KS17}. 

Acknowledging the existing structure among the hypotheses at the testing stage rather than {\em post hoc} not only allows proper FDR control,
but also  presents an opportunity to increase power.
The authors in \cite{YetL06,Y08} described a setting where hypotheses can be
arranged along a tree, with resolution increasing with distance from
the root node.
By carrying out testing along the tree, one
might avoid having to test high resolution hypotheses when the
corresponding coarser ones are not rejected:
this could reduce the total number of tests,  
as well as allow an adaptively
higher threshold for discoveries along
branches that contain more signal.
The notion that testing hypotheses in a way that leverages their position in a directed acyclic graph can increase power is also at the basis of the work in \cite{Meinshausen08,R08,GM08,MG15,MG15b}, which, however, attempts to control the familywise error rate--a global error measure that is overly strict for the hypothesis-generating studies we are interested in.
Among the contributions aiming to control FDR,  \cite{Y08} relies on very strong independence assumptions, while \cite{LG16,LetF17,LetF17b,RetJ17b} all study strategies that lead to the control of FDR on the total discoveries, without affording interpretation of results at multiple layers of resolution.
Finally, in a selective inference framework, \cite{BB14} and \cite{HetS16}  consider the question of how to test groups of fine
resolution hypotheses when their $p$-values have been used in
selecting promising groups. This is a special version of the problem we are interested in, when there are only two possible levels of resolution.

Here, we introduce a novel framework to test families of hypotheses along
a tree with arbitrarily many levels, where ``family'' refers to
a set of scientifically related hypotheses. By extending and generalizing the results in \cite{BB14}, the procedure outlined in this paper, for the first time, both {\em allows  FDR control at multiple levels of resolution}  and {\em capitalizes on the structure of the hypotheses to increase power}. Interestingly, it is applicable  also to situations where data is gathered sequentially, allowing the researchers to test the hypotheses only in resolution order, rather than simultaneously.

The rest of the paper is organized as follows. In Section \ref{sec:treedef}, we
describe precisely the types of trees of hypotheses we are concerned with.  Section \ref{sec:errorrates} introduces
measures of global error that reflect  a selective strategy of
testing, and Section \ref{sec:procedure} contains theoretical results  on  control of these
error rates.  Section \ref{sec:sim}
presents the results of  simulations contrasting the
approach we propose to alternative methods as well as
exploring the robustness of the controlling strategy to dependence. Finally,
Section \ref{sec:casestudies} contains the results from our
procedure on data from the GTEx project and from a microbiome study, where the hypotheses are naturally organized in a multi-layer tree.

\section{A tree of families of hypotheses} \label{sec:treedef}

We consider a collection of hypotheses ${\cal F}=\{H_1,\ldots,H_m\}$ that are organized in a tree structure, reflecting logical relations among them. Each hypothesis $H_i$ has only one parent but can have multiple children, so that when one null hypothesis is true, all of its descendants are true, and when one null hypothesis is false, all of its ancestors are  false. These logical relations hold, for example, when each parent hypothesis is intersection of all its children. The level of a hypothesis corresponds to its distance from the root node: we are interested in situations where hypotheses on the same level correspond to scientific statements made at the same  resolution.  Because of the important role of levels, we explicitly include these in our notation, indicating with $\mathcal{F}_{i}^{\ell + 1}$ the family of hypotheses at level $\ell +1$ that has $H_i$ as a parent hypothesis at level $\ell$, and with $P^1(i)$ the index of the parent hypothesis of $H_i$ at level $\ell-1$.   Similarly, $P^2(i)$ is  the index of the parent of $H_{P^1(i)},$ $H_{P^{k}(i)}$ is a parent of $H_{P^{k-1}(i)},$  and $P^{\ell}(i)=0$: we refer to the hypotheses with indices in the set $\{P^1(i), \ldots, P^{\ell}(i)\}$ as the \textit{ancestor} hypotheses of $H_i.$ Figure 1 depicts the organization of hypotheses within a three-level tree.

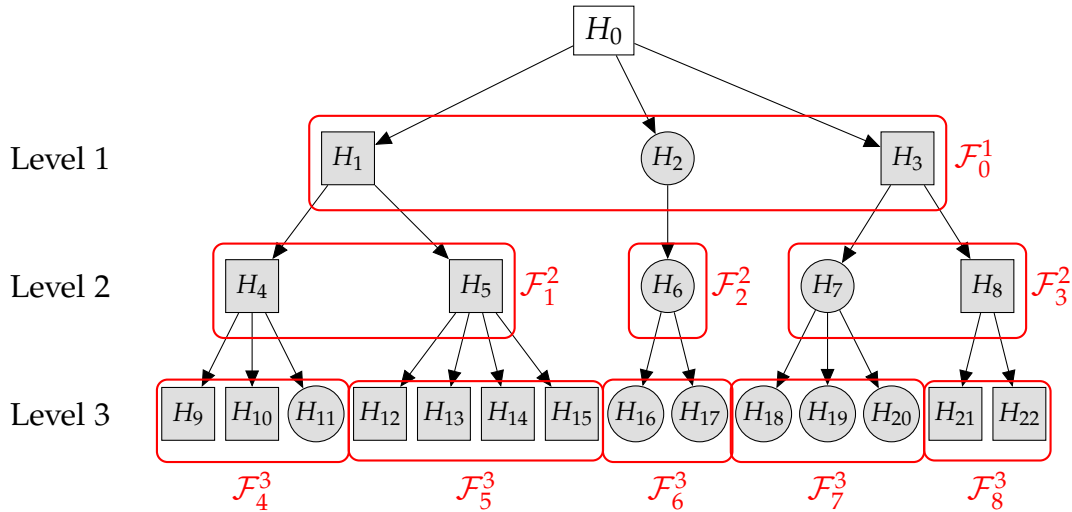
\begin{figure}[h!]
  \begin{center}
  \begin{tikzpicture}[scale=.85]
  \node[obs, rectangle, text opacity=1]  (H1)  at(0,0)   {$H_{9}$} ;
   \node[obs,  rectangle, text opacity=1]  (H2)  at(1,0)   {$H_{10}$} ;
     \node[obs,  text opacity=1]  (H3)  at(2,0)   {$H_{11}$} ;
   \node[obs, rectangle, text opacity=1]  (H4)  at(3,0)   {$H_{12}$} ;
     \node[obs, rectangle,  text opacity=1]  (H5)  at(4,0)   {$H_{13}$} ;
   \node[obs, rectangle, text opacity=1]  (H6)  at(5,0)   {$H_{14}$} ;
   \node[obs,  rectangle,text opacity=1]  (H7)  at(6,0)   {$H_{15}$} ;
   \node[obs,  text opacity=1]  (H8)  at(7,0)   {$H_{16}$} ;
     \node[obs,  text opacity=1]  (H9)  at(8,0)   {$H_{17}$} ;
   \node[obs,   text opacity=1]  (H10)  at(9,0)   {$H_{18}$} ;
     \node[obs,  text opacity=1]  (H11)  at(10,0)   {$H_{19}$} ;
   \node[obs,  text opacity=1]  (H12)  at(11,0)   {$H_{20}$} ;
      \node[obs,  rectangle,text opacity=1]  (H13)  at(12,0)   {$H_{21}$} ;
   \node[obs,  rectangle,text opacity=1]  (H14)  at(13,0)   {$H_{22}$} ;

   \node[obs, rectangle, text opacity=1]  (-1H1)  at(1,2)   {$H_{4}$} ;
     \node[obs, rectangle,  text opacity=1]  (-1H2)  at(4.5,2)   {$H_{5}$} ;
   \node[obs,  text opacity=1]  (-1H3)  at(7.5,2)   {$H_{6}$} ;
      \node[obs,  text opacity=1]  (-1H4)  at(10,2)   {$H_{7}$} ;
   \node[obs, rectangle, text opacity=1]  (-1H5)  at(12.5,2)   {$H_{8}$} ;

   \node[obs, rectangle, text opacity=1]  (-2H1)  at(2.5,4)   {$H_{1}$} ;
     \node[obs,  text opacity=1]  (-2H2)  at(7.5,4)   {$H_{2}$} ;
   \node[obs,  rectangle,text opacity=1]  (-2H3)  at(11.25,4)   {$H_{3}$} ;

 \node[inner sep=4pt,draw,rectangle]  (-3H1)  at(6.5,6)   {$H_{0}$} ;

   \edge {-1H1} {H1}; %
    \edge {-1H1} {H2}; %
   \edge {-1H1} {H3}; %

    \edge {-1H2} {H4}; %
    \edge {-1H2} {H5}; %
   \edge {-1H2} {H6}; %
   \edge {-1H2} {H7}; %

  \edge {-1H3} {H8}; %
    \edge {-1H3} {H9}; %

 \edge {-1H4} {H10}; %
    \edge {-1H4} {H11}; %
   \edge {-1H4} {H12}; %

    \edge {-1H5} {H13}; %
    \edge {-1H5} {H14}; %

 \edge {-2H1} {-1H1}; %
    \edge {-2H1} {-1H2}; %

 \edge {-2H2} {-1H3}; %

    \edge {-2H3} {-1H4}; %
 \edge {-2H3} {-1H5}; %

 \edge {-3H1} {-2H1}; %
 \edge {-3H1} {-2H2}; %
 \edge {-3H1} {-2H3}; %

 \plate [inner sep=0.15cm, yshift=.05cm,color=red,thick] {plateX} {(-2H1) (-2H2) (-2H3)} {};
  \node[text=red] at(12.3,4) {${\cal F}_0^1$};
 \plate[inner sep=0.15cm, yshift=.05cm, color=red,thick] {plateX} {(-1H1) (-1H2) }{};
 \node[text=red] at(5.5,2) {${\cal F}_1^2$};
  \plate[inner sep=0.15cm, yshift=.05cm, color=red,thick] {plateX} { (-1H3)  }{};
  \node[text=red] at(8.5,2) {${\cal F}_2^2$};
   \plate[inner sep=0.15cm, yshift=.05cm, color=red,thick] {plateX} { (-1H4) (-1H5) }{};
   \node[text=red] at(13.5,2) {${\cal F}_3^2$};
     \plate[inner sep=0.05cm, xshift=-0.0cm, yshift=0.02cm, color=red,thick] {plateX} { (H1) (H2)(H3) }{};
     \node[text=red] at(1,-1.2) {${\cal F}_4^3$};
\plate[inner sep=0.05cm, xshift=-0.0cm, yshift=0.02cm,color=red,thick] {plateX} { (H4) (H5)(H6)(H7) }{};
\node[text=red] at(4.5,-1.2) {${\cal F}_5^3$};

   \plate[inner sep=0.05cm, xshift=-0.0cm, yshift=0.02cm, color=red,thick] {plateX} { (H8) (H9)}{};
     \node[text=red] at(7.5,-1.2) {${\cal F}_6^3$};
\plate[inner sep=0.05cm, xshift=-0.0cm, yshift=0.02cm,color=red,thick] {plateX} {  (H10)(H11)(H12) }{};
\node[text=red] at(10,-1.2) {${\cal F}_7^3$};
\plate[inner sep=0.05cm, xshift=-0.0cm, yshift=0.02cm,color=red,thick] {plateX} { (H13) (H14)}{};
\node[text=red] at(12.5,-1.2) {${\cal F}_8^3$};

\node[] at(-2,0) {Level 3};
\node[] at(-2,2) {Level 2};
\node[] at(-2,4) {Level 1};

\end{tikzpicture}
\end{center}
\caption{\em Hierarchical structure of hypotheses in a three-level tree. Circles denote true null hypotheses, while squares denote false nulls. Children of the same parent constitute a family of hypotheses, identified with ${\cal F}_i^j$.}\label{testtree}
\end{figure}

In a microbiome study, for example, the root node might correspond to the hypothesis that there is no relation at all between bacterial communities in the oral cavity and incidence of caries. The hypotheses at level 1 might specialize this to each of the phyla of bacteria, those in level 2 to each of the classes, and so on, following the branching of the taxonomic tree.

We assume that valid $p$-values for all the hypotheses in the tree can be calculated on the basis of data ${\cal D}$. The testing strategy that we propose is hierarchical, starting from the coarser hypotheses and increasing in resolution (similarly to the procedure in \cite{YetL06,Y08}), so that it is applicable also to situations when the data is acquired in a sequential manner. In many applications, however, the same data is used to test all hypotheses, and the $p$-values across the tree are going to have complex dependencies. Scientists might use a variety of  strategies to calculate $p$-values for hypotheses at different levels to maximize power, or they might rely on the $p$-values for the finer-scale hypotheses and obtain the remainder with combination rules. The latter situation, for example, is typical of eQTL studies, where $p$-values for the hypotheses of no association between individual variants and the expression of specific genes are calculated and then combined to test more global hypotheses--leading for example to the discovery of eGenes (genes whose expression is regulated by DNA variants). It is not the goal of this work to investigate the most powerful tests, rather we focus on the development of multiplicity adjustment strategies that accept any collection of valid $p$-values for the hypotheses in the tree.

In our theory and simulations, aiming at a general purpose rule, we emphasize the case where each parent hypothesis is the intersection of all its children, and the $p$-values for the Level $\ell-1$ hypotheses are derived using
Simes method \cite{Simes1986} on the $p$-values of the family of
hypotheses they index at Level $\ell$, starting from the available valid Level-$L$ $p$-values and working up from the bottom level of the tree. The $p$-value for hypothesis $H_{i}$ at Level $\ell-1,$ indexing family $\mathcal{F}_i^\ell$, is obtained as
\begin{equation}\label{simes}
p = \min_{j} p_{(j)}\times \frac{k}{j},
\end{equation} where the set $\big\{p_j: j = 1, \ldots, k\big\}$ corresponds to the $p$-values for the hypotheses belonging to $\mathcal{F}_i^\ell$, and
 the index in parentheses signifies that the $p$-values have been sorted in increasing order. Simes rule  is relatively robust to dependence and has nice properties when used in conjunction with the Benjamini-Hochberg (BH) procedure \citep{BH95}, see Section \ref{sec:procedure} for more details.

\section{Error rate} \label{sec:errorrates}
Once the hypotheses have been organized in this hierarchical
structure, it is clear that discoveries can be made at multiple
levels, leaving to the researcher the choice of how to interpret this logically connected set of findings.
 By definition, FDR is a measure relative to
the total number of discoveries: depending on what we consider
the relevant total, we have a different error rate.  Yekutieli
\cite{Y08}, for example, discusses three natural choices in the case of a tree of hypotheses: the full-tree FDR
(based on counting all discoveries at each level), the
level-restricted FDR (in which one focuses on findings at a specific
level), and the outer-node FDR (which considers only  the finest
scale results reached, that is all discoveries that are not parents
to other discoveries). While the outer node
discoveries (and hence the outer-node FDR) might be the most
interesting to an individual researcher, level-specific error rates might be easier to interpret,
and the results of a procedure that controls these might be easier to
share broadly with the scientific community; for example, the results of an eQTL study are often described with a list of eGenes, and a list of gene specific variants, corresponding to discoveries at a specific level of the tree. On the other hand, separately controlling the FDR at
each level can be an inadequate way of handling multiplicity
(especially when the hierarchy includes many layers) and, more importantly, may lead to results
that lack coherence across resolutions. When adapted to our setting, the p-filter methodology,
introduced in \cite{FBR15} and not tied to hierarchical testing,
 suggests controlling the level-specific
FDRs, after imposing consistency across different levels:
coarser discoveries happen if and only if at least one of the finest
level hypotheses that they index is rejected.

We introduce here a  notion of level-specific FDR for a general multi-level tree that reflects a hierarchical order of testing where families of hypotheses at higher resolution are tested only when their parent hypotheses have been rejected. This resolves issues of consistency across levels, while preserving the appeal of marginal error measures.
Building on the work in  \cite{BB14}, our error rate assigns a specific role to the families ${\cal F}^{\ell}_{s}$ tested at level $\ell$. We suggest controlling the expectation of a weighted average false discovery proportion across the families tested at each level, with the number and identity of families tested random outcomes of the testing that took place at previous levels. Figure \ref{testtree2} gives a graphical representation of the FDPs averaged for one specific pattern of rejections: the error rate we propose is the expected value of such a random average.
\begin{figure}[h!]
  \begin{center}
  \begin{tikzpicture}[scale=.85]
  \node[obs,  fill=Cerulean!30,rectangle, text opacity=1,very thick,draw=red]  (H1)  at(0,0)   {$H_{9}$} ;
   \node[obs,  fill=Cerulean!30, rectangle, text opacity=1,very thick,draw=red]  (H2)  at(1,0)   {$H_{10}$} ;
     \node[obs,  fill=Cerulean!30, text opacity=1,very thick,draw=red]  (H3)  at(2,0)   {$H_{11}$} ;
   \node[obs,  fill=Cerulean!30,rectangle, text opacity=1,very thick,draw=red]  (H4)  at(3,0)   {$H_{12}$} ;
     \node[obs,  fill=Cerulean!30,rectangle,  text opacity=1,very thick,draw=red]  (H5)  at(4,0)   {$H_{13}$} ;
   \node[obs,  fill=Cerulean!30,rectangle, text opacity=1,very thick,draw=red]  (H6)  at(5,0)   {$H_{14}$} ;
   \node[obs,   fill=Cerulean!30,rectangle,text opacity=1,very thick,draw=red]  (H7)  at(6,0)   {$H_{15}$} ;
   \node[obs,  text opacity=1]  (H8)  at(7,0)   {$H_{16}$} ;
     \node[obs,  text opacity=1]  (H9)  at(8,0)   {$H_{17}$} ;
   \node[obs,   fill=Cerulean!30, text opacity=1]  (H10)  at(9,0)   {$H_{18}$} ;
     \node[obs,   fill=Cerulean!30,text opacity=1]  (H11)  at(10,0)   {$H_{19}$} ;
   \node[obs,   fill=Cerulean!30,text opacity=1]  (H12)  at(11,0)   {$H_{20}$} ;
      \node[obs,  fill=Cerulean!30, rectangle,text opacity=1,very thick,draw=red]  (H13)  at(12,0)   {$H_{21}$} ;
   \node[obs,   fill=Cerulean!30,rectangle,text opacity=1]  (H14)  at(13,0)   {$H_{22}$} ;

   \node[obs,  fill=Cerulean!30,rectangle, text opacity=1,very thick,draw=red]  (-1H1)  at(1,2)   {$H_{4}$} ;
     \node[obs, fill=Cerulean!30, rectangle,  text opacity=1,very thick,draw=red]  (-1H2)  at(4.5,2)   {$H_{5}$} ;
   \node[obs,  text opacity=1]  (-1H3)  at(7.5,2)   {$H_{6}$} ;
      \node[obs,  fill=Cerulean!30, text opacity=1,very thick,draw=red]  (-1H4)  at(10,2)   {$H_{7}$} ;
   \node[obs,  fill=Cerulean!30, rectangle, text opacity=1,very thick,draw=red]  (-1H5)  at(12.5,2)   {$H_{8}$} ;

   \node[obs, fill=Cerulean!30, rectangle, text opacity=1,very thick,draw=red]  (-2H1)  at(2.5,4)   {$H_{1}$} ;
     \node[obs,   fill=Cerulean!30,text opacity=1]  (-2H2)  at(7.5,4)   {$H_{2}$} ;
   \node[obs,   fill=Cerulean!30,rectangle,text opacity=1,very thick,draw=red]  (-2H3)  at(11.25,4)   {$H_{3}$} ;

 \node[inner sep=4pt,draw,rectangle]  (-3H1)  at(6.5,6)   {$H_{0}$} ;

   \edge {-1H1} {H1}; %
    \edge {-1H1} {H2}; %
   \edge {-1H1} {H3}; %

    \edge {-1H2} {H4}; %
    \edge {-1H2} {H5}; %
   \edge {-1H2} {H6}; %
   \edge {-1H2} {H7}; %

  \edge {-1H3} {H8}; %
    \edge {-1H3} {H9}; %

 \edge {-1H4} {H10}; %
    \edge {-1H4} {H11}; %
   \edge {-1H4} {H12}; %

    \edge {-1H5} {H13}; %
    \edge {-1H5} {H14}; %

 \edge {-2H1} {-1H1}; %
    \edge {-2H1} {-1H2}; %

 \edge {-2H2} {-1H3}; %

    \edge {-2H3} {-1H4}; %
 \edge {-2H3} {-1H5}; %

 \edge {-3H1} {-2H1}; %
 \edge {-3H1} {-2H2}; %
 \edge {-3H1} {-2H3}; %

     \plate[inner sep=0.05cm, xshift=-0.0cm, yshift=0.02cm, color=ggreen,thick] {plateX} { (H1) (H2)(H3) }{\textcolor{ggreen}{FDP}};

\plate[inner sep=0.05cm, xshift=-0.0cm, yshift=0.02cm,color=ggreen,thick] {plateX} { (H4) (H5)(H6)(H7) }{\textcolor{ggreen}{FDP}};

  \plate[inner sep=0.05cm, xshift=-0.0cm, yshift=0.02cm,color=ggreen,thick] {plateX} { (H13) (H14)}{\textcolor{ggreen}{FDP}};

  \plate[inner sep=0.05cm, xshift=-0.0cm, yshift=0.02cm,color=ggreen,thick] {plateX} { (H10) (H11) (H12)}{\textcolor{ggreen}{FDP}};
 \plate[inner sep=0.3cm, xshift=-0.08cm, yshift=-0.03cm,draw=ggreen,thick,dashed] {plateX} {(H1) (H2) (H3) (H4) (H5) (H6) (H7) (H8) (H9) (H10) (H11) (H12) (H13) (H14)}

\end{tikzpicture}

\vspace*{.5cm}

\begin{tikzpicture}[scale=.85]
  \node[obs, fill=Cerulean!30, rectangle, text opacity=1,very thick,draw=red]  (H1)  at(0,0)   {$0$} ;
   \node[obs,  fill=Cerulean!30, rectangle, text opacity=1,very thick,draw=red]  (H2)  at(1,0)   {$0$} ;
     \node[obs, fill=Cerulean!30,  text opacity=1,very thick,draw=red]  (H3)  at(2,0)   {$1$} ;
   \node[obs,fill=Cerulean!30,  rectangle, text opacity=1,very thick,draw=red]  (H4)  at(3,0)   {$0$} ;
     \node[obs,fill=Cerulean!30,  rectangle,  text opacity=1,very thick,draw=red]  (H5)  at(4,0)   {$0$} ;
   \node[obs, fill=Cerulean!30, rectangle, text opacity=1,very thick,draw=red]  (H6)  at(5,0)   {$0$} ;
   \node[obs, fill=Cerulean!30,  rectangle,text opacity=1,very thick,draw=red]  (H7)  at(6,0)   {$0$} ;
   \node[obs,  text opacity=1]  (H8)  at(7,0)   {$H_{8}$} ;
     \node[obs,  text opacity=1]  (H9)  at(8,0)   {$H_{9}$} ;
   \node[obs, fill=Cerulean!30,  text opacity=1 ]  (H10)  at(9,0)   {$H_{10}$} ;
     \node[obs,  fill=Cerulean!30, text opacity=1]  (H11)  at(10,0)   {$H_{11}$} ;
   \node[obs, fill=Cerulean!30,  text opacity=1]  (H12)  at(11,0)   {$H_{12}$} ;
      \node[obs,  fill=Cerulean!30, rectangle,text opacity=1,very thick,draw=red]  (H13)  at(12,0)   {$0$} ;
   \node[obs,  fill=Cerulean!30, rectangle,text opacity=1]  (H14)  at(13,0)   {$H_{14}$} ;

   \node[obs, fill=Cerulean!30, rectangle, text opacity=1,very thick,draw=red]  (-1H1)  at(1,2)   {$1/3$} ;
     \node[obs, fill=Cerulean!30, rectangle,  text opacity=1,very thick,draw=red]  (-1H2)  at(4.5,2)   {$0$} ;
   \node[obs,  text opacity=1]  (-1H3)  at(7.5,2)   {$H_{17}$} ;
      \node[obs, fill=Cerulean!30,  text opacity=1,very thick,draw=red]  (-1H4)  at(10,2)   {$0$} ;
   \node[obs, fill=Cerulean!30, rectangle, text opacity=1,very thick,draw=red]  (-1H5)  at(12.5,2)   {$0$} ;

   \node[obs, rectangle, fill=Cerulean!30, text opacity=1,very thick,draw=red]  (-2H1)  at(2.5,4)   {$1/6$} ;
     \node[obs,   fill=Cerulean!30,text opacity=1]  (-2H2)  at(7.5,4)   {$H_{21}$} ;
   \node[obs,  rectangle, fill=Cerulean!30,text opacity=1,very thick,draw=red]  (-2H3)  at(11.25,4)   {$0$} ;

 \node[inner sep=.8pt,draw,circle]  (-3H1)  at(6.5,6)   {\small $1/12$} ;

   \edge {-1H1} {H1}; %
    \edge {-1H1} {H2}; %
   \edge {-1H1} {H3}; %

    \edge {-1H2} {H4}; %
    \edge {-1H2} {H5}; %
   \edge {-1H2} {H6}; %
   \edge {-1H2} {H7}; %

  \edge {-1H3} {H8}; %
    \edge {-1H3} {H9}; %

 \edge {-1H4} {H10}; %
    \edge {-1H4} {H11}; %
   \edge {-1H4} {H12}; %

    \edge {-1H5} {H13}; %
    \edge {-1H5} {H14}; %

 \edge {-2H1} {-1H1}; %
    \edge {-2H1} {-1H2}; %

 \edge {-2H2} {-1H3}; %

    \edge {-2H3} {-1H4}; %
 \edge {-2H3} {-1H5}; %

 \edge {-3H1} {-2H1}; %
 \edge {-3H1} {-2H2}; %
 \edge {-3H1} {-2H3}; %



\end{tikzpicture}

\end{center}
\caption{\em Illustration of the proposed error rate $\text{sFDR}^{3}$ for the level 3 hypotheses, relying on the same tree of hypotheses as in Figure \ref{testtree}.
 Blue hypotheses are tested, while gray nodes indicate  not tested hypotheses. A red border distinguishes rejected hypotheses. The top display shows how $\text{sFDR}^{3}$ is the expected value of the average of the FDPs for the families of tested hypotheses in level 3, which, for   this particular realization, are indicated with green boxes. The bottom display exemplifies the  computation of the proposed error measure:  $\overline{\cal C}_j$ for each selected hypothesis is given inside the node and the realized $\text{sFDP}^{3}$ is given in the root node.}\label{testtree2}
\end{figure}
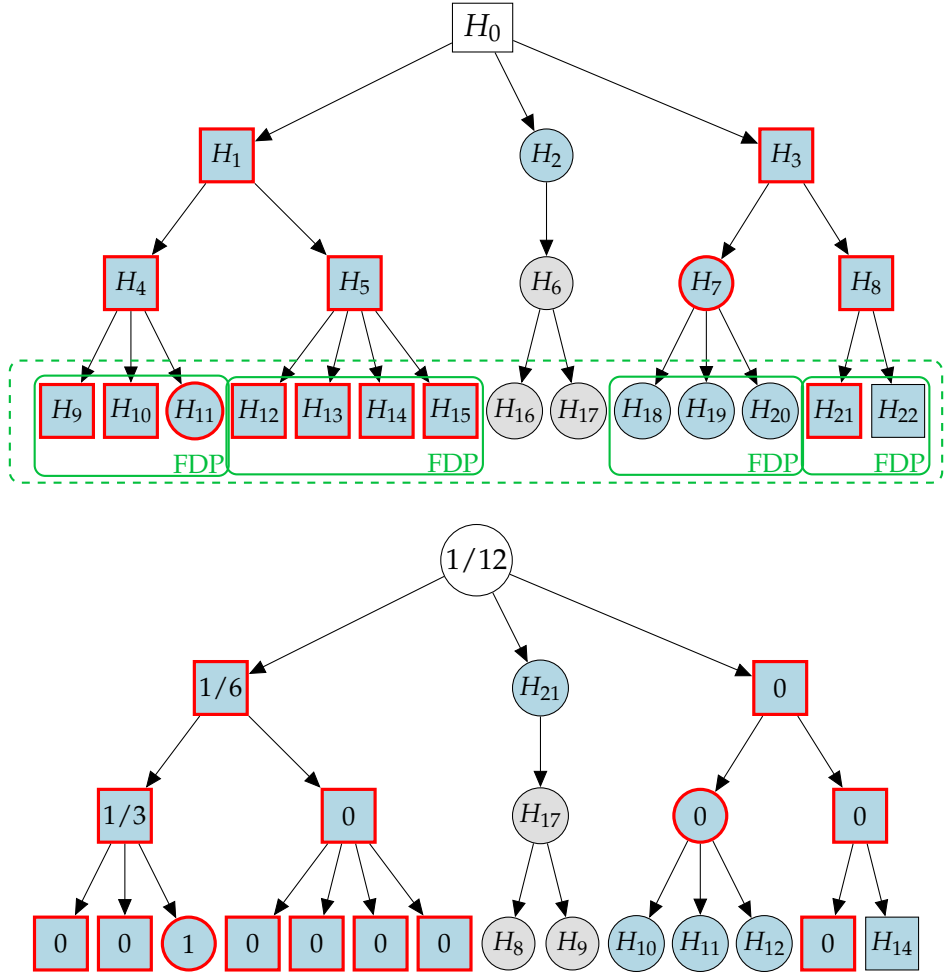

The FDP of each tested family of hypotheses is weighted depending on the amount of rejections along the path that lead to testing the family. Specifically, $\overline{\text{sFDP}}^{1}=\text{FDP}^1,$ and letting ${\cal S}^{\ell}_{i}$ denote the set of indexes of rejected hypotheses in ${\cal F}^{\ell}_i$, for $\ell>1$ we set
\begin{eqnarray}\overline{\text{sFDP}}^{\ell}&\!\!\!\!=\!\!\!\!&
\sum_{i_1\in
\mathcal{S}^1_{0}}\frac{1}{\max\{|\mathcal{S}^1_{0}|,1\}}\sum_{i_2\in
\mathcal{S}^2_{i_1}}\frac{1}{\max\{|\mathcal{S}^2_{i_1}|,1\}}\cdots\!\!\!\!\sum_{i_{\ell-1}\in
\mathcal{S}^{\ell-1}_{i_{\ell-2}}}\frac{1}{\max\{|\mathcal{S}^{\ell-1}_{i_{\ell-2}}|,1\}}\text{FDP}_{\mathcal{F}_{i_{\ell-1}}^{\ell}}\label{selectiveFDP}\\
\text{sFDR}^{\ell}&\!\!\!\!=\!\!\!\!&\mathbb{E}(\overline{\text{sFDP}}^{\ell}),\label{selectiveT}
\end{eqnarray}
where $\text{FDP}_{\mathcal{F}_{i}^{j}}$ denotes the $\text{FDP}$ within the family $\mathcal{F}_{i}^{j},$ and each of the sums in (\ref{selectiveFDP}) refers to a different level of the tree, from 1 to $\ell-1$ (the root node $H_0$ is introduced for convenience only and we consider it as always rejected, identifying $\mathcal{S}_{i_0}^1$ with $\mathcal{S}_0^1$).
We explore the properties and meaning of this error rate, considering equivalent expressions for the sum in (\ref{selectiveFDP}).

Recall that the indexes corresponding to  the ancestors of a hypothesis $H_i$ at level $\ell$ are given by  $\{P^1(i),\ldots, P^{\ell}(i)\}$, with  $H_{P^1(i)}$ a hypothesis at level $\ell -1$, $H_{P^2(i)}$ a hypothesis at level $\ell -2$, all the way till $H_{P^{\ell}(i)}=H_0$, the root node hypothesis. Then, $\overline{\text{sFDP}}^{\ell} $ (\ref{selectiveFDP}) can be expressed as
\begin{equation} \overline{\text{sFDP}}^{\ell} = \sum_{{\cal F}^{\ell}_i \text{ is tested}}w_i^{\ell} \text{FDP}_{{\cal F}^{\ell}_i} \;\;\; \text{ with } w_i^{\ell}  = \frac{1}{\prod_{k=1}^{\ell-1}|\mathcal{S}_{P^k(i)}^{\ell-k}|}, \label{weights}\end{equation}
which makes explicit how the random weights for the FDP of  the family ${\cal F}^{\ell}_i$  are  adaptive to  the number of rejections along the branch
that resulted in the selection of the family. The weight of the Level 1 family $\mathcal{F}^1$ is always 1, and therefore $\text{sFDR}^{1}$ is
equal to $\text{FDR}^{1}$, the Level 1 restricted FDR.
Similarly, $\text{sFDR}^{2}$ is equal to
the error rate introduced in \cite{BB14} of $\mathbb{E}(\sum_{i\in
\mathcal{S}^1_{0}}\text{FDP}_{\mathcal{F}_{i}^{2}}/{\max\{|{\cal S}^1_{0}|,1\}}),$ the expected average false discovery proportion
across all the selected Level 2 families. Alternative weighting schemes, such as equal weights, could also be used;
we choose to focus on adaptive weights, however, as they account for heterogeneity in signal across different branches in the tree: as the number of rejections in the path leading to family ${\cal F}^{\ell}_i$ increases, the cost of a false discovery in ${\cal F}^{\ell}_i$ decreases.

\noindent{\bf Remark 1} \quad
Note that the definition \eqref{selectiveT} can be extended to consider error measures other than FDP.
 Specifically, for
each selected family $\mathcal{F}_i^\ell,$ one may replace
$\text{FDP}_{\mathcal{F}_i^\ell}$ by an error measure $\mathcal{C}_i^{\ell}$
such that $\mathbb{E}(\mathcal{C}_i^{\ell})$ is a known error rate, such as
weighted FDR, FWER or
$\mathbb{P}(\text{FDP}>\gamma)$ \cite{BB14}.
The procedure and theory for this case is given in Section \ref{sec:procedure} and in the Supplementary Material.

\noindent{\bf Remark 2} \quad The usual FDR for one family of hypotheses can be expressed as  the expected
average across all the type I error indicators for the selected
hypotheses. Interestingly,  $\text{sFDR}^{\ell}$ admits a similar interpretation, and therefore  can be viewed as a
generalization of the simple FDR to a tree of hypotheses. To make this apparent,  we
note that we can  define $\overline{\text{sFDR}}^\ell$ \eqref{selectiveFDP} proceeding from level $\ell$ to level 0.
Assign to
each rejected hypothesis at level $\ell$ an error measure which is
the type I error indicator for this hypothesis. Now, starting from
level $\ell-1$ and continuing to coarser resolution levels, each
selected parent hypothesis is assigned an error measure which is the
average of the error measures of all the selected hypotheses in
the family it indexes, until one reaches the root of the tree,
$H_{0}.$ 
Then $\text{sFDR}^\ell$ is the
expectation of the error measure assigned to the root hypothesis.
Formally,
\begin{align*}
&\text{For each selected hypothesis $H_i$ at level $\ell,$}\,\,\,
\overline{\mathcal{C}}_{i}=\textbf{I}_{\{\text{$H_i$ is
null}\}}.\\&\text{For each selected hypothesis $H_j$ at level
$k\in\{0,\ldots,\ell-1\},$}\,\,\,
\overline{\mathcal{C}}_{j}=\frac{\sum_{r\in
\mathcal{S}_{j}^{k+1}}\overline{\mathcal{C}}_{r}}{\max\{|\mathcal{S}_{j}^{k+1}|, 1\}}.
\\&\text{sFDR}^{\ell}=\mathbb{E}(\overline{\mathcal{C}}_{0}).
\end{align*}
We illustrate this recursive computation in  the bottom portion of Figure \ref{testtree2}, using the same tree structure as in Figure \ref{testtree}.
This formulation makes it clear that for $\ell=1,$ $\text{sFDR}^{\ell}$ is the
simple FDR for $\mathcal{F}^1.$ 

\noindent{\bf Remark 3} \quad The error rate $\text{sFDR}^{\ell}$ assigns an important role to {\em families}:
hypotheses in the same family are considered as addressing the same
scientific question, and guarantees on the family-specific FDRs are
achieved. Similar to the FDR, which is adaptive to the amount of signal within
a family, becoming less stringent when there are many effects, the
$\text{sFDR}^{\ell}$ is adaptive to the amount of signal in the
selected families at level $\ell$ and their ancestor families. 

\section{Testing strategy} \label{sec:procedure}
\subsection{The TreeBH procedure}\label{sec:TreeBH}
TreeBH is
 a hierarchical testing strategy, which  targets the control of
$\text{sFDR}^{\ell}$ to bound $q^{(\ell)}$ for all
$\ell\in\{1,\ldots, L\}.$
The name  reflects the fact that it extends the Benjamini-Hochberg procedure  \citep{BH95} to the context of tree-structured hypotheses.
Testing is conducted in a step-wise fashion, starting from level 1. 
At each level of the tree, a multiplicity adjustment procedure is applied separately to each selected family of hypotheses, with a target bound that is more stringent, the fewer rejections were made in previous steps. Precisely,
 {\em
\begin{description}
\item[Step 1.] Apply the BH procedure with the bound $q^{(1)}$ to the family $\mathcal{F}^1$ $p$-values. If $\mathcal{S}^{1}_{0}=\phi,$ i.e.\ no hypothesis is rejected in $\mathcal{F}^1$, then stop. Otherwise, proceed to Step 2.
\item[Step \boldmath$\ell>1.$] For each family $\mathcal{F}_i^{\ell}$ for which all the ancestors are rejected, i.e.\ $i\in \mathcal{S}_{P^1(i)}^{\ell-1},$ $P^k(i)\in \mathcal{S}_{P^{k+1}(i)}^{\ell-k-1}$ for $k=1,\ldots, \ell-2,$ apply the BH procedure to the family $\mathcal{F}_i^{\ell}$ $p$-values with the bound
    \begin{align}q_i= \Bigg\{\prod_{k=1}^{\ell-1} \frac{|\mathcal{S}_{P^k(i)}^{\ell-k}|}{|\mathcal{F}_{P^k(i)}^{\ell-k}|}\Bigg\}q^{(\ell)}=q_{P^1(i)}\frac{|\mathcal{S}_{P^1(i)}^{\ell-1}|}{|\mathcal{F}_{P^1(i)}^{\ell-1}|}\frac{q^{(\ell)}}{q^{(\ell-1)}},\label{qitreebh}\end{align}
    i.e.\ the targeted bound $q^{(\ell)}$ multiplied by the product of proportions of rejected hypotheses within each of the families that contain ancestors of the family $\mathcal{F}_i^{\ell}.$
 \\ If  no rejections are made or $\ell=L$, stop. Otherwise, proceed to step $\ell+1$.
\end{description}}
\noindent
Note that the TreeBH procedure allows the data to be gathered sequentially, i.e.\ the data for the child hypotheses can be gathered only after their parent hypothesis has been rejected. 
 We have implemented the TreeBH procedure (using Simes' or Fisher's methods for obtaining Level $\ell-1$ $p$-values from Level $\ell$ $p$-values) in the R package TreeBH, available online at \url{http://odin.mdacc.tmc.edu/~cbpeterson/software}, and also included in the Supplementary Material.

Interestingly, if the targeted $\text{sFDR}^{\ell}$ levels satisfy $q^{(1)}\leq q^{(2)}\leq \ldots\leq q^{(L)}$ (in particular if they are all equal, i.e.\ $q^{(\ell)}=q$ for $\ell=1,\ldots, L$), and the $p$-values for the hypotheses
at Level $\ell-1$ are obtained using Simes method (\ref{simes}) on the $p$-values of the family of hypotheses they index at Level $\ell,$ working up from the bottom level of the tree,
then the TreeBH procedure satisfies the following consonance property: in each tested family at level $\ell>1$ at least one
rejection is made. This follows from the fact that for each $\ell>1$ and $H_i$ at Level $\ell-1,$ $\mathcal{F}_i^{\ell}$ is tested only if its parent hypothesis $H_i$ is rejected, i.e.\ if the Simes' $p$-value for the intersection of the hypotheses in  $\mathcal{F}_i^{\ell}$ is below the rejection threshold for the family $\mathcal{F}_{P^1(i)}^{\ell-1}.$ It is easy to see that if $q^{(\ell-1)}\leq q^{(\ell)},$ then the rejection threshold for the family $\mathcal{F}_{P^1(i)}^{\ell-1}$ is smaller or equal to the BH bound $q_i$ for family $\mathcal{F}_i^{\ell}$ (the equality holds if $q^{(\ell-1)}=q^{(\ell)}$). Since Simes' $p$-value for the intersection of the hypotheses in  $\mathcal{F}_i^{\ell}$ is the minimal BH-adjusted $p$-value in this family, we obtain that if $\mathcal{F}_i^{\ell}$ is tested, its minimal BH-adjusted $p$-value is bounded above by the BH bound $q_i$ for this family, therefore at least one rejection is made in $\mathcal{F}_i^{\ell}.$

%
%

Note that if one is interested in discoveries only up to some
specific level of resolution, say Level $k<L$, then one should apply
the TreeBH procedure to the subtree consisting of the first $k$
levels of the original tree.
Moreover, if one is interested only in discoveries
within families at some specific pre-specified levels, one may apply almost any selection
rules for selecting hypotheses within the preceding levels,
leading to
selection of families within the levels of interest (similarly to the procedure for a two-level tree in \cite{BB14}). Only the
families at those levels should be tested at the adjusted level
corresponding to the number of selected hypotheses in the preceding
levels.

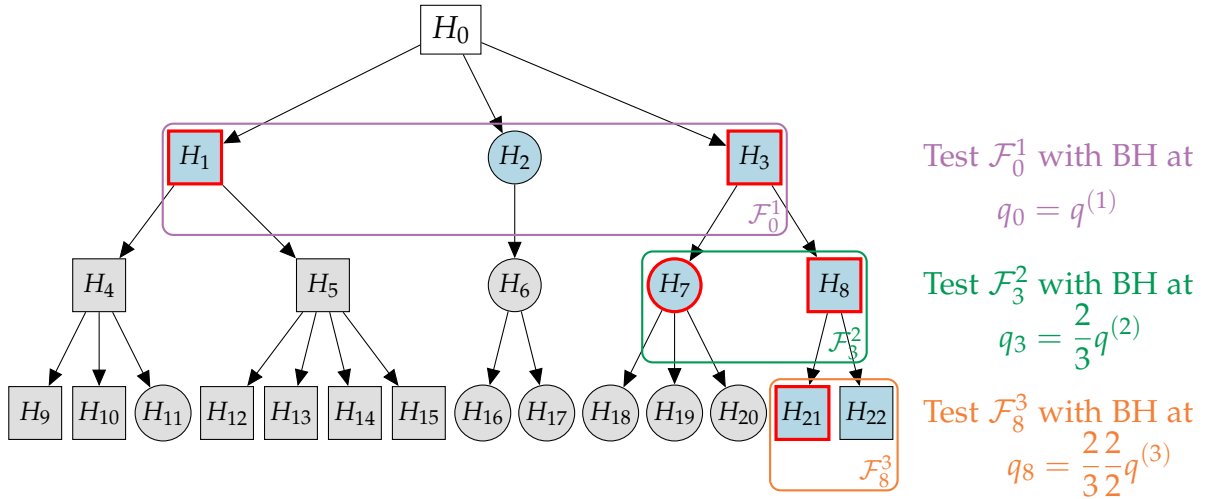
\begin{figure}[h!]
  \begin{center}
  \begin{tikzpicture}[scale=.85]
  \node[obs,  rectangle, text opacity=1]  (H1)  at(0,0)   {$H_{9}$} ;
   \node[obs,  rectangle, text opacity=1]  (H2)  at(1,0)   {$H_{10}$} ;
     \node[obs,  text opacity=1]  (H3)  at(2,0)   {$H_{11}$} ;
   \node[obs, rectangle, text opacity=1]  (H4)  at(3,0)   {$H_{12}$} ;
     \node[obs,  rectangle,  text opacity=1]  (H5)  at(4,0)   {$H_{13}$} ;
   \node[obs,  rectangle, text opacity=1]  (H6)  at(5,0)   {$H_{14}$} ;
   \node[obs,   rectangle,text opacity=1]  (H7)  at(6,0)   {$H_{15}$} ;
   \node[obs,  text opacity=1]  (H8)  at(7,0)   {$H_{16}$} ;
     \node[obs,  text opacity=1]  (H9)  at(8,0)   {$H_{17}$} ;
   \node[obs,    text opacity=1]  (H10)  at(9,0)   {$H_{18}$} ;
     \node[obs,   text opacity=1]  (H11)  at(10,0)   {$H_{19}$} ;
   \node[obs,   text opacity=1]  (H12)  at(11,0)   {$H_{20}$} ;
      \node[obs,  fill=Cerulean!30, rectangle,text opacity=1,very thick,draw=red]  (H13)  at(12,0)   {$H_{21}$} ;
   \node[obs,   fill=Cerulean!30,rectangle,text opacity=1]  (H14)  at(13,0)   {$H_{22}$} ;

   \node[obs, rectangle, text opacity=1]  (-1H1)  at(1,2)   {$H_{4}$} ;
     \node[obs,  rectangle,  text opacity=1]  (-1H2)  at(4.5,2)   {$H_{5}$} ;
   \node[obs,  text opacity=1]  (-1H3)  at(7.5,2)   {$H_{6}$} ;
      \node[obs,  fill=Cerulean!30, text opacity=1,very thick,draw=red]  (-1H4)  at(10,2)   {$H_{7}$} ;
   \node[obs,  fill=Cerulean!30, rectangle, text opacity=1,very thick,draw=red]  (-1H5)  at(12.5,2)   {$H_{8}$} ;

   \node[obs, fill=Cerulean!30, rectangle, text opacity=1,very thick,draw=red]  (-2H1)  at(2.5,4)   {$H_{1}$} ;
     \node[obs,   fill=Cerulean!30,text opacity=1]  (-2H2)  at(7.5,4)   {$H_{2}$} ;
   \node[obs,   fill=Cerulean!30,rectangle,text opacity=1,very thick,draw=red]  (-2H3)  at(11.25,4)   {$H_{3}$} ;

 \node[inner sep=4pt,draw,rectangle]  (-3H1)  at(6.5,6)   {$H_{0}$} ;

   \edge {-1H1} {H1}; %
    \edge {-1H1} {H2}; %
   \edge {-1H1} {H3}; %

    \edge {-1H2} {H4}; %
    \edge {-1H2} {H5}; %
   \edge {-1H2} {H6}; %
   \edge {-1H2} {H7}; %

  \edge {-1H3} {H8}; %
    \edge {-1H3} {H9}; %

 \edge {-1H4} {H10}; %
    \edge {-1H4} {H11}; %
   \edge {-1H4} {H12}; %

    \edge {-1H5} {H13}; %
    \edge {-1H5} {H14}; %

 \edge {-2H1} {-1H1}; %
    \edge {-2H1} {-1H2}; %

 \edge {-2H2} {-1H3}; %

    \edge {-2H3} {-1H4}; %
 \edge {-2H3} {-1H5}; %

 \edge {-3H1} {-2H1}; %
 \edge {-3H1} {-2H2}; %
 \edge {-3H1} {-2H3}; %

     \plate[inner sep=0.05cm, xshift=-0.0cm, yshift=0.02cm, color=Orchid,thick] {plateX} { (-2H1) (-2H2)(-2H3) }{\textcolor{Orchid}{${\cal F}_0^1$}};

\plate[inner sep=0.05cm, xshift=-0.0cm, yshift=0.02cm,color=ForestGreen,thick] {plateX} { (-1H4) (-1H5) }{\textcolor{ForestGreen}{${\cal F}_3^2$}};
\plate[inner sep=0.05cm, xshift=-0.0cm, yshift=0.02cm,color=Orange,thick] {plateX} { (H13) (H14) }{\textcolor{Orange}{${\cal F}_8^3$}};
  \node[text=Orchid] at(16,4) {Test ${\cal F}_0^1$ with BH at };
 \node[text=Orchid] at(16,3.2) {$q_0=q^{(1)}$};

 \node[text=ForestGreen] at(16,2) {Test ${\cal F}_3^2$ with BH at };
 \node[text=ForestGreen] at(16.2,1.2) {$q_3=\displaystyle\frac{2}{3}q^{(2)}$};

 \node[text=Orange] at(16,0) {Test ${\cal F}_8^3$ with BH at};
 \node[text=Orange] at(16.5,-.8) {$q_8=\displaystyle\frac{2}{3}\frac{2}{2}q^{(3)}$};

\end{tikzpicture}
\end{center}
\caption{Illustration of the testing strategy for a sequence of families in the tree from Figure \ref{testtree2}. We follow the same conventions, but focus only on the testing of three  families hierarchically connected.} \label{testing}
\end{figure}
Finally, as we discussed above, the error rate
$\text{sFDR}^{\ell}$ can be extended to address error measures other than FDP within the selected families. 
 If $\mathcal{C}_i^\ell$
is the error measure assigned to $\mathcal{F}_i^{\ell},$ then one
should apply an $\mathbb{E}(\mathcal{C}_i^{\ell})$-controlling  procedure
(instead of the BH procedure) with bound $q_i$ on the $p$-values of
$\mathcal{F}_i^{\ell}$ in each step of the algorithm. Thus for control of $\text{sFDR}^{\ell},$ one may improve the power of TreeBH by applying different variants of the BH procedure
within the selected families, possibly incorporating prior knowledge weights on the hypotheses (\cite{GRW06}), and/or the estimator of the proportion of nulls within a family (\cite{SetS04}, \cite{RetJ17}).
Below we give
the theoretical results for the TreeBH procedure and some of the generalizations discussed above. The remaining generalizations
and the corresponding theoretical results are given in
the Supplementary Material.

\subsection{Error rate control}\label{sec:errorrate}
The fact that the $\text{sFDR}^{\ell}$ incorporates information on the hierarchical order of testing leads to coherence between the findings at different levels: a discovery at a finer scale can happen only if the corresponding coarser hypothesis has been rejected. If we impose an additional consonance condition, requiring that whenever a parent hypothesis is rejected at least one of the hypotheses in the family it indexes is rejected, we can obtain the following result that derives control of $\text{sFDR}^{\ell-1}$ from the control of $\text{sFDR}^{\ell}$ at the higher level for a general hierarchical testing procedure.
\begin{prop}\label{prop-gen} Assume that for a given
$\ell\in\{2,\ldots,L\}$ and a given hierarchical testing procedure, it is guaranteed
that if a parent hypothesis at level $\ell-1$ is rejected, at least
one of the hypotheses within the family it indexes is rejected,
i.e.\ for each selected $H_i$ at level $\ell-1,$
$\mathcal{S}_i^{\ell}$ is not empty. Then $\text{sFDR}^{\ell-1}\leq
\text{sFDR}^{\ell},$ i.e.\ control of $\text{sFDR}^\ell$ guarantees
control of $\text{sFDR}^{\ell-1}$.
\end{prop}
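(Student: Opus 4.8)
The plan is to establish the inequality realization by realization: I will show that on every outcome of the testing process the realized quantity $\overline{\text{sFDP}}^{\ell-1}$ is no larger than $\overline{\text{sFDP}}^{\ell}$, and then take expectations to conclude $\text{sFDR}^{\ell-1} = \mathbb{E}(\overline{\text{sFDP}}^{\ell-1}) \leq \mathbb{E}(\overline{\text{sFDP}}^{\ell}) = \text{sFDR}^{\ell}$. I would work with the recursive ``average up the tree'' formulation of Remark 2 rather than the explicit weighted sum of \eqref{weights}, since the monotonicity I need is most transparent there.

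First I would observe that the two recursions defining $\overline{\text{sFDP}}^{\ell-1}$ and $\overline{\text{sFDP}}^{\ell}$ are carried out over exactly the same selected subtree and perform exactly the same averaging at every node of levels $0,1,\ldots,\ell-2$; they differ only in the value attached to each selected node $H_i$ at level $\ell-1$. In the computation of $\overline{\text{sFDP}}^{\ell-1}$ this node is a leaf and is assigned the type I error indicator $\textbf{I}_{\{H_i \text{ is null}\}}$; in the computation of $\overline{\text{sFDP}}^{\ell}$ the same node is assigned the average $\overline{\mathcal{C}}_i = \sum_{r \in \mathcal{S}_i^{\ell}} \textbf{I}_{\{H_r \text{ is null}\}} / \max\{|\mathcal{S}_i^{\ell}|, 1\} = \text{FDP}_{\mathcal{F}_i^{\ell}}$ of its rejected children. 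Because each recursive step from level $\ell-1$ up to the root is a weighted average with nonnegative weights, and averaging is monotone in its arguments, it suffices to prove the per-node inequality $\textbf{I}_{\{H_i \text{ is null}\}} \leq \text{FDP}_{\mathcal{F}_i^{\ell}}$ at every selected $H_i$ of level $\ell-1$; the inequality then propagates unchanged up to $\overline{\mathcal{C}}_0$.

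The crux is this per-node comparison, and here both hypotheses of the proposition enter. If $H_i$ is a false null, its indicator is $0$ while $\text{FDP}_{\mathcal{F}_i^{\ell}} \geq 0$, so the inequality is immediate. If $H_i$ is a true null that has nonetheless been rejected, its indicator equals $1$; by the logical structure of the tree every descendant of a true null is itself a true null, so every rejected child $r \in \mathcal{S}_i^{\ell}$ is a false rejection and $\sum_{r \in \mathcal{S}_i^{\ell}} \textbf{I}_{\{H_r \text{ is null}\}} = |\mathcal{S}_i^{\ell}|$. The consonance assumption guarantees $\mathcal{S}_i^{\ell} \neq \emptyset$, so $\max\{|\mathcal{S}_i^{\ell}|, 1\} = |\mathcal{S}_i^{\ell}|$ and hence $\text{FDP}_{\mathcal{F}_i^{\ell}} = 1$, matching the indicator. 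In both cases $\textbf{I}_{\{H_i \text{ is null}\}} \leq \text{FDP}_{\mathcal{F}_i^{\ell}}$, which is exactly what is needed.

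I expect the main, and really the only, obstacle to be the true-null case: it is precisely there that consonance is indispensable. Without the guarantee that a rejected parent forces at least one rejected child, a selected true null could have $\mathcal{S}_i^{\ell} = \emptyset$, making $\text{FDP}_{\mathcal{F}_i^{\ell}} = 0 < 1$ and breaking the node inequality, and with it the whole chain. Once the per-node inequality is secured, the monotone propagation of averages up the tree and the final passage to expectations are routine, yielding $\text{sFDR}^{\ell-1} \leq \text{sFDR}^{\ell}$.
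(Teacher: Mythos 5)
Your proof is correct and follows essentially the same route as the paper's own argument: both use the recursive averaging formulation of Remark~2, reduce the claim to the per-node comparison $\textbf{I}_{\{H_i \text{ is null}\}} \leq \text{FDP}_{\mathcal{F}_i^{\ell}}$ at each selected level-$(\ell-1)$ hypothesis (with the same true-null/false-null case split and the same use of consonance in the true-null case), and then propagate the inequality up the tree by monotonicity of averaging before taking expectations.
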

The proof of this proposition follows from the recursive definition
of $\text{sFDR}^{\ell}$ above, and is deferred to the Supplementary Material. This
proposition implies that if a certain hierarchical testing procedure
guarantees for all the levels $\ell=1,\ldots, L-1,$ that each
rejected parent hypothesis has at least one rejected child
hypothesis, then control of $\text{sFDR}^{\ell}$ implies control of
$\text{sFDR}^k$ for all $k=1,\ldots, \ell-1.$ In particular,
controlling $\text{sFDR}^L$ guarantees control of
$\text{sFDR}^{\ell}$ for all the levels of the tree, so that one has
confidence regarding the errors within the selected families in each
level of the tree. Recall that the TreeBH procedure satisfies the above consonance property when $q^{(1)}\leq q^{(2)}\ldots\leq q^{(L)}$ and the $p$-value for each parent hypothesis is obtained using Simes' combination rule applied on the $p$-values of its children.


To state the conditions under which we prove that TreeBH controls the target error rate, recall the following definitions.
A set $D$ is called increasing if $x\in D$ and $y\geq x$ implies that $y\in D$ as well.
\begin{definition}[\cite{BY01}]
The vector $X$ is PRDS on $I_0$ if for any increasing set $D$ and for each $i\in I_0,$ $\mathbb{P}\left(X\in D\mid X_i=x \right)$ is nondecreasing in $x.$ The vector $X$ is PRDS if it is PRDS on any subset.
\end{definition}

\begin{thm}\label{thm-prds}
 Assume that we have valid $p$-values for the hypotheses in the tree. The TreeBH procedure with input parameters $(q^{(1)}, \ldots, q^{(L)})$ guarantees for each $\ell\in \{1,\ldots, L\}$ that
$\text{sFDR}^{\ell}\leq q^{(\ell)}$ if the following two conditions are satisfied:
\begin{enumerate}
\item The $p$-values for the hypotheses in $\mathcal{F}^1$ satisfy the PRDS property on the subset of $p$-values corresponding to true null hypotheses;
\item  For each $\ell\in \{2, \ldots, L\}$ and $H_i$ at Level $\ell-1,$ the $p$-values corresponding to the hypotheses in the set $\mathcal{F}_{i}^{\ell}\cup\left[\cup_{k=1}^{\ell-1}\left(\mathcal{F}_{P^k(i)}^{\ell-k}\setminus\{H_{P^{k-1}(i)}\}\right)\right],$ where $P^0(i)=i$, satisfy the PRDS property on the subset of $p$-values corresponding to true null hypotheses in $\mathcal{F}_{i}^{\ell}.$
\end{enumerate}
\end{thm}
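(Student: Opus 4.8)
The plan is to bound $\text{sFDR}^{\ell}$ for each fixed $\ell$ by expanding it as a sum over all potential families and, within each family, over its true null hypotheses, then applying the Benjamini--Yekutieli analysis of BH under PRDS \cite{BY01} family by family. The starting point is a deterministic algebraic identity: from \eqref{weights} and \eqref{qitreebh}, the random weight and the random BH bound of a family satisfy
$$w_i^{\ell}\, q_i \;=\; \frac{q^{(\ell)}}{\prod_{k=1}^{\ell-1}\bigl|\mathcal{F}_{P^k(i)}^{\ell-k}\bigr|} \;=:\; \frac{q^{(\ell)}}{D_i},$$
where $D_i$, the product of the sizes of the ancestral families of $H_i$, is nonrandom. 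Thus although both $w_i^{\ell}$ and $q_i$ fluctuate with the rejections made along the branch, their product is fixed; this is what lets me trade the adaptive weight against the adaptive threshold. For $\ell=1$ the claim reduces to $\text{sFDR}^1=\text{FDR}^1\le q^{(1)}$, which is exactly BH on $\mathcal{F}^1$ under PRDS (Condition 1), so I would treat it as the base case of the same argument with $D_i=1$.

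Writing $\text{FDP}_{\mathcal{F}_i^{\ell}}=\sum_{j}\mathbf{1}\{H_j\text{ null and rejected}\}/\max\{|\mathcal{S}_i^{\ell}|,1\}$, I would decompose
$$\text{sFDR}^{\ell}=\sum_{i}\mathbb{E}\bigl[w_i^{\ell}\,\text{FDP}_{\mathcal{F}_i^{\ell}}\,\mathbf{1}\{\mathcal{F}_i^{\ell}\text{ tested}\}\bigr],$$
the sum running over all level-$(\ell-1)$ nodes $i$. The central difficulty is that whether $\mathcal{F}_i^{\ell}$ is tested, together with the values of $w_i^{\ell}$ and $q_i$, depends on the rejections in the ancestral families, and those rejections can in turn depend on the $p$-values \emph{inside} $\mathcal{F}_i^{\ell}$ itself---for instance when a parent $p$-value is the Simes combination of its children---so a naive conditioning on the selection fails. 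To break this dependence I would use the standard fact that, for a step-up (BH) procedure, on the event that a given hypothesis is rejected the number of rejections is unchanged if that hypothesis's $p$-value is lowered to $0$, and the resulting count is a function of the remaining $p$-values only. Applying this simultaneously to every ancestor $H_{P^{k-1}(i)}$ on the path, I define surrogate quantities $\tilde q_i$ and $\tilde w_i^{\ell}$ (the bound and weight computed with all path-ancestor $p$-values set to $0$); these are measurable with respect to the $p$-values of the \emph{siblings} along the path, i.e.\ exactly the set $\mathcal{U}_i:=\bigcup_{k=1}^{\ell-1}\bigl(\mathcal{F}_{P^k(i)}^{\ell-k}\setminus\{H_{P^{k-1}(i)}\}\bigr)$ in Condition 2, and still satisfy $\tilde w_i^{\ell}\tilde q_i=q^{(\ell)}/D_i$. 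On $\{\mathcal{F}_i^{\ell}\text{ tested}\}$ all ancestors are rejected, so the surrogates coincide with the true $w_i^{\ell},q_i$, while off that event the summand vanishes; hence, pointwise,
$$w_i^{\ell}\,\text{FDP}_{\mathcal{F}_i^{\ell}}\,\mathbf{1}\{\mathcal{F}_i^{\ell}\text{ tested}\}\;\le\;\tilde w_i^{\ell}\,\text{FDP}^{(\tilde q_i)}_{\mathcal{F}_i^{\ell}},$$
where $\text{FDP}^{(\tilde q_i)}$ is the FDP of BH run on $\mathcal{F}_i^{\ell}$ at the external level $\tilde q_i$, irrespective of selection. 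Carrying this set-to-$0$ reduction through the recursion defining the $q_{P^k(i)}$, so that the surrogates are genuinely independent of the interior $p$-values, is the step I expect to be the main obstacle.

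Given this reduction, I would condition on $\mathcal{U}_i$, under which $\tilde q_i$ and $\tilde w_i^{\ell}$ are constants with $\tilde w_i^{\ell}\tilde q_i=q^{(\ell)}/D_i$. Condition 2 states precisely that the joint vector of $p$-values in $\mathcal{F}_i^{\ell}\cup\mathcal{U}_i$ is PRDS on the true nulls of $\mathcal{F}_i^{\ell}$, which is exactly what is needed to run the BH-under-PRDS argument of \cite{BY01}---in the conditional, externally set-threshold form of \cite{BB14}---inside $\mathcal{F}_i^{\ell}$, yielding
$$\mathbb{E}\bigl[\text{FDP}^{(\tilde q_i)}_{\mathcal{F}_i^{\ell}}\,\big|\,\mathcal{U}_i\bigr]\;\le\;\tilde q_i\,\frac{m_{0,i}}{m_i},$$
with $m_i=|\mathcal{F}_i^{\ell}|$ and $m_{0,i}$ the number of its true nulls. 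Multiplying by the constant $\tilde w_i^{\ell}$ and taking expectations, the random factors cancel through the identity, giving $\mathbb{E}[\tilde w_i^{\ell}\text{FDP}^{(\tilde q_i)}_{\mathcal{F}_i^{\ell}}]\le (q^{(\ell)}/D_i)(m_{0,i}/m_i)\le q^{(\ell)}/D_i$. Finally I would sum over $i$: since $1/D_i$ is the product of the reciprocal family sizes along the root-to-$H_i$ path, the numbers $\{1/D_i\}$ over all level-$(\ell-1)$ nodes form the level-$(\ell-1)$ marginal of the walk that descends the tree choosing uniformly among children at each step, so $\sum_i 1/D_i=1$ by a short induction on the level. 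Therefore
$$\text{sFDR}^{\ell}\;\le\;\sum_{i}\frac{q^{(\ell)}}{D_i}\,\frac{m_{0,i}}{m_i}\;\le\;q^{(\ell)}\sum_{i}\frac{1}{D_i}\;=\;q^{(\ell)},$$
the desired bound for every $\ell$; read at $\ell=1$ the same chain gives the base case, and the lower levels could alternatively be obtained from Proposition \ref{prop-gen}.
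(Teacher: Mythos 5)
Your reduction step is sound and in fact mirrors the paper's own construction: your surrogates $\tilde q_i,\tilde w_i$, obtained by setting the path-ancestor $p$-values to $0$ and exploiting the step-up ("simple selection rule") property of BH recursively, are exactly the paper's events $C_{r_k}^{(i_k)}[q_{i_{k-1}}]$ and $B_s^{(i_1,\ldots,i_\ell)}$, which are likewise defined on the sibling $p$-values $\mathcal{U}_i=\cup_{k=1}^{\ell-1}\bigl(\mathcal{F}_{P^k(i)}^{\ell-k}\setminus\{H_{P^{k-1}(i)}\}\bigr)$ only, and the identity $w_i^{\ell}q_i=q^{(\ell)}/D_i$ together with $\sum_i 1/D_i=1$ is exactly how the paper's final summation closes. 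The genuine gap is in your probabilistic step: you condition on $\mathcal{U}_i$ and invoke a BH-under-PRDS bound $\mathbb{E}\bigl[\mathrm{FDP}^{(\tilde q_i)}_{\mathcal{F}_i^{\ell}}\mid \mathcal{U}_i\bigr]\le \tilde q_i\,m_{0,i}/m_i$. Condition 2 asserts PRDS of the \emph{unconditional} joint vector on the null coordinates of $\mathcal{F}_i^{\ell}$; it does not imply that, \emph{conditionally on the sibling $p$-values}, the null $p$-values in $\mathcal{F}_i^{\ell}$ remain superuniform, nor that the within-family vector remains PRDS. Under positive dependence these conditional properties genuinely fail: e.g.\ for exchangeable positively correlated test statistics, conditioning on small values in $\mathcal{U}_i$ makes the family's null $p$-values stochastically smaller than uniform, so the conditional FDR of BH at level $\tilde q_i$ can exceed $\tilde q_i\,m_{0,i}/m_i$ on a positive-probability set of $\mathcal{U}_i$. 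The conditioning-on-$\mathcal{U}_i$ device is precisely what the paper reserves for Theorem 2, whose assumption A2 requires \emph{independence} between $\mathcal{F}_i^{\ell}$ and $\mathcal{U}_i$; your proposal in effect proves Theorem 2, not Theorem 1.

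What is needed instead (and what the paper does) is to bound the self-normalized quantity $\mathbb{E}\bigl[\mathrm{FDP}^{(\tilde q_i)}_{\mathcal{F}_i^{\ell}}/\tilde q_i\bigr]$ without ever conditioning on $\mathcal{U}_i$: decompose over the disjoint events $B_s$ (the possible values $s$ of the product of rejection counts along the path), write each term as $\tfrac{1}{s}\mathbb{P}\bigl(P_{i_\ell}\le s q^{(\ell)}/D_i,\,B_s\bigr)$ for each null $H_{i_\ell}$ in the family, and condition on the event $\{P_{i_\ell}\le s q^{(\ell)}/D_i\}$ --- conditioning on a \emph{null coordinate} is exactly the operation PRDS licenses. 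One then shows that the unions $A_t=\cup_{s\le t}B_s$ are increasing sets of the joint vector (this is the paper's Lemma on increasing sets, and is where the sets $\mathcal{U}_i$ and $\mathcal{F}_i^{\ell}\setminus\{i_\ell\}$ enter jointly), so that $x\mapsto\mathbb{P}(A_t\mid P_{i_\ell}\le x)$ is nondecreasing, and the Benjamini--Yekutieli telescoping argument yields $\sum_s \mathbb{P}\bigl(B_s\mid P_{i_\ell}\le s q^{(\ell)}/D_i\bigr)\le 1$. Replacing your conditional-on-$\mathcal{U}_i$ step by this increasing-set/telescoping argument repairs the proof; as written, the proposal's key inequality is unproved and false in general under the theorem's hypotheses.
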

The proof of this theorem is given in the Supplementary Material. Essentially, the dependency structure above requires positive dependency between the $p$-values in each family in the tree, and positive dependency between each $p$-value and the $p$-values of hypotheses at higher levels of the tree, which are not its ancestors, but belong to the same families as its ancestors. If we replace the latter positive dependency assumption by the stronger independence assumption, requiring that the $p$-value in each family is independent of any $p$-value belonging to the same family as one of its ancestors, excluding the ancestors themselves, we obtain a more general result below.

  Consider a selective error rate which addresses an error measure $\mathcal{C}_i^{\ell}$ within each selected family $\mathcal{F}_i^{\ell},$ such that $\mathbb{E}(\mathcal{C}_i^{\ell})$ is a known error rate, such as FWER or the weighted FDR (see Remark 1). When $\mathcal{C}_i^{\ell}$ is the FDP, the selective error rate is $\text{sFDR}^{\ell}$, as defined in \ref{selectiveT}, otherwise we consider its extension, where $\text{FDP}_{\mathcal{F}_i^{\ell}}$ in (\ref{selectiveFDP}) is replaced by $\mathcal{C}_i^{\ell}.$ As noted in Section \ref{sec:TreeBH}, the TreeBH procedure can be easily generalized to target this selective error rate: in each step of the algorithm, each family $\mathcal{F}_i^{\ell}$ should be tested using an $\mathbb{E}(\mathcal{C}_i^{\ell})$-controlling procedure at level $q_i$. In order to prove that the generalized procedure above controls the targeted selective error rate, we make the following assumptions:
 \begin{enumerate}
\item [A1] For each family $\mathcal{F}_i^{\ell},$  the error rate $\mathbb{E}(\mathcal{C}_i^{\ell})$ is such that $\mathcal{C}_i^{\ell}$ takes values in a countable set, and the procedure used for testing this family can control  $\mathbb{E}(\mathcal{C}_i^{\ell})$  at any desired level under the dependence among the $p$-values in $\mathcal{F}_i^{\ell}.$
\item [A2] For each $\ell\in \{2, \ldots, L\}$ and a family $\mathcal{F}_i^{\ell},$  the $p$-values for the hypotheses in $\mathcal{F}_i^{\ell}$ are independent of the $p$-values for the hypotheses in $\cup_{k=1}^{\ell-1}\left(\mathcal{F}_{P^k(i)}^{\ell-k}\setminus \{H_{P^{k-1}(i)}\}\right),$ where $P^0(i)=i.$
\item [A3] For each family $\mathcal{F}_i^{\ell},$ the procedure used for testing this family defines a simple selection rule, i.e.
    for each rejected hypothesis $H_j\in \mathcal{F}_i^\ell,$ when the $p$-values belonging to $\mathcal{F}_i^\ell \setminus \{H_j\}$ are fixed,
and the $p$-value for $H_j$ can change as long as $H_j$ is rejected,
the number of rejected hypotheses in $\mathcal{F}_i^\ell$ remains unchanged.
\end{enumerate}
 The requirement in A1 and in A3 on the error rates and the testing procedures are very lenient requirements: for all practical purposes $\mathcal{C}_i^{\ell}$ is a count or a ratio of counts, thus it has a countable support, and any non-adaptive step-up or step-down multiple testing procedure defines a simple selection rule, see \cite{BB14}. It is easy to see that the adaptive BH procedure incorporating Storey's estimator for the proportion of true nulls \cite{SetS04} is also simple. A general definition of a simple selection rule, following \cite{BB14}, is given in the Supplementary Material.
 \begin{thm}\label{thm-general}
 If assumptions A1--A3 hold, the generalized procedure above with input parameters $(q^{(1)}, \ldots, q^{(L)})$ guarantees for each $\ell\in \{1,\ldots, L\}$ that
$\mathbb{E}(s\mathcal{C}^{\ell})\leq q^{(\ell)},$
where $s\mathcal{C}^{\ell}$ is the selective error measure obtained by replacing  $\text{FDP}_{\mathcal{F}_i^{\ell}}$  by $\mathcal{C}_i^{\ell}$ in (\ref{selectiveFDP}).
 \end{thm}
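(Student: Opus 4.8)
The plan is to bound the contribution of each level-$\ell$ family separately and then sum, exploiting the product forms of the weights and bounds in \eqref{weights} and \eqref{qitreebh}. Writing the selective error measure in the weighted form of \eqref{weights} (with $\text{FDP}_{\mathcal{F}_i^\ell}$ replaced by $\mathcal{C}_i^\ell$), we have $\mathbb{E}(s\mathcal{C}^\ell)=\sum_i \mathbb{E}[\mathbf{1}\{\mathcal{F}_i^\ell\text{ tested}\}\,w_i^\ell\,\mathcal{C}_i^\ell]$, the sum running over all families at level $\ell$. For a fixed family $\mathcal{F}_i^\ell$, let $A_i$ collect the $p$-values of the hypotheses in $\cup_{k=1}^{\ell-1}(\mathcal{F}_{P^k(i)}^{\ell-k}\setminus\{H_{P^{k-1}(i)}\})$, i.e.\ the ``siblings along the path'' appearing in A2, and let $E_i=\{\mathcal{F}_i^\ell\text{ is tested}\}$, the event that all ancestors of $H_i$ are rejected. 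I would condition on $A_i$ and aim to show $\mathbb{E}[\mathbf{1}_{E_i}w_i^\ell\mathcal{C}_i^\ell\mid A_i]\le q^{(\ell)}/\prod_{k=1}^{\ell-1}|\mathcal{F}_{P^k(i)}^{\ell-k}|$, a deterministic quantity; this generalizes the two-level argument of \cite{BB14}.

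The crucial preliminary step, and the one I expect to be the main obstacle, is to argue that on the event $E_i$ both the random bound $q_i$ in \eqref{qitreebh} and the weight $w_i^\ell$ are measurable functions of $A_i$ alone. This is delicate because each factor $|\mathcal{S}_{P^k(i)}^{\ell-k}|$ depends on the $p$-values of the family $\mathcal{F}_{P^k(i)}^{\ell-k}$, which contains the ancestor $H_{P^{k-1}(i)}$, and this ancestor's $p$-value may be correlated with the $p$-values in $\mathcal{F}_i^\ell$ (for instance when it is obtained from them by Simes' or Fisher's rule). The resolution is the simple selection rule A3: on $E_i$ the ancestor $H_{P^{k-1}(i)}$ is rejected, and A3 guarantees that $|\mathcal{S}_{P^k(i)}^{\ell-k}|$ is unchanged as its $p$-value varies within the rejection region, so this count is determined by the sibling $p$-values in that family together with the bound used to test it. I would run this argument top-down from the root: $|\mathcal{S}_0^1|$ is computed at the fixed bound $q^{(1)}$ and so is a function of $A_i$ on $E_i$; the bound for the next family down is then a function of $A_i$, which makes its rejection count a function of $A_i$, and so on, so that by induction every factor entering $q_i$ and $w_i^\ell$ is $A_i$-measurable on $E_i$.

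Once this is established, the independence assumption A2 finishes the conditional bound. Define $\tilde{\mathcal{C}}_i^\ell$ to be the error measure produced by testing $\mathcal{F}_i^\ell$ at the (now $A_i$-measurable) level $q_i$; it agrees with $\mathcal{C}_i^\ell$ on $E_i$ and is a function of the $\mathcal{F}_i^\ell$ $p$-values, which are independent of $A_i$ by A2. Using $\mathbf{1}_{E_i}\mathcal{C}_i^\ell\le\tilde{\mathcal{C}}_i^\ell$ and partitioning on the countably many values that $q_i$ can take (legitimate by the countable-support clause of A1, since $q_i$ is a ratio of integer counts times $q^{(\ell)}$), the control guaranteed by A1 yields $\mathbb{E}[\tilde{\mathcal{C}}_i^\ell\mid A_i]\le q_i$. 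Hence $\mathbb{E}[\mathbf{1}_{E_i}w_i^\ell\mathcal{C}_i^\ell\mid A_i]\le w_i^\ell q_i$, which by the product forms equals $q^{(\ell)}/\prod_{k=1}^{\ell-1}|\mathcal{F}_{P^k(i)}^{\ell-k}|$ identically. Taking expectations over $A_i$ and summing over all level-$\ell$ families then reduces the claim to the deterministic identity $\sum_i\prod_{k=1}^{\ell-1}|\mathcal{F}_{P^k(i)}^{\ell-k}|^{-1}=1$, which I would verify by grouping children level by level (equivalently, it is the normalization of the distribution obtained by descending the tree, choosing a uniformly random child at each node). This gives $\mathbb{E}(s\mathcal{C}^\ell)\le q^{(\ell)}$.
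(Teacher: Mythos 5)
Your proposal is correct and follows essentially the same route as the paper's proof (given in the Supplementary Material as Theorem S\ref{thm-alg-3} via Lemma S\ref{lem1}): your key measurability step---that on the testing event the rejection counts, hence $q_i$ and $w_i^\ell$, are determined by the ``siblings along the path'' via the simple-selection-rule property applied inductively from the root---is exactly the paper's Lemma, and the subsequent use of A2 to factor, A1 to bound the within-family error at the realized level, and the telescoping identity $\sum_i\prod_{k=1}^{\ell-1}|\mathcal{F}_{P^k(i)}^{\ell-k}|^{-1}=1$ matches the paper step for step. The only difference is presentational: you condition on the sibling $p$-values and partition over the countably many values of $q_i$, whereas the paper sums over the partition events $C_{r_{i_1},\ldots,r_{i_{\ell-1}}}^{(i_1,\ldots,i_{\ell-1})}$ indexed by the rejection counts and over the countable support of $\mathcal{C}_i^{\ell}$.
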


 When the $p$-values for the hypotheses in each family are PRDS on the subset of $p$-values corresponding to true null hypotheses, and the dependency structure across the $p$-values of hypotheses at different levels is such that assumption A2 is satisfied,
 the TreeBH procedure satisfies the assumptions of Theorem \ref{thm-general}, as well of Theorem \ref{thm-prds}. However, the result of Theorem \ref{thm-prds} for the TreeBH procedure is stronger, since it allows to replace the independence assumption of A2 by the corresponding positive dependence assumption. Both theorems above do not require any special dependency structure between the $p$-values of hypotheses in the same branch of the tree, i.e.\ ancestor hypotheses
of each $H_i$ residing at Level $L$: these can be arbitrarily dependent.

The result of Theorem \ref{thm-general} follows from a more general theorem given in the Supplementary Material, which considers a further generalization of TreeBH, where different selection rules, not necessarily defined by multiple testing procedures, may be applied for selecting hypotheses within the families.

Let us now consider the setting we focus on, where each parent hypothesis is the intersection
of all the hypotheses in the family it indexes (i.e.\ the global null), and the $p$-values for the hypotheses
at Level $\ell-1$ are obtained using Simes' method (\ref{simes}) on the $p$-values of the family of hypotheses they index at Level $\ell,$ starting from valid Level-$L$ $p$-values and working up from the bottom level of the tree. Consider the case where the $p$-values within each family at level $L$ are PRDS, and they are independent
of the $p$-values in any other family at Level $L.$ Let us first show that in this case the $p$-values for all the hypotheses in the tree are valid. It follows from Theorem 1.2 in \cite{BY01} that when Simes' combination rule is applied on the set of $p$-values which are PRDS, the combined $p$-value is a valid $p$-value for the intersection of the corresponding hypotheses. This result also implies that Simes' $p$-value is valid when applied on independent $p$-values (which was proven in \cite{S86}). The above dependency structure implies that the $p$-values within each Level-$L$ family are PRDS, and the $p$-values within each family at a higher level are independent. Therefore, indeed Simes' $p$-value for each hypothesis in the tree is valid, yielding that the $p$-values for all the hypotheses in the tree are valid. Using this fact we obtain that in this case the assumptions of Theorem \ref{thm-general} (as well as those of Theorem \ref{thm-prds}) hold, therefore the TreeBH procedure guarantees control of $\text{sFDR}^{\ell}$ for each $\ell\in\{1, \ldots, L\}.$

Let us now consider the same setting as before, but a more general dependency structure at Level $L,$ where the entire set of Level-$L$ $p$-values satisfies the PRDS property, i.e.\ the $p$-values across different families are not necessarily independent: they may be positively dependent. In this case we do not have the result that Simes' $p$-value is valid for each hypothesis in the tree, and that the dependency structure assumed in Theorem \ref{thm-prds} or in Theorem \ref{thm-general} holds. However, we obtain that in this case the TreeBH procedure controls $\text{sFDR}^{L},$ i.e.\ the selective error rate for the finest Level-$L$ families, and it also controls $\text{sFDR}^{\ell}$ for each $\ell<L$ if the target input parameters are all equal.
\begin{thm}
Assume that each parent hypothesis is the intersection of the hypotheses within the family it indexes. Assume that the $p$-values for the hypotheses at the finest Level $L$ satisfy the PRDS property, and each of the $p$-values for the hypotheses at Level $\ell-1$ is calculated using a certain combination rule on the $p$-values of the family of hypotheses it indexes at Level $\ell,$ for $\ell=2, \ldots, L.$
\begin{enumerate}
\item If each of the combination rules used for computing $p$-values at levels $1, \ldots, L-1$ is a monotone non-decreasing function of each of its input $p$-values, then the TreeBH procedure guarantees $\text{sFDR}^L\leq q^{(L)}$ where $q^{(L)}$ is the targeted $\text{sFDR}^{L}$ input parameter.
\item If each of the combination rules used for computing $p$-values at levels $1, \ldots, L-1$ is Simes' combination rule (\ref{simes}), and the targeted $\text{sFDR}^{\ell}$ levels are equal, i.e. they satisfy $q^{(1)}= q^{(2)}= \ldots= q^{(L)}=q$ for some $q\in[0,1],$
then the TreeBH procedure guarantees for each $\ell\in \{1,\ldots, L\}$ that
$\text{sFDR}^{\ell}\leq q.$
\end{enumerate}
\end{thm}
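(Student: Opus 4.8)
The plan is to prove the two parts in sequence: part~1 by a direct argument tailored to the finest level, and part~2 as a corollary of part~1 together with the consonance property and Proposition~\ref{prop-gen}. Since the text has already noted that neither Theorem~\ref{thm-prds} nor Theorem~\ref{thm-general} applies verbatim here—the combined $p$-values need not be valid, and the across-family dependence is only positive rather than independent—the real work is concentrated in part~1, where only the actual Level-$L$ $p$-values, and hence the full PRDS vector, enter the error being controlled.

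For part~1 I would first rewrite $\overline{\text{sFDP}}^L$ as a sum of per-hypothesis contributions. Writing $m_i=|\mathcal{F}_i^L|$, $R_i=|\mathcal{S}_i^L|$, and $M_i=\prod_{k=1}^{L-1}|\mathcal{F}_{P^k(i)}^{L-k}|$, the formulas \eqref{weights} and \eqref{qitreebh} give the crucial \emph{deterministic} identity $w_i^L q_i = q^{(L)}/M_i$. On the event that a true null $H_j\in\mathcal{F}_i^L$ is rejected by BH within a tested family with $R_i$ rejections, its $p$-value satisfies $p_j\le T_i:=q_iR_i/m_i$, so its contribution $w_i^L\,\mathbf{I}\{H_j\text{ rejected}\}/R_i$ equals $(q^{(L)}/N_i)\,\mathbf{I}\{H_j\text{ rejected}\}/T_i$, where $N_i=M_i m_i=\prod_{k=0}^{L-1}|\mathcal{F}_{P^k(i)}^{L-k}|$ with $P^0(i)=i$. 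A short telescoping over the tree gives $\sum_i 1/M_i=1$, since the family-size factor introduced at each level cancels the number of nodes summed over. Hence it suffices to prove, for every true null $H_j\in\mathcal{F}_i^L$,
\begin{equation*}
\mathbb{E}\!\left[\frac{\mathbf{I}\{p_j\le T_i,\ \mathcal{F}_i^L\text{ tested}\}}{T_i}\right]\le 1,
\end{equation*}
because summing this bound, using that there are at most $m_i$ nulls in $\mathcal{F}_i^L$, yields $\text{sFDR}^L\le q^{(L)}\sum_i m_i/N_i=q^{(L)}\sum_i 1/M_i=q^{(L)}$.

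This super-uniformity inequality is the core and the main obstacle; it is a selective, multi-level generalization of the Benjamini--Yekutieli lemma \cite{BY01} and of the argument in \cite{BB14}. The difficulty is that the selection event $\{\mathcal{F}_i^L\text{ tested}\}$ and the threshold $T_i$ are \emph{not} independent of the family's own data: $T_i$ involves $R_i$ and the bound $q_i$, and both $q_i$ and the selection event are driven by rejections at coarser levels, whose $p$-values are combinations of the very Level-$L$ $p$-values under analysis. The hypothesis that each combination rule is coordinatewise non-decreasing is precisely what rescues the argument: increasing any Level-$L$ $p$-value weakly increases every combined ancestor $p$-value, hence weakly shrinks every coarser rejection count and every $q_i$, so $T_i$ is a coordinatewise non-increasing function of the Level-$L$ vector and $\{\mathcal{F}_i^L\text{ tested}\}$ is a decreasing event. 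With $T_i$ monotone in this sense, I would decompose over its finitely many values and apply the PRDS property of the full Level-$L$ vector on its null coordinates to pass the random threshold through the conditional expectation, exactly as in the BY/BB14 computation; monotonicity guarantees that the events entering that computation are of the increasing/decreasing type that the PRDS definition controls.

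Part~2 then follows cleanly. Taking all $q^{(\ell)}=q$ satisfies $q^{(1)}\le\cdots\le q^{(L)}$, so with Simes' rule the consonance property established earlier holds: whenever a parent at level $\ell-1$ is rejected, its Simes $p$-value—being the smallest BH-adjusted $p$-value in $\mathcal{F}_i^\ell$—lies below $q_i$, forcing at least one rejection in $\mathcal{F}_i^\ell$. This is a deterministic feature of the procedure and needs no dependence or validity assumption, so it remains available in the present setting. Since Simes' rule is itself coordinatewise non-decreasing, part~1 gives $\text{sFDR}^L\le q$. Consonance then lets Proposition~\ref{prop-gen} apply at every level, yielding $\text{sFDR}^1\le\text{sFDR}^2\le\cdots\le\text{sFDR}^L\le q$, which is the claim.
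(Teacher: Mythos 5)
Your proposal is correct and takes essentially the same route as the paper: part 1 is the Benjamini--Yekutieli/BB14 decomposition over products of rejection counts along the path, with the one genuinely new ingredient---coordinatewise monotonicity of the combination rules, which makes the relevant selection events increasing sets of the Level-$L$ $p$-value vector so that PRDS of that vector can be invoked---identical to the paper's argument, and your per-null ``super-uniformity'' packaging (via $w_i^L q_i = q^{(L)}/M_i$ and $\sum_i 1/M_i=1$) is just a re-derivation of the paper's reduction through Lemma \ref{lemtreeBH} and inequality (\ref{enough}). Part 2 (equal targets plus Simes give deterministic consonance, then Proposition \ref{prop-gen} chains $\text{sFDR}^{1}\leq\cdots\leq\text{sFDR}^{L}\leq q$) coincides with the paper's proof verbatim.
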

This result follows from a more general theorem given in the Supplementary Material, which allows for different selection rules within the families at levels $1,\ldots, L-1.$

For general dependence among the $p$-values in the tree, we provide a more conservative variant of the TreeBH procedure, which reduces to the Benjamini-Yekutieli procedure \cite{BY01} when one has a single family of hypotheses, i.e.\ when $L=1.$
 As a first step, we should have valid $p$-values for the hypotheses in the tree. Considering the case where each parent hypothesis is intersection of its children, and valid $p$-values for Level-$L$ hypotheses are available, we can obtain the $p$-values for the intersection hypotheses at coarser resolution levels by combining the $p$-values within the families they index using
Bonferroni method, or using Simes method with the adjustment for general dependence (\cite{BY01}). Specifically, the $p$-value for $H_{i}$ indexing family $\mathcal{F}_i^\ell$  can be obtained as follows:
\begin{align}
&p_B=\min_{j} p_{(j)}\times k\label{bonf}\\
&p_S=\min_{j} p_{(j)}\times \frac{k}{j}\times g(k)\label{simesgen}
\end{align}
where
\begin{align*}
g(m)=\sum_{i=1}^{m}\frac{1}{i}
\end{align*}
for each natural number $m,$ the set $\big\{p_j: j = 1, \ldots, k\big\}$ corresponds to the $p$-values for the hypotheses belonging to $\mathcal{F}_i^\ell$, and
 the index in parentheses signifies that the $p$-values have been sorted in increasing order. The $p$-value in (\ref{bonf}) is based on Bonferroni method, while the $p$-value in (\ref{simesgen}) is based on Simes' method with the adjustment for arbitrary dependence.
It follows from Theorem 1.3 in \cite{BY01} that the above adjustment of Simes' $p$-value for the intersection hypothesis guarantees its validity under arbitrary dependence among the $p$-values. As discussed in \cite{BY01}, $g(m)\approx \log(m)+1/2,$ and using the above adjustment of Simes' $p$-value may be more powerful than using Bonferroni if the proportion of false null hypotheses within a family is not lower than $\log(k)/k.$ Other methods for obtaining valid $p$-values for the intersection hypothesis under arbitrary dependence are available, e.g. Sidak's method for combining absolute values of test statistics which have multivariate normal distribution with zero mean under the null \cite{Royen}.  

 \paragraph{The TreeBH procedure for arbitrary dependence}
 {\em
\begin{description}
\item[Step 1.] Apply the BH procedure with the bound $\tilde{q}_0=q^{(1)}/g\left(|\mathcal{F}^1|\right)$ to the family $\mathcal{F}^1$ $p$-values. If $\mathcal{S}^{1}_{0}=\phi,$ i.e.\ no hypothesis is rejected in $\mathcal{F}^1$, then stop. Otherwise, proceed to Step 2.
\item[Step \boldmath$\ell>1.$] For each family $\mathcal{F}_i^{\ell}$ for which all the ancestors are rejected, i.e.\ $i\in \mathcal{S}_{P^1(i)}^{\ell-1},$ $P^k(i)\in \mathcal{S}_{P^{k+1}(i)}^{\ell-k-1}$ for $k=1,\ldots, \ell-2,$ apply the BH procedure to the family $\mathcal{F}_i^{\ell}$ $p$-values with the bound
    $$\tilde{q}_i= \Bigg\{\prod_{k=1}^{\ell-1} \frac{|\mathcal{S}_{P^k(i)}^{\ell-k}|}{|\mathcal{F}_{P^k(i)}^{\ell-k}|}\Bigg\}\frac{q^{(\ell)}}{g\left(\prod_{k=0}^{\ell-1} |\mathcal{F}_{P^k(i)}^{\ell-k}|\right)},$$
    where $P^0(i)=i.$ 
   \\If  no rejections are made or $\ell=L$, stop. Otherwise, proceed to step $\ell+1$.
\end{description}}
\noindent
It was shown by \cite{blanchard2008two} that the Benjamini-Yekutieli procedure belongs to a more general class of procedures which guarantee FDR control under arbitrary dependence. We leave a similar generalization of the above procedure for future work.

The variant above is more conservative than the TreeBH procedure, because for each tested family $\mathcal{F}_i^{\ell}$,
the BH procedure is applied to the family $\mathcal{F}_i^{\ell}$ $p$-values with the bound $\tilde{q}_i=q_i/g\left(\prod_{k=0}^{\ell-1} |\mathcal{F}_{P^k(i)}^{\ell-k}|\right),$ where
$q_i$ is the testing level for family $\mathcal{F}_i^{\ell}$ if it is tested when the TreeBH procedure (without the adjustment) is applied.
If at each level $\ell$ of the tree, the families $\mathcal{F}_i^{\ell}$ are of the same size, i.e.\ $|\mathcal{F}_i^{\ell}|=m_{\ell},$ then the adjustment is the same for each family $\mathcal{F}_i^{\ell}$ at level $\ell$:  $\tilde{q}_i=q_i/g\left(\prod_{k=0}^{\ell-1} m_{\ell-k}\right).$
\begin{thm}\label{thm-gen-dep}
Assume that we have valid $p$-values for the hypotheses in the tree, which can be arbitrarily dependent. The TreeBH procedure for arbitrary dependence with input parameters $(q^{(1)},\ldots,
q^{(L)})$ guarantees for each $\ell\in \{1,\ldots, L\}$ that
$\text{sFDR}^{\ell}\leq q^{(\ell)}.$
\end{thm}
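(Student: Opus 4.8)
The plan is to mirror the single-family Benjamini--Yekutieli argument of \cite{BY01}, but to apply it to an entire root-to-leaf branch at once. First I would use the recursive representation of Remark 2 (equivalently, expand the weights in \eqref{weights}) to rewrite the error rate as one contribution per true null at level $\ell$. Fixing such an $H_j$ with ancestors $A_1,\ldots,A_{\ell-1}$ (and $A_0$ the root), and letting $R_m$ be the number of rejections in the family $\mathcal{F}_{A_{m-1}}^{m}$ tested at level $m$ along this branch, the bookkeeping identity $w_{A_{\ell-1}}^{\ell}\cdot \text{FDP}_{\mathcal{F}_{A_{\ell-1}}^{\ell}}=\sum_{\text{null }H_j}\mathbf{I}_{\{H_j\text{ rejected}\}}/\prod_{m=1}^{\ell}R_m$ gives
\[
\text{sFDR}^{\ell}=\sum_{H_j\text{ true null at level }\ell}\mathbb{E}\!\left[\frac{\mathbf{I}_{\{H_j\text{ rejected}\}}}{\prod_{m=1}^{\ell}R_m}\right],
\]
where $\mathbf{I}_{\{H_j\text{ rejected}\}}$ already forces every $R_m\ge 1$. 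This holds verbatim for the conservative variant.

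The key step, and the point at which arbitrary dependence is accommodated, is that the nested thresholds telescope. Since each family $\mathcal{F}_{A_{m-1}}^{m}$ is tested by BH at level $\tilde q_{A_{m-1}}$, a hypothesis rejected there has $p$-value at most $R_m\tilde q_{A_{m-1}}/M_m$ with $M_m=|\mathcal{F}_{A_{m-1}}^{m}|$; substituting the definition of $\tilde q_i$ and cancelling successive factors shows that on $\{H_j\text{ rejected}\}$,
\[
p_j\le \frac{q^{(\ell)}}{g(N)}\cdot\frac{\prod_{m=1}^{\ell}R_m}{N},\qquad N:=\prod_{m=1}^{\ell}M_m .
\]
Thus the effective threshold for the leaf null $H_j$ depends on the random counts only through their \emph{product} $P=\prod_m R_m\in\{1,\dots,N\}$. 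This collapses the multi-level problem to a single ``super-family'' of size $N$ tested at level $\beta:=q^{(\ell)}/g(N)$: writing $B_p=\{\prod_m R_m=p\}$ for the disjoint events and bounding $\mathbf{I}_{\{H_j\text{ rejected}\}}\le\mathbf{I}_{\{p_j\le \beta p/N\}}$ on $B_p$, the per-null expectation is at most $\sum_{p=1}^{N}\tfrac1p\,\mathbb{P}(p_j\le \beta p/N,\,B_p)$.

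To this one-dimensional sum I would apply the Benjamini--Yekutieli argument exactly as in the $L=1$ case: the pointwise inequality $\sum_p\tfrac1p\mathbf{I}_{\{p_j\le\beta p/N\}}\mathbf{I}_{B_p}\le\sum_p\tfrac1p\mathbf{I}_{\{\beta(p-1)/N<p_j\le\beta p/N\}}$, valid because the $B_p$ are disjoint and $p\mapsto\beta p/N$ is increasing, followed by summation by parts together with the marginal validity $\mathbb{P}(p_j\le t)\le t$, yields
\[
\mathbb{E}\!\left[\frac{\mathbf{I}_{\{H_j\text{ rejected}\}}}{\prod_{m=1}^{\ell}R_m}\right]\le \frac{\beta}{N}\,g(N)=\frac{q^{(\ell)}}{N}.
\]
This uses only the marginal validity of $p_j$ and never conditions on $p$-values at other levels, which is precisely why no dependence assumption is needed; the inflation $g(N)$ supplied by the conservative thresholds is exactly what the summation-by-parts bound consumes.

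Finally I would sum the per-null bound over all true nulls at level $\ell$. Writing $N_j$ for the value of $N$ attached to $H_j$, every hypothesis in a fixed leaf family shares the same $N_j$, and a leaf family of size $M_\ell$ contributes $M_\ell/N_j=1/\prod_{m<\ell}M_m$; summing over families produces the telescoping identity $\sum_{H\text{ at level }\ell}1/N_j=\sum_{H\text{ at level }\ell-1}1/N_j=\cdots=\sum_{H\in\mathcal{F}^1}1/|\mathcal{F}^1|=1$, so that $\sum_{\text{true nulls}}1/N_j\le1$ and $\text{sFDR}^{\ell}\le q^{(\ell)}$. The main obstacle here is conceptual rather than computational: under arbitrary dependence one cannot peel off levels by conditioning, as assumption A2 allowed in Theorem \ref{thm-general}, so the whole branch must be treated simultaneously. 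Recognizing that the telescoped threshold depends on the counts only through their product $P$, and that this reduces the branch to a single Benjamini--Yekutieli family of size $N$, is the crux that makes the argument go through.
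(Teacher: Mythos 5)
Your proof is correct and follows essentially the same route as the paper's: both reduce $\text{sFDR}^{\ell}$ to a per-true-null contribution along each root-to-leaf branch, exploit the fact that the telescoped BH thresholds of the conservative variant depend on the rejection counts only through their product (giving disjoint events indexed by $s=\prod_m R_m$), and then run the Benjamini--Yekutieli interval-decomposition and summation-by-parts argument using only marginal validity of the null $p$-value to absorb the factor $g(N)$, finishing with the telescoping identity $\sum_j 1/N_j = 1$. The only difference is cosmetic: the paper defines its events $\tilde{B}_s$ on the complementary $p$-values (in the BY tradition of Lemma S1) and manipulates the probabilities $p_{js}^{(i_1,\ldots,i_\ell)}$ directly, whereas you work with realized rejection counts and a pointwise inequality, which amounts to the same computation.
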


Based on our simulation results and the fact that the TreeBH may be viewed as a generalization of the BH procedure for a tree of hypotheses, which is known to be robust to different types of dependency, we believe that the modification of the TreeBH procedure is not required for many types of dependencies encountered in applications.

\section{Examples and simulations} \label{sec:sim}
In this section we consider examples and simulations with the intention of (1) illustrating the differences between the error rates and procedures we introduced with
other methods in the literature; (2) studying the effect of a type of dependence between the test statistics at Level 3 similar to that we expect in genetic studies.

In the interest of simplicity, we limit ourselves to $L=3$, where all hypotheses within a given tree level have the same number of children.  In these three-level trees we can  use an index system for the hypotheses that explicitly indicates logical relations:
hypotheses in level 3 are $H_{ijt}$, those in level two  $H_{ij\bullet}=\cap_tH_{ijt}$, and those in level 1  $H_{i\bullet\bullet}=\cap_jH_{ij\bullet}$. We can then describe the configurations of true and false nulls using a matrix containing all the Level 3 hypotheses, as in Table \ref{matrix}.
Each row includes all the hypotheses $H_{ijt}$ corresponding to one value of $i$ (so that the presence of a non-null hypothesis in row $i$ signifies that $H_{i\bullet\bullet}$ is false).
Each column corresponds to one pair $(j,t)$, with all the columns with the same value of $j$ adjacent, so that, within a row, blocks of columns correspond to the families in Level 3, and the
presence of a non-null hypothesis in row $i$, block $j$ signifies that $H_{ij\bullet}$ is false.
\begin{table}
 \begin{center}
 \begin{tabular}{rc|c|c|c}
 \multicolumn{5}{l}

$H_{i1\bullet}\;\;\;\;\;\;\;\;\;\;\;\;\;\;\;\;\;\;\; H_{i2\bullet} \;\;\;\;\;\;\;\;\; \;\;\;\;\;\;\;\;\;\;\;\;\;\;\;\;\; \;\;\;\;\;\;\;\;\;\;\;\;\;\;\;\;\; H_{im\bullet}\;\;\;\;\;\;\;\;\; $\\
 $H_{1\bullet\bullet} \rightarrow$ & $\overbrace{H_{111}\cdots H_{11k_{1}}}$ & $\overbrace{H_{121} \cdots H_{12k_{2}}}$& $\;\;\;\;\;\;\;\; \cdots\;\;\;\; \;\;\;\; $ & $\overbrace{H_{1m1} \cdots H_{1mk_{m}}}$\\
$H_{2\bullet\bullet} \rightarrow$  & $H_{211}\cdots H_{21k_{1}}$ & $H_{221} \cdots H_{22k_{2}}$& $\;\;\;\;\;\;\;\; \cdots\;\;\;\; \;\;\;\;$&$H_{2m1} \cdots H_{2mk_{m}}$\\
& $\vdots$ & $\vdots$ & $\vdots$ & $\vdots$ \\
$H_{n\bullet\bullet} \rightarrow$& $H_{n11}\cdots H_{n1k_{1}}$ & $H_{n21} \cdots H_{n2k_{2}}$& $\;\;\;\;\;\;\;\; \cdots\;\;\;\; \;\;\;\;$&$H_{nm1} \cdots H_{nmk_{m}}$\\
 \end{tabular}
\end{center}
\caption{\em Organizing the families of hypotheses in a matrix form}\label{matrix}
\end{table}

To compare the performance of different approaches we rely on
level-specific error rates and power. Specifically, we calculate the
FDR for discoveries at Levels 1, 2, and 3, which we
denote by $\text{FDR}^{\ell}$ for $\ell=1,2,3$ respectively,
as well as the selective $\text{sFDR}^{\ell}$ for
levels $\ell=2,3$. Note that $\text{sFDR}^{1}$ is omitted since
FDR$^{1}=\text{sFDR}^{1}$. For the hypotheses at each level, we also
calculate power. Across the different simulations, we compare our
procedure with BH (as a standard of FDR control without reference to
sub-groups of hypotheses) and p-filter which, while having a
 different target, shares many commonalities with our
procedure (attempting to control the FDR for ``group level
discoveries'' and relying on Simes' combination $p$-value).
Specifically, we use the following approaches and labels.

\begin{description}
\item \textbf{BH} = Benjamini-Hochberg method \cite{BH95} applied across the pooled set of $p$-values for the entire matrix of hypotheses. This guarantees control of FDR$^{3}$.
\item \textbf{BB} = Benjamini-Bogomolov method \cite{BB14} applied with hypotheses grouped into a two-level hierarchy with $H_{i\bullet\bullet}$ in Level 1, each indexing a family ${\cal F}^{2}_{i}=\{H_{ijt}, j=1,\ldots m, t=1,\ldots, k_i\}$. The selection in Level 1 is done using BH on Simes' $p$-values for $H_{i\bullet\bullet}$. This guarantees control of FDR$^{1}$, as well as of a selective error rate on these second layer hypotheses (which we do not calculate here).
\item \textbf{p-filter -- non-hier} = p-filter applied to the matrix of hypotheses in Table \ref{matrix}, and groups defined by the pooled set of all hypotheses,   rows, and column. This guarantees control of FDR$^1$ and FDR$^3$ and is considered for ease of comparison with \cite{FBR15}.
\item \textbf{p-filter -- hier} =  p-filter applied with groups defined by the pooled set of all hypotheses,  rows, and  sets of columns (i.e.\ a nested setup mimicking our hierarchical procedures). This guarantees control of FDR$^{\ell}$ for $\ell=1,2,3$.
 \item \textbf{TreeBH} = the 3-level TreeBH,  guaranteeing control of  FDR$^1$, sFDR$^2$ and  sFDR$^{3}$.
 \end{description}

\subsection{Example: the differences across error rates and procedures}\label{E1}
We start with a small example that allows us to explicitly work out
the differences between the various error rates and procedures. We
consider six hypotheses at Level 1, each indexing a family of 6
hypotheses, parents to families at Level 3 that contain 2,2,2,2,2,
and 90 hypotheses respectively, with truth assignment as described
in Table \ref{TE1}.

\begin{table}[h!]
\centering
\resizebox{\textwidth}{!}{
\begin{tabular}{ll|ll|ll|ll|ll|llll}
\color{red}${H_{1,1,1}}$ & $H_{1,1,2}$ & \color{red}${H_{1,2,1}}$ & $H_{1,2,2}$ & \color{red}${H_{1,3,1}}$ & $H_{1,3,2}$ & \color{red}${H_{1,4,1}}$ & $H_{1,4,2}$ & \color{red}${H_{1,5,1}}$ & $H_{1,5,2}$ & $\color{red} {H_{1,6,1}}$ & \color{red} ${H_{1,6,2}}$ & \color{red}${\ldots}$ & \color{red}${H_{1,6,90}}$ \\
$H_{2,1,1}$ & $H_{2,1,2}$ & $H_{2,2,1}$ & $H_{2,2,2}$ & $H_{2,3,1}$ & $H_{2,3,2}$ & $H_{2,4,1}$ & $H_{2,4,2}$ & $H_{2,5,1}$ & $H_{2,5,2}$ & \color{red}${H_{2,6,1}}$ & $H_{2,6,2}$ & $\ldots$ & $H_{2,6,90}$\\
\color{red}${H_{3,1,1}}$ & \color{red}${H_{3,1,2}}$ & \color{red}${H_{3,2,1}}$ & \color{red}${H_{3,2,2}}$ & \color{red}${H_{3,3,1}}$ & \color{red}${H_{3,3,2}}$ & \color{red}${H_{3,4,1}}$ &\color{red} ${H_{3,4,2}}$ & \color{red}${H_{3,5,1}}$ &\color{red} ${H_{3,5,2}}$ & $ H_{3,6,1}$ &  $H_{3,6,2}$ & $\ldots$ & $H_{3,6,90}$ \\
\color{red}${H_{4,1,1}}$ & \color{red}${H_{4,1,2}}$ & \color{red}${H_{4,2,1}}$ & \color{red}${H_{4,2,2}}$ & \color{red}${H_{4,3,1}}$ & \color{red}${H_{4,3,2}}$ & \color{red}${H_{4,4,1}}$ &\color{red} ${H_{4,4,2}}$ & \color{red}${H_{4,5,1}}$ &\color{red} ${H_{4,5,2}}$ & $ H_{4,6,1}$ &  $H_{4,6,2}$ & $\ldots$ & $H_{4,6,90}$ \\

\color{red}${H_{5,1,1}}$ & \color{red}${H_{5,1,2}}$ & \color{red}${H_{5,2,1}}$ & \color{red}${H_{5,2,2}}$ & \color{red}${H_{5,3,1}}$ & \color{red}${H_{5,3,2}}$ & \color{red}${H_{5,4,1}}$ &\color{red} ${H_{5,4,2}}$ & \color{red}${H_{5,5,1}}$ &\color{red} ${H_{5,5,2}}$ & $ H_{5,6,1}$ &  $H_{5,6,2}$ & $\ldots$ & $H_{5,6,90}$ \\

$H_{6,1,1}$ & $H_{6,1,2}$ & $H_{6,2,1}$ & $H_{6,2,2}$ & $H_{6,3,1}$ & $H_{6,3,2}$ & $H_{6,4,1}$ & $H_{6,4,2}$ & $H_{6,5,1}$ & $H_{6,5,2}$ & $H_{6,6,1}$ & $H_{6,6,2}$ & $\ldots$ & $H_{6,6,90}$

\end{tabular}
}
\caption{\em The setting for Example \ref{E1}. The non-null hypotheses are marked in red.}
\label{TE1}
\end{table}

First we discuss the implications of the configuration in Table
\ref{TE1} on error rates. We note that 5 out of 6 Level 1 hypotheses
are false, so we can expect FDR$^1$ will be contained for any
method. Consider now the error control for Level 3 discoveries:
methods such as BH and p-filter, which do not have consideration for families, will weight any false discovery against the many possible
true discoveries in family ${\cal F}^3_{1,6}$ (the family which
contains 90 non-null hypotheses). For the selective methods we
propose, instead, any false discovery in families ${\cal F}^3_{i,
6}$ for $i=3,\ldots, 6$ would result in FDP$^3_{i, 6}=1$ (and
FDP$^3_{2,6}\geq 0.5$)   and this would contribute with a
substantial weight to the average in sFDR$^3$.

We now consider how the power of  the different procedures is influenced by the configuration in Table \ref{TE1}. A large number of the Level 3 families are homogeneous: this presents an advantage for testing procedures that recognize the families, thereby allowing the BH threshold for significance to adapt to the different proportions of non-null hypotheses.
Figure \ref{FE1} reports the results of a simulation where all the error rates are targeted using a bound of 0.1 and, for each realization, the $p$-values for each of the hypotheses are generated independently as follows:

\begin{align*}
X &\sim \mu + \mathcal{N}(0,1)\\
\text{$p$-value} &= 1 - \mathbf{\Phi}(X),
\end{align*} where $\mathbf{\Phi}$ denotes the standard normal cdf, $\mu$ = 0 for null hypotheses, and $\mu > 0$ for non-null hypotheses, where larger values of $\mu$ correspond to
greater signal strength.
\begin{figure}[h!]
\centering
\includegraphics[width = .88\linewidth, trim={3cm 0 0 0}, clip]{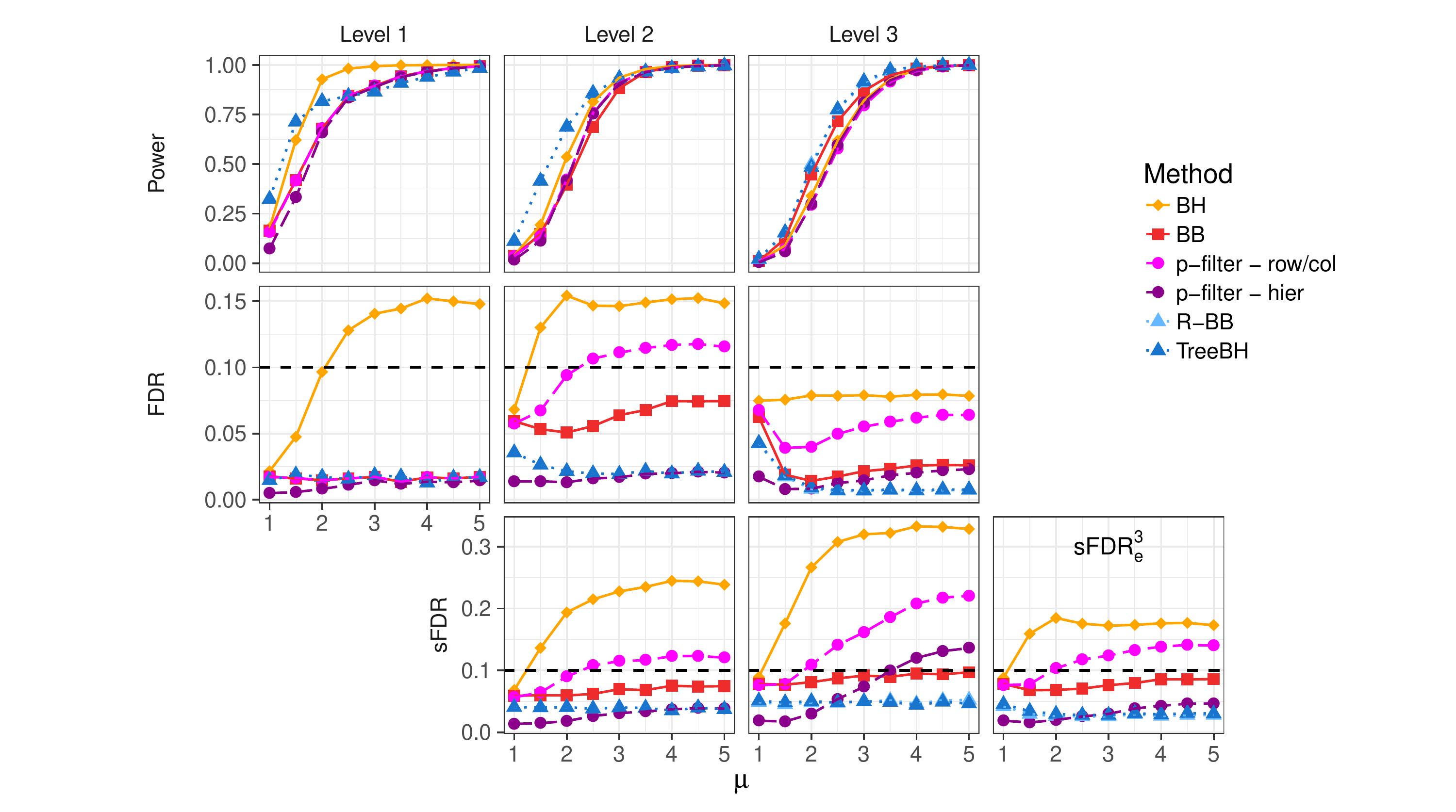}
\caption{\em Results for Example \ref{E1}. Each point corresponds to the average of 1000 realizations. Dashed horizontal lines indicate the target values for the error rates.}
\label{FE1}
\end{figure}
Figure \ref{FE1} underscores how each of the methods controls its target error rates, but not others. BH does not control any
Level 1 or Level 2 error rates, nor sFDR$^3$, and the p-filter methods do
not control all of the  sFDR$^{\ell}$. In this set-up it appears
that BB controls the sFDR$^{\ell}$, but we will see in other
examples that this is not always the case. Figure \ref{FE1} also
shows how the TreeBH 
procedure, despite showing the most
stringent error control in this example, has the highest power
across levels (beaten substantially only by BH in Level 1, where this procedure
has no FDR control). Interestingly, its power in Level 2 and 3 is
higher than that of BB, which shares some
of its hierarchical features and happens to control the selective
error rates in this case. The increased power at higher levels is
due to the fact that testing is carried out in more homogeneous
families. Higher power at Level 1 is due to the fact
that in calculating the Simes' $p$-values for $H_{i\bullet\bullet}$
one uses 6 $p$-values (rather than 100), and, at least for $i=3,4,5$,
five of those are going to be very small, due to the non-null status
of the hypotheses they represent: the effect of these $p$-values would
be more washed out in the entire pool of 100 hypotheses.

 \subsection{Example: the p-filter set-up}\label{E2}
For ease of comparison with the literature, we consider here the same example described in Section 6.2 of \cite{FBR15}. There are a total of 100$\times$100 hypotheses, and,
as illustrated in the upper left panel of Figure \ref{sim1_setup}, the non-null hypotheses (true signals) are arranged into 2 blocks of size $15 \times 15$ with $15$ additional
non-null hypotheses along the diagonal. In the simulations, $p$-values are generated using the same rule described in section \ref{E1}.
\begin{figure}[h!]
\centering
\includegraphics[width = 1.05\linewidth]{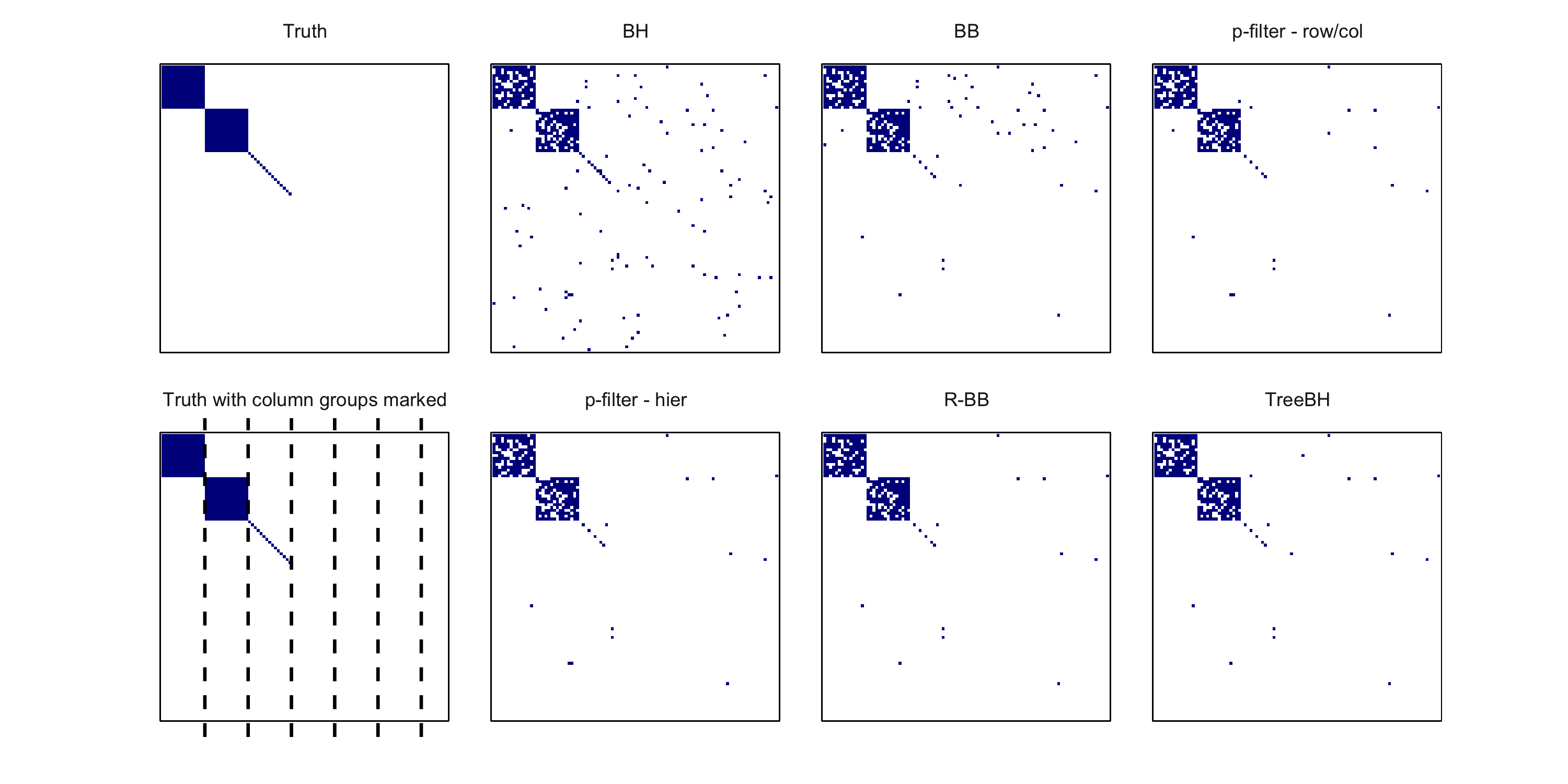}
\caption{\em True signals (marked in blue in panel labelled "truth") alongside results for an example run for all methods compared. The methods in the top row either do not group the hypotheses at all (BH), use a hierarchical grouping by row (BB), or group both by row and column (p-filter -- row/col). The methods in the bottom row (p-filter -- hier and TreeBH) use the column groupings demarcated with dashed lines in the lower left panel to group the hypotheses within each row.}
\label{sim1_setup}
\end{figure}

A single illustrative result using $\mu = 3$ is given in Figure \ref{sim1_setup}, and a full performance comparison across a range of $\mu$ values i.e.\ for varying signal strength is given in Figure \ref{sim1_results}. All methods are applied with target error rate 0.2.
 These results demonstrate that BH drastically fails to control the Level 1 and Level 2 FDR and the selective error rates, while BB and p-filter -- row/col,
 although they control the Level 1 FDR, do not control the Level 2 FDR or the Level 2 and Level 3 selective error rates. P-filter -- hier is able to control all error rates considered,
 but achieves lower Level 3 power vs.\ TreeBH. 
 The TreeBH 
 method, on the other hand, offers the best overall performance in terms of error control and power, even if many of the differences are small.
%
%

\begin{figure}[h!]
\centering
\includegraphics[width = 0.88\linewidth, trim={3cm 0 0 0}, clip]{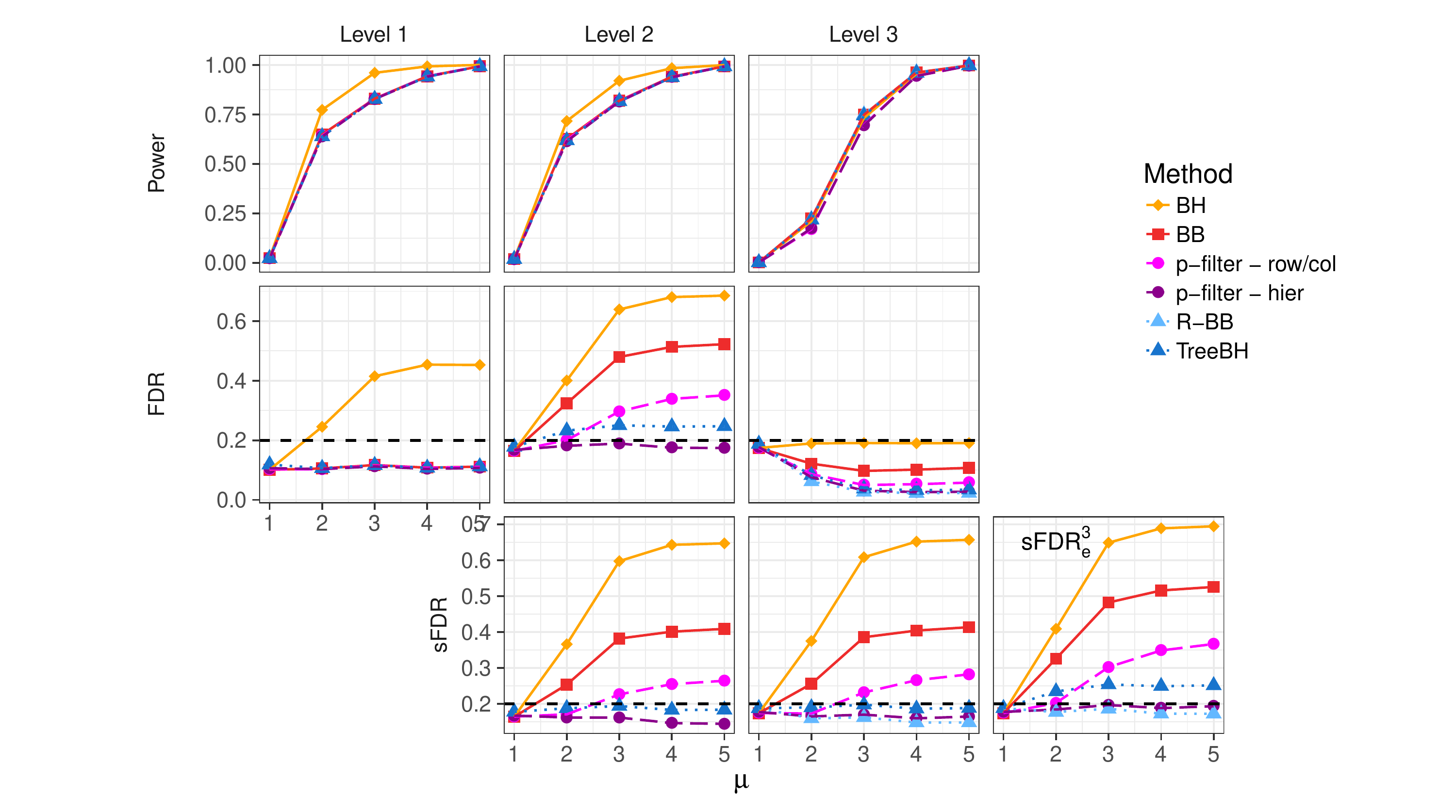}
\caption{\em Results for Example \ref{E2}, the "p-filter set-up".  Each point corresponds to the average of 100 realizations. Dashed horizontal lines indicate the target values for the error rates.}
\label{sim1_results}
\end{figure}

\subsection{Multi-trait GWAS}\label{sim_real_geno}
Having clarified the conceptual differences between our proposal and
existing methods, we now explore the properties of TreeBH 
in a set-up that resembles multi-trait genome-wide association
studies (GWAS): our goal is to investigate the effects of signal sparsity and of the dependence that exists among different tests.
 We adapt the simulation study proposed in Lewin et al. (2015) \cite{MTHESS} for eQTL studies, increasing the dimensionality to bring it closer to the GTEx set-up. While we defer a detailed description to the Supplementary Material, we  summarize some relevant features here. We use a matrix  $\mathbf{X}_{n\times p}$ of real genotyped data  from the North Finland Birth Cohort (NFBC) data set ($n$ = 5402 and $p$ = 8713) and simulate the expression of $q$ = 250 genes in each of $L$ = 5 tissues following a mixed effects model that incorporates across tissues correlation in errors. We assume that there are 50 causal SNPs which are each associated to the expression of 5, 10, or 20 genes.

 All $p$ = 8713 SNPs are tested for association to the simulated gene
expression traits. Association $p$-values are obtained using Matrix
eQTL \cite{MatrixEQTL}.
We compare the performance of BH, BB, 
and TreeBH,
where BB is applied with SNPs in Level 1 and all traits (i.e.\
expression for all genes across all tissues) in Level 2, and 
TreeBH is applied with SNPs in Level 1, genes in Level 2, and
tissues in Level 3. The p-filter is not included here as it was not
able to run in a timely manner (i.e.\ < 24 hrs) given similar
groupings to those in the  simulation with independent hypotheses.
All methods are applied with targeted error rate bound of 0.05. A performance
comparison for the 4 methods compared across 25 simulated data sets
is given in Figure S2: the TreeBH 
procedure appears to at least approximately control the target error
rates, unlike BH, and has a similar FDR, sFDR, and power to BB. In
particular, we see that BB is close to controlling the target error
rates, unlike in the simulations from Section \ref{E2}. This is
likely due to the fact that the adjustment for step 1 selection is
substantially more stringent here because of the sparsity of signal
across SNPs (specifically, the fact that only 50 of 8713 Level 1
hypotheses are non-null).

%

\section{Case studies} \label{sec:casestudies}
\subsection{Genetic regulation of gene expression across multiple tissues}

The goal of expression quantitative trait loci (eQTL) analysis is to identify DNA variants (typically SNPs) that influence the expression of genes. Since gene expression levels differ across tissues, eQTL analysis may reveal both shared and tissue-specific patterns of regulation. In addition to providing insight into the architecture of genetic regulation across tissues, the findings of multi-tissue eQTL analysis can help pinpoint disease mechanisms by linking risk variants from GWAS to the genes they regulate in specific tissues. The problem of identifying relevant associations is challenging, however, given the large number of hypotheses under consideration. 

In the eQTL setting, there are typically assumed to be two classes of regulation: local (in which variants affect the expression of nearby genes), and distal (in which variants affect the expression of genes located far away on the genome, perhaps even on different chromosomes). Local association
is much more common than distal regulation, and is believed to often be shared across multiple tissues. Given the very large number of hypotheses under consideration,
distal associations have proven to be difficult to detect. It is therefore common to focus on the more highly powered setting of local regulation, and we follow here this approach.

Typically, the $H_{ijt}$ hypotheses are tested separately for each
tissue using a simple linear model relating gene expression to
genetic variation. The collection of hypotheses $\{H_{ijt}\}$ can be
organized in different ways. The most common approach  has been to perform error control in each tissue
separately \cite{Nica2011, Grundberg2012}. Results across tissues
are then compared and conclusions are drawn on the tissue-specific
nature of the detected gene-SNP associations. It has been noted that
this approach is error prone and that joint analysis of multiple
tissues is likely to result in lower false positives and false
negatives. Methodology based on meta-analysis \cite{SetE13} and on
Bayesian model selection \cite{LetN13,FetS13} has been proposed to
address this shortcoming.  The testing procedure we propose here
provides some of the advantages of these methods while maintaining
the computational benefits of the simpler approach.

The genotype-tissue expression (GTEx) project is an NIH-funded project that aims to characterize variation in gene expression and genetic regulation across tissues. The Phase 1 data release includes 450 subjects and 44 tissues with at least 60 samples. In this release, gene expression was measured for around 21,000--34,000 genes per tissue, and genotypes  estimated for around 11 million SNPs. As in the simulation using real genotypes (Section \ref{sim_real_geno}), we focus on the set of reasonably independent SNPs filtered to have local $R^2$ < 0.5.  After tissue-specific QC, this set includes between 250,000 and 300,000 SNPs per tissue, with a total of 305,820 SNPs passing QC in at least one tissue.

For each tissue, the $H_{ijt}$ hypotheses are tested using a simple linear model with normalized expression for gene $j$
as the response and the estimated number of copies of the minor allele for SNP $i$ as the predictor. Gender, array platform, PEER factors (to adjust for confounding factors affecting global levels of gene expression), and genotype principal components (to adjust for population structure) are included as covariates. The corresponding $p$-values $p_{ijt}$ are generated using the software \texttt{FastQTL} \cite{FastQTL}.

One finding of interest in multi-tissue eQTL analysis is a set of eSNPs i.e.\ SNPs which we believe to play a functional regulatory role in at least one tissue.
Given this goal, we could naturally group the hypotheses into a hierarchical structure with SNPs in Level 1, genes in Level 2, and tissues in Level 3.
The Level 1 hypothesis ${H}_{i\bullet\bullet}$ addresses the question: does SNP $i$ have effects on expression in any tissue?
We consider SNP $i$ to be an eSNP if we reject $H_{i\bullet\bullet}$, and a SNP-gene pair to be discovered if we reject $H_{ij\bullet}$.
$P$-values are defined starting from the leaf hypotheses, which receive the $p$-value calculated by FastQTL. $P$-values for the Level 2 and Level 1 hypotheses are then defined using the Simes method.


Given this organization of the hypotheses, the TreeBH procedure controls
the following error rates: the FDR for eSNPs,  the expected average
proportion of false SNP-gene associations across the selected SNPs,
and a weighted average of proportion of false tissue discoveries for
the selected SNP-gene pairs.
We focus here on the tree described above as it
resembles the hierarchical structure that is
meaningful in multi-trait association studies and allows us to
annotate SNPs. Other ways of organizing
hypotheses in a tree are also meaningful; for
example, in our participation in the GTEx consortium we have used
other hierarchical structures when eGenes are the main discovery of interest.

To provide a benchmark for our results, in addition to TreeBH we
analyzed the data with two other approaches: the BH procedure
applied to the SNP-gene association $p$-values separately by tissue
(BH sep),  and the BH procedure applied to the pooled set of
$p$-values from all tissues (BH pooled). The total number of
discoveries for each of the adopted procedures is reported in Table
\ref{gtex_tab1}. The eSNP percentages show that a considerable
proportion of the SNPs tested for local association are selected
across all methods. The number selected for BH pooled is only
slightly less than the number selected for BH sep. This is due to
the fact that the BH threshold is adaptive, and in this setting, the
proportion of small $p$-values is quite substantial.
The TreeBH procedure is much more conservative at the SNP
level (because it provides control of the eSNP FDR), but relatively
less stringent in selecting the genes and tissues for these eSNPs,
resulting in a similar number of genes associated to each eSNP as
for the BH methods, an increased number of selected tissues for each
SNP-gene pair discovered, and a lower percentage of SNP-gene pairs
that were discovered in only 1 tissue. Given the hypothesis that
local regulatory relationships are likely to be shared across
tissues, the TreeBH results seem more biologically plausible vs.\
the results from the BH methods.

\begin{table}
\centering
\renewcommand{\arraystretch}{1.3}
\begin{tabular}{l| l| l|  l| l}
& & BH sep & BH pooled & TreeBH \\
\hline
\multirow{2}*{Level 1} & \# eSNPs & \num{9.1e4} & \num{8.6e4} & \num{4.5e4}
\\
\cline{2-5}
& \% eSNPs & 30\% & 28\% & 15\%\\
\hline
 \multirow{2}*{Level 2} & \# SNP-gene pairs & \num{1.9e5}  & \num{1.8e5}
& \num{9.3e4}  \\
\cline{2-5}
& \# genes per eSNP & 2.1 & 2.0 & 2.1\\
\hline
 \multirow{3}*{Level 3}  & \# SNP-gene-tissue triplets & \num{6.4e5} & \num{6.2e5}
& \num{5.1e6}\\
\cline{2-5}
& \# tissues per SNP-gene pair & 3.3 & 3.5 & 5.4\\
\cline{2-5}
& \% SNP-gene pairs 1 tissue only & 61\% & 61\% & 48\%
\end{tabular}
\caption[]{\em Numerical comparison of selection results for the BH
procedure applied separately by tissue (BH sep), BH procedure
applied to the pooled set of $p$-values from all tissues (BH
pooled), and from the 3-level TreeBH procedure (TreeBH).
"\%eSNPs" is the percentage of SNPs selected out of the 305,820 SNPs
tested for association. "\# genes per eSNP" and "\# tissues per
SNP-gene pair" are averages across all discovered eSNPs and all
discovered SNP-gene pairs, respectively. "\% SNP-gene pairs 1 tissue
only" is the percentage of discovered SNP-gene pairs that were
associated in only one tissue.} \label{gtex_tab1}
\end{table}


In Figure \ref{gtex_n_tissues}, we provide a more detailed look at the sharing of SNP-gene pairs across all 44 tissues, the 10 brain tissues, and the remaining 34 non-brain tissues.
We see that the proportion of SNP-gene pairs found in 1 tissue only had a larger decline when comparing the results for BH sep vs.\ TreeBH for the set of 10 brain tissues rather than all tissues. Specifically, the proportion of SNP-gene pairs discovered in 1 tissue only in all tissues was 0.61 for BH sep vs.\ 0.48 for TreeBH (21\% fewer), while in brain tissue the proportion was 0.58 for BH sep vs.\ 0.41 for TreeBH (29\% fewer). This suggests that the hierarchical method is in fact capitalizing on shared signal across closely related tissues.

The converse of this result is illustrated in Figure S3, which reports the number of SNP-gene pairs discovered in one tissue only (i.e.\ tissue-specific discoveries). While all tissues have fewer tissue-specific discoveries under  TreeBH vs.\ separate BH, some tissues (in particular, Testis) retain a large number of tissue-specific associations.

\begin{figure}[h!]
\centering
\includegraphics[width = \linewidth]{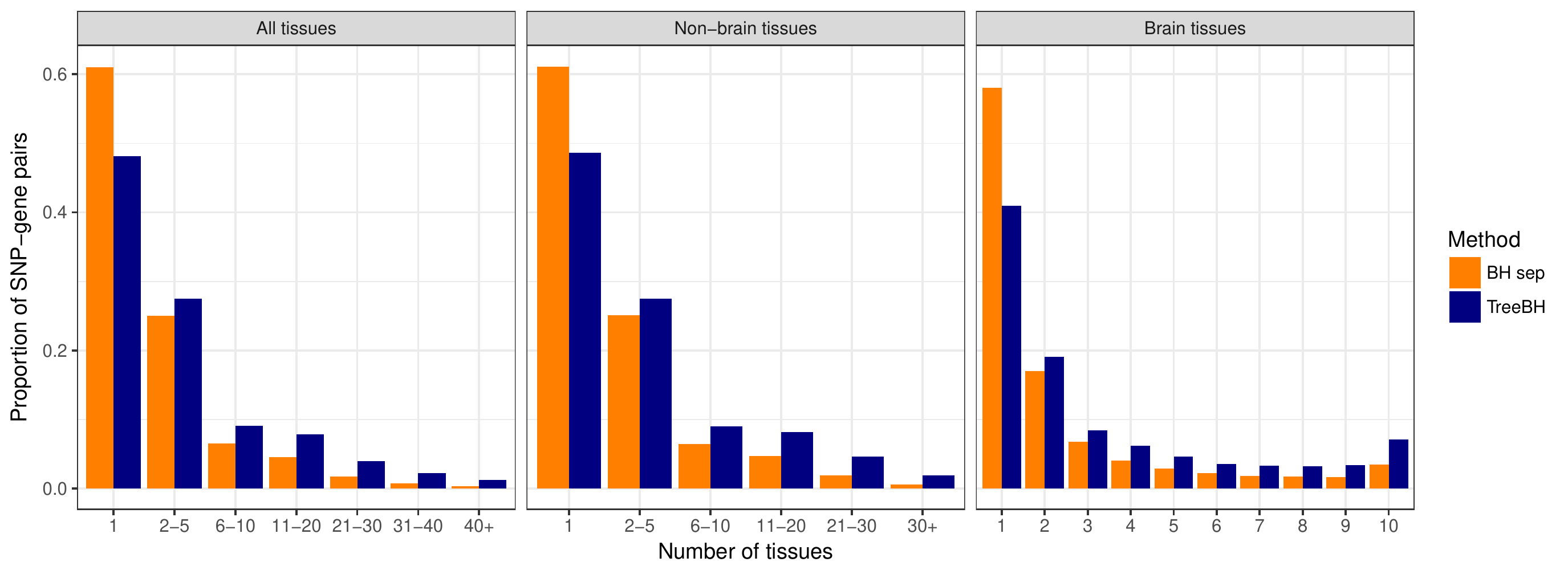}
\caption[]{\em Comparison of selection results for the BH procedure
applied separately by tissue ("BH sep") and the 3-level
TreeBH procedure in terms of sharing of SNP-gene pair
discoveries across tissues for all 44 tissues, the 34 non-brain
tissues, and the 10 brain tissues. } \label{gtex_n_tissues}
\end{figure}

%

\subsection{Association of the gut microbiome to colorectal cancer}

The data considered here were originally collected as part of study examining the association of gut microorganisms with colorectal cancer \cite{kostic}, and are provided in the R package \texttt{phyloseq} \cite{phyloseq}. For full details on the sample collection and processing, see \cite{kostic}.  Briefly, to quantify the relative abundance of microbial species, 16S rDNA (a variable region of the bacterial genome used for taxonomic classification) was amplified using PCR and then read via pyrosequencing. The resulting data are summarized as a table of counts for each operational taxonomic unit (OTU, a grouping of microorganisms similar to species but based on DNA sequence similarity) in each sample.

Here we consider the $n = 177$ samples with sufficient sample quantity (at least 500 total reads) and available diagnosis information (86 tumor, 91 normal). We focus on the $p = 496$ bacterial OTUs that are present in at least 10\% of samples. We compute $p$-values of difference between tumor and normal samples for each OTU using the negative binomial model for differential expression implemented in DESeq2 \cite{deseq2}. These OTUs are related within a taxonomic tree with the following levels: Kingdom - Phylum - Class - Order - Family - Genus - Species.
We are interested in controlling error rates at multiple levels in the taxonomic tree since the results of microbiome analysis are often discussed in terms of the taxonomic groupings discovered (e.g.\ phyla or genera) after controlling the FDR across OTUs. There is also an understanding that microbes which are closely related within a taxonomic tree are likely to have similar functions and roles in human disease. For a more complete discussion of the advantages of hierarchical testing in the analysis of microbiome data see \cite{SH14,SusanWorkflow}.

For the purpose of applying the TreeBH procedure, OTUs with missing
taxonomic information (specifically, a missing value for order,
family, genus, or species) are grouped together as the category
"Unknown" (nested within the appropriate hierarchy). The selections
obtained from applying the TreeBH procedure with 8 levels (Kingdom
- Phylum - Class - Order - Family - Genus - Species - OTU) using $q
= 0.05$ in each level of the tree are shown in Figure
\ref{fig:microbiome1}. Specifically, the TreeBH procedure results
in the selection of 6 phyla, 8 classes, 8 orders, 13 families, 15
genera, 17 species, and 19 OTUs. The number of selected OTUs is larger than the number of selected species as
 2 "Unknown" species are each represented by 2 OTUs.
Applying BH across the
$p$-values for all 496 OTUs with $q = 0.05$ results in the selection
of 33 OTUs (corresponding to 24 unique species), so the hierarchical procedure is more conservative in
terms of the raw number of OTUs (and species) discovered.
 The hierarchical
procedure does, however, include 4 additional species not discovered
using BH: in particular, 2 additional species of Fusobacteria are
discovered, which are key species of interest given the focus on the genus
{\textit{Fusobacterium}} in \cite{kostic}.
Another of the
additional species discovered ({\textit{Akkermansia muciniphila}})
has been previously associated with colorectal cancer specifically,
and was reported as one of the primary findings of \cite{Weir}. The
final additional species discovery under the hierarchical method
({\textit{Gemella haemolysans}}) has been previously associated to
oral cancer in a study of the oral microbiome \cite{Pushalkar}. In contrast, 10
of the 11 species discovered under BH which were not discovered
under TreeBH were categorized as "Unknown".
Thus, at a high level, the results of the TreeBH method, while more
conservative overall, were enhanced in terms of biologically
interpretable findings.

\begin{figure}
\vskip1cm
\centering
\begin{picture}(450,300)
\put(0,0){\includegraphics[width = \linewidth]{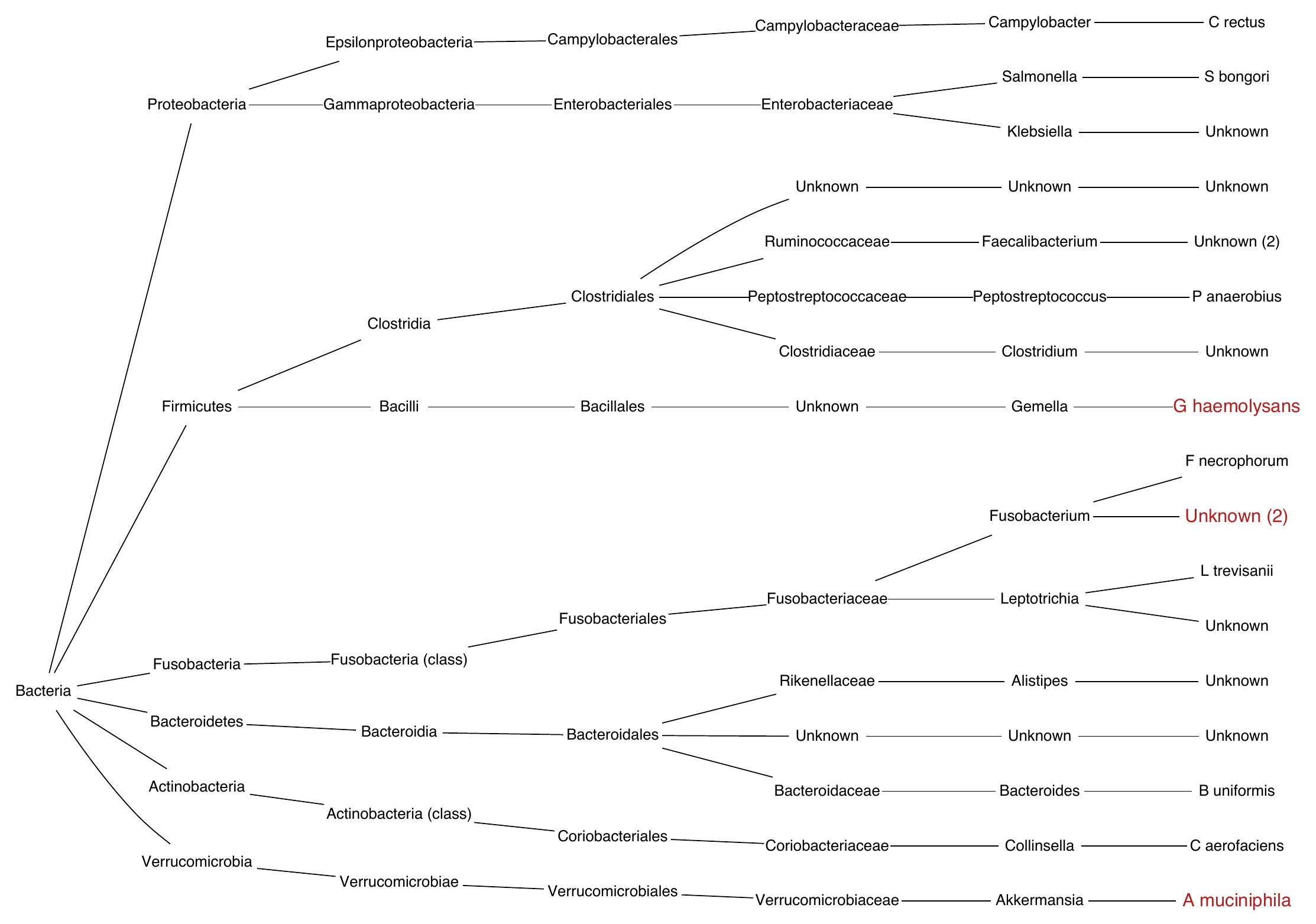}}
\put(0,332){\line(1,0){455}}
\put(0,340){\scriptsize Kingdom}
\put(60,340){\scriptsize Phylum}
\put(130,340){\scriptsize Class}
\put(200,340){\scriptsize Order}
\put(280,340){\scriptsize Family}
\put(360,340){\scriptsize Genus}
\put(430,340){\scriptsize Species}
\end{picture}
\caption{\em Taxonomic tree of selections using TreeBH procedure. Additional discoveries from TreeBH not found using BH are bolded and marked in red.}
\label{fig:microbiome1}
\end{figure}

\section*{Conclusions}
We have introduced a novel hierarchical procedure that allows the selective testing of families of hypotheses that hold more promise for discoveries, potentially increasing power without inflating error rates.
The TreeBH 
procedure
guarantees control of selective level-specific error rates under
certain assumptions on the dependence of
$p$-values within the tree. Our simulations show how
controlling these newly defined selective error rates often implies
control of other level-specific error measures. Furthermore, we have
shown that in scenarios designed to mimic the dependence and
sparsity structure of multi-trait GWAS, the procedure continues to
control (at least approximately) the target error rates.
We also showed  that our procedure can be
modified to control other selective error rates that are not based
on FDR. 
This makes the
hierarchical testing approach we present quite flexible and
adaptable to multiple settings, where the number of hypotheses in
families at some levels, for example, is small enough to make
controlling the FDR an unsatisfactory goal.

The TreeBH procedure  has many points of contact with the
p-filter \cite{FBR15}: both procedures control some form of level-specific FDR and, as implemented in our R package, in both cases one assumes that $p$-values for the finer scale hypotheses are available,
 and these are summarized with Simes method to obtain $p$-values for the group-level hypotheses.
When adapted to our
hierarchical organization of hypotheses, p-filter controls the
level-restricted FDR, but has no consideration for the distinct
families that make up the collection of hypotheses at a given
level.
Therefore, it does not control our selective
error rates, and cannot gain power by adapting to the different sparsity levels
across families.  Finally, the computational
time required by p-filter is substantially higher than that of
TreeBH, making its application in genomic problems difficult.
In a recent preprint \cite{RetJ17}, the p-filter framework has been extended to control FDR for any identified family of hypotheses: the set of families for which one desires this control has, however, to be specified in advance.
Technically it is possible within this framework to control FDR for each family, however this strategy does not guarantee control of our selective error rates because it does not adjust for the data-based selection of families. This is demonstrated in \cite{BB14} for a tree with two levels. Similarly to p-filter (\cite{RetJ17}), the TreeBH procedure can be extended to address weighted FDP (\cite{BH97}), prior knowledge weights on the hypotheses (\cite{GRW06}), adaptivity to the proportion of nulls by incorporating the estimator of \cite{SetS04}, and arbitrary dependence. However, the p-filter algorithm requires the $p$-values for all the hypotheses to be available at the same time, while our framework allows the data to be gathered sequentially based on the results of previous testing: the data for the child hypotheses can be collected only if their parent hypothesis is rejected. Thus our framework can be used in an evolving scientific investigation.

Our procedures are extensions of the proposals in \cite{BB14} and therefore have some of the same merits and limitations.
In particular, we want to point out that a recent work \cite{HetS16} underscores how selective error rates might be controlled with higher power when a conditional approach to testing is possible. They demonstrate the feasibility for two layers structure and under independence at the second level.
In principle, as long as exact conditional distributions can be evaluated, it might be possible to adopt the conditional testing approach in \cite{HetS16}  to also control the selective error rates that we introduced here. Currently it is not obvious to us how it can be done in a general hierarchical structure.

The study of eQTLs across multiple tissues presents such a formidable multiplicity challenge that it has motivated not only us, but also others to investigate novel procedures.
Among the most recent developments we would like to single out \cite{UWS16}, which takes an empirical Bayes approach, 
and whose analysis of GTEx
data has informed some of the displays we presented. Differently from the view-point we adopt, however, \cite{UWS16} does not offer control of the FDR for global discoveries.

\section*{Supplementary Material}
The Supplementary Material contains all proofs, the methodology and theoretical results for general classes of selective error rates and selection rules, and  details of GWAS simulation.
\section*{Acknowledgements}

The authors would like to thank the donors that made possible the
GTEx data collection and all the scientists that participated in the
consortium. Y.B. and C.S. acknowledge  support by NIH grant HG006695; C.P. and C.S. by NIH grant MH101782; C.P.  by NIH/NCI grant P30CA016672; Y.B. by ERC grant PSARPS 294519; and M.B.
by the Israel Science Foundation grant no. 1112/14. We acknowledge
use of the dataset ``STAMPEED: Northern Finland Birth Cohort 1966''
(phs000276.v2.p1).

\bibliography{Mybib}
\label{refs}
\bibliographystyle{ieeetrnew}

\renewcommand{\algorithmcfname}{TreeC}

\newpage
\beginsupplement
\pagenumbering{roman}
\setcounter{page}{1}
\setcounter{equation}{0}
\section*{Supplementary Material}

Note: for citations please refer to the main paper.

\subsection*{Proof of Proposition 1}
Assume that for a given Level $\ell,$ the procedure
guarantees that for each rejected hypothesis at Level $\ell-1$, at
least one hypothesis within the family it indexes is rejected. We
will show that the error measure assigned to each hypothesis at
Level $\ell-1$ by $\text{sFDR}^{\ell-1}$ is smaller or equal than  the
error measure assigned to the same hypothesis by $\text{sFDR}^{\ell},$ when we use the recursive definition
of $\text{sFDR}^{\ell}$s. Let
$H_i$ be a rejected hypothesis at Level $\ell-1,$ and let
$\overline{\mathcal{C}}_i^{\ell-1}$ and $\overline{\mathcal{C}}_i^{\ell}$ be
its error measure according to  $\text{sFDR}^{\ell-1}$ and $\text{sFDR}^{\ell}$ respectively.
For $\text{sFDR}^{\ell-1},$ $\overline{\mathcal{C}}_i=\textbf{I}_\text{\{$H_i$ is
null\}},$ while for $\text{sFDR}^{\ell},$ $\overline{\mathcal{C}}_i=\sum_{j\in
\mathcal{S}_i^\ell}\textbf{I}_\text{\{$H_j$ is
null\}}/|\mathcal{S}_i^\ell|.$ If $H_i$ is a true null hypothesis,
than its error measure according to $\text{sFDR}^{\ell-1}$ is 1. On the
other hand, in this case all the hypotheses in the family indexed by
$H_i$ are also true, and at least one hypothesis in this family is
rejected. This yields that the FDP in $\mathcal{F}_i^{\ell}$ is also 1, i.e.\ $\sum_{j\in
\mathcal{S}_i^\ell}\textbf{I}_\text{\{$H_j$ is
null\}}/|\mathcal{S}_i^\ell|=1$. Thus in the case where
$H_i$ is a true null hypothesis, the error measure assigned to this
hypothesis is 1 both for $\text{sFDR}^\ell$ and $\text{sFDR}^{\ell-1}.$ If $H_i$
is a false null hypothesis, the error measure assigned to it by
$\text{sFDR}^{\ell-1}$ is 0, while the error measure assigned to it by
$\text{sFDR}^{\ell-1}$ is the FDP in $\mathcal{F}_i^{\ell},$ which is 0 if no
false null hypothesis within this family is rejected, and is
positive otherwise. Thus we have proved that for each rejected hypothesis at Level $\ell-1,$ its error measure corresponding to $\text{sFDR}^{\ell-1}$
is smaller or equal to that
corresponding to $\text{sFDR}^{\ell}.$ Since the error measure for each
parent hypothesis is the average across the error measures of all
its child hypotheses, and the final error measure corresponds to the
error measure assigned to the root node of the tree, we obtain
immediately that $\text{sFDR}^{\ell-1}\leq \text{sFDR}^{\ell}.$

\subsection*{Proofs of Theorems 1, 3, and 4}
In all the proofs, we identify the root node hypothesis $H_0$ with $H_{i_0}$ and we identify each family $\mathcal{F}_i^{\ell}$ with the set of indices of hypotheses that belong to $\mathcal{F}_i^{\ell}.$
Before proving Theorems 1, 3, and 4, we will develop an expression for $\text{sFDR}^{\ell},$ $\ell\in\{1, \ldots, L\},$ for TreeBH procedure.

Consider TreeBH procedure with input parameters $(q^{(1)}, \ldots, q^{(L)}).$
For any $\ell\in\{1, \ldots, L\},$ any sequence of indices $i_1, i_2, \ldots, i_{\ell}$ such that
$i_1\in \mathcal{F}^1,$ $i_2\in
\mathcal{F}_{i_1}^2,\ldots,i_{\ell}\in
\mathcal{F}_{i_{\ell-1}}^{\ell},$ and any sequence $(r_1, \ldots, r_{\ell})$ such that $r_k\in\{1, \ldots, |\mathcal{F}_{i_{k-1}}^k|\}$ for each $k\in\{1, \ldots, \ell\},$ let us define  
 the following event: 
\begin{align}C_{r_{1},\ldots, r_{\ell}}^{(i_1,\ldots, i_{\ell})}=\bigcap_{k=1}^{\ell}C_{r_k}^{(i_k)}[q_{i_{k-1}}],\label{Cr}\end{align}
where $C_{r_k}^{(i_k)}[\alpha]$ is defined below, $q_{i_0}=q^{(1)},$ and
\begin{align}q_{i_{k-1}}=\frac{\prod_{j=1}^{k-1}r_{j}}{\prod_{j=1}^{k-1}|\mathcal{F}_{i_{j-1}}^{j}|}q^{(k)}\label{q_i}\end{align} for $k>1,$ i.e.\ $q_{i_{k-1}}$ is the testing level of the BH procedure applied on the $p$-values of family $\mathcal{F}_{i_{k-1}}^{k}$ when the numbers of rejections in  $\mathcal{F}^1,\ldots, \mathcal{F}^{k-1}_{i_{k-2}}$ are
$r_{1}, \ldots, r_{k-1},$ respectively, according to the TreeBH procedure. The event $C_{r_k}^{(i_k)}[\alpha]$ for $0<\alpha<1$ is defined 
as follows:
$$\{\bm p[\mathcal{F}_{i_{k-1}}^{k}\setminus\{i_{k}\}]: p^{(i_k)}_{(r_k-1)}\leq r_k\alpha/|\mathcal{F}^k_{i_{k-1}}|, p^{(i_k)}_{(r_k)}> (r_k+1)\alpha/|\mathcal{F}^k_{i_{k-1}}|,\ldots, p^{(i_k)}_{(|\mathcal{F}^k_{i_{k-1}}|-1 )}>\alpha \},$$
where $\bm p[\mathcal{F}_{i_{k-1}}^{k}\setminus\{i_{k}\}]$ is the vector of $p$-values corresponding to the hypotheses in $\mathcal{F}_{i_{k-1}}^{k}\setminus\{i_{k}\},$ and $p^{(i_k)}_{(1)}\leq p^{(i_k)}_{(2)}\leq \ldots \leq p^{(i_k)}_{(|\mathcal{F}_{i_{k-1}}|-1)}$ is the ordered sequence of these $p$-values. In other words, $C_{r_k}^{(i_k)}[\alpha]$ is the event in which if $H_{i_k}$ is rejected using BH procedure at level $\alpha$ applied on the $p$-values of family $\mathcal{F}_{i_{k-1}}^{k},$ then $r_k-1$ other hypotheses in this family are rejected alongside with it. The event $C_{r_k}^{(i_k)}[\alpha]$ was defined in \cite{BY01} for a single family of hypotheses. If only the $p$-values for Level-$L$ $p$-values are available, and the $p$-values for the hypotheses at the coarser resolution levels are combinations of Level-$L$ $p$-values, then the vector $\bm p[\mathcal{F}_{i_{k-1}}^{k}\setminus\{i_{k}\}]$ is a function of $p$-values for the hypotheses at level $L,$ so the events $C_{r_k}^{(i_k)}[\alpha]$ and $C_{r_1, \ldots, r_{\ell}}^{(i_1, \ldots, i_{\ell})}$ are defined on the space of Level-$L$ $p$-values.
Note that $C_{r_{1},\ldots, r_{\ell}}^{(i_1,\ldots, i_{\ell})}$ is the event in which if $H_{i_1},$ $H_{i_2},\ldots, H_{i_{\ell}}$ are rejected by the TreeBH procedure with input parameters $q^{(1)}, \ldots, q^{(\ell)}$, then the numbers of rejected hypotheses by TreeBH in families $\mathcal{F}^1, \mathcal{F}_{i_1}^2,\ldots, \mathcal{F}_{i_{\ell-1}}^{\ell}$ are $r_{1}, r_{2},\ldots, r_{\ell},$ respectively.

Let us now define \begin{align}B_s^{(i_1, \ldots, i_{\ell})}=\bigcup_{(r_1, \ldots, r_{\ell})\in I_s^{(i_1, \ldots, i_{\ell})}}C_{r_1, r_2,\ldots, r_{\ell}}^{(i_1, \ldots, i_{\ell})}\label{Bevent}\end{align} for $$I_s^{(i_1, \ldots, i_{\ell})}=\{(r_1, \ldots, r_{\ell}): r_k\in\{1,\ldots,|\mathcal{F}_{i_{k-1}}^k|\} \,\,\text{for}\,\,k=1,\ldots, \ell,\,\, \prod_{k=1}^{\ell}r_k=s\}.$$
\begin{lem}\label{lemtreeBH}
For TreeBH procedure with input parameters $(q^{(1)},\ldots, q^{(L)}),$ for each $\ell\in\{1, \ldots, L\}:$
\begin{enumerate}
\item For any sequence of indices $i_1, i_2, \ldots, i_{\ell}$ such that
$i_1\in \mathcal{F}^1,$ $i_2\in
\mathcal{F}_{i_1}^2,\ldots,i_{\ell}\in
\mathcal{F}_{i_{\ell-1}}^{\ell},$ the events $C_{r_1, \ldots, r_{\ell}}^{(i_1, \ldots, i_{\ell})}$ and $C_{r'_1, \ldots, r'_{\ell}}^{(i_1, \ldots, i_{\ell})}$ are disjoint for $(r_1, \ldots, r_{\ell})\neq(r'_1, \ldots, r'_{\ell}).$
\item For any sequence of indices $i_1, i_2, \ldots, i_{\ell}$ such that
$i_1\in \mathcal{F}^1,$ $i_2\in
\mathcal{F}_{i_1}^2,\ldots,i_{\ell}\in
\mathcal{F}_{i_{\ell-1}}^{\ell},$ the events $B_s^{(i_1, \ldots, i_{\ell})}$ and $B_{s'}^{(i_1, \ldots, i_{\ell})}$ are disjoint for $s\neq s',$ and the event $\cup_{s=1}^{\prod_{k=1}^{\ell}|\mathcal{F}_{i_{k-1}}^{k}|}B_s^{(i_1, \ldots, i_{\ell})}$ is the entire sample space.
\item The following inequality holds:
\begin{align*}\text{sFDR}^{\ell}\leq \sum_{i_1\in
\mathcal{F}^1}\sum_{i_2\in
\mathcal{F}_{i_1}^2}\cdots\!\!\!\!\!\!
\sum_{i_{\ell-1}\in
\mathcal{F}_{i_{\ell-2}}^{\ell-1}}\sum_{i_{\ell}\in
\mathcal{F}_{i_{\ell-1}}^{\ell}\cap \mathcal{H}_0}\sum_{s=1}^{\prod_{k=1}^{\ell}|\mathcal{F}_{i_{k-1}}^k|}\frac{1}{s}\mathbb{P}\left( P_{i_{\ell}}\leq \frac{sq^{(\ell)}}{\prod_{k=1}^{\ell}|\mathcal{F}_{i_{k-1}}^k|}, B_s^{(i_1, \ldots, i_{\ell})}\right).\end{align*}
\end{enumerate}
\end{lem}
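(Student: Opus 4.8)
The plan is to dispatch the two combinatorial statements first, then build the bound in Part 3 on top of them. For Part 1, I would fix the branch $(i_1,\ldots,i_\ell)$ and, given two distinct tuples $(r_1,\ldots,r_\ell)\neq(r_1',\ldots,r_\ell')$, let $k^\ast$ be the smallest coordinate at which they disagree. Since the threshold $q_{i_{k-1}}$ in \eqref{q_i} depends only on $r_1,\ldots,r_{k-1}$, the two tuples induce the \emph{same} level $\alpha=q_{i_{k^\ast-1}}$ at step $k^\ast$; the single-family fact from \cite{BY01} that $C_r^{(i)}[\alpha]\cap C_{r'}^{(i)}[\alpha]=\emptyset$ for $r\neq r'$ then makes the $k^\ast$-th factors of the two intersections in \eqref{Cr} disjoint, hence the whole intersections are disjoint.

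Part 2 is then mostly a consequence. Disjointness of $B_s$ and $B_{s'}$ for $s\neq s'$ is immediate, because any $C$-event appearing in $B_s$ is indexed by a tuple of product $s$, so it differs from every tuple indexing $B_{s'}$ and Part 1 applies. For exhaustiveness I would, for a fixed sample point, construct the unique containing tuple sequentially down the branch: at level $1$ the threshold $q^{(1)}$ is fixed and the single-family partition property of \cite{BY01} assigns a unique $r_1$; this determines $q_{i_1}$, which assigns a unique $r_2$ at level $2$, and so on. Because the factors $C_{r_k}^{(i_k)}[\cdot]$ at distinct levels are functions of the $p$-values of distinct families along the branch, the point lies in exactly one $C_{r_1,\ldots,r_\ell}$, so $\bigcup_s B_s$ exhausts the sample space.

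For Part 3 I would start from the weighted form \eqref{selectiveFDP} of $\overline{\text{sFDP}}^\ell$ and note that on the event that $H_{i_1},\ldots,H_{i_\ell}$ are all rejected each $\max\{|\mathcal{S}_{i_{k-1}}^k|,1\}$ equals the realized count $|\mathcal{S}_{i_{k-1}}^k|$, so that
\[
\overline{\text{sFDP}}^\ell=\sum_{\substack{(i_1,\ldots,i_\ell)\\ i_\ell\in\mathcal{H}_0}}\mathbf{1}\{H_{i_1},\ldots,H_{i_\ell}\text{ rejected}\}\prod_{k=1}^\ell\frac{1}{|\mathcal{S}_{i_{k-1}}^k|},
\]
the outer sum running over all branches with $i_k\in\mathcal{F}_{i_{k-1}}^k$ and $i_\ell$ a true null. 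Taking expectations and decomposing the rejection event by the vector of realized family counts $r_k=|\mathcal{S}_{i_{k-1}}^k|$ turns the weight into $1/\prod_k r_k=1/s$. The crux is the level-by-level identity: applying the leave-one-out characterization of BH from \cite{BY01} at each step --- rejecting $H_{i_k}$ with exactly $r_k$ rejections at level $q_{i_{k-1}}$ is the event $\{P_{i_k}\leq r_k q_{i_{k-1}}/|\mathcal{F}_{i_{k-1}}^k|\}\cap C_{r_k}^{(i_k)}[q_{i_{k-1}}]$ --- and using that $\mathcal{F}_{i_{k-1}}^k$ is tested exactly when its parent is rejected, the event ``all of $H_{i_1},\ldots,H_{i_\ell}$ rejected with counts $r_1,\ldots,r_\ell$'' equals
\[
C_{r_1,\ldots,r_\ell}^{(i_1,\ldots,i_\ell)}\cap\bigcap_{k=1}^\ell\Bigl\{P_{i_k}\leq\tfrac{\prod_{j\leq k}r_j}{\prod_{j\leq k}|\mathcal{F}_{i_{j-1}}^j|}\,q^{(k)}\Bigr\},
\]
where the per-level thresholds telescope through \eqref{q_i}.

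The final step, and the only inequality, is to \emph{discard} the constraints on $P_{i_1},\ldots,P_{i_{\ell-1}}$, keeping only the $k=\ell$ constraint $P_{i_\ell}\leq s\,q^{(\ell)}/\prod_k|\mathcal{F}_{i_{k-1}}^k|$ (using $\prod_{j\leq\ell}r_j=s$); this enlarges each event and raises its probability. Regrouping the tuples $(r_1,\ldots,r_\ell)$ by their common product $s$ and invoking Part 1 to replace the sum of probabilities of the disjoint $C$-events by the probability of their union $B_s^{(i_1,\ldots,i_\ell)}$ produces precisely the claimed bound. I expect the main obstacle to be the level-by-level decomposition: one must check that the BH leave-one-out identity composes correctly along the branch despite the data-dependent thresholds $q_{i_{k-1}}$, that the product of per-level thresholds telescopes exactly to $s\,q^{(\ell)}/\prod_k|\mathcal{F}_{i_{k-1}}^k|$, and that the $\max\{\cdot,1\}$ bookkeeping and the enlargement from dropping intermediate constraints are handled rigorously.
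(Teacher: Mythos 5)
Your proposal is correct and follows essentially the same route as the paper's own proof: Part 1 via the first coordinate of disagreement and the coincidence of the data-dependent thresholds up to that coordinate, Part 2 by reduction to Part 1 plus the sequential (level-by-level) construction of the unique containing tuple, and Part 3 by rewriting $\text{sFDR}^{\ell}$ as a sum over branches and count-vectors, invoking the BY01 leave-one-out characterization of BH rejection at each level with the telescoped threshold $r_k q_{i_{k-1}}/|\mathcal{F}_{i_{k-1}}^k| = \bigl(\prod_{j\le k}r_j/\prod_{j\le k}|\mathcal{F}_{i_{j-1}}^j|\bigr)q^{(k)}$, discarding the constraints on $P_{i_1},\ldots,P_{i_{\ell-1}}$, and regrouping the disjoint $C$-events into $B_s$. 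The obstacles you flag (composition of the leave-one-out identity along the branch, telescoping, the $\max\{\cdot,1\}$ bookkeeping) are handled in the paper exactly as you anticipate.
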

\subsubsection*{Proof of Theorem 1}
The proof uses the techniques of the proof of Theorem 1.2 in \cite{BY01} and the proof of Theorem 3 in \cite{BB14}.
Let $\ell\in\{1, \ldots, L\}$ be an arbitrary fixed level. Let $\mathcal{H}_0$ be the set of indices of true null hypotheses. Using Lemma \ref{lemtreeBH} we have the following inequality for the TreeBH procedure:
\begin{align}
\text{sFDR}^{\ell}\leq\sum_{i_1\in
\mathcal{F}^1}\sum_{i_2\in
\mathcal{F}_{i_1}^2}\cdots\!\!\!\!\!\!
\sum_{i_{\ell-1}\in
\mathcal{F}_{i_{\ell-2}}^{\ell-1}}\sum_{i_{\ell}\in
\mathcal{F}_{i_{\ell-1}}^{\ell}\cap \mathcal{H}_0}\sum_{s=1}^{\prod_{k=1}^{\ell}|\mathcal{F}_{i_{k-1}}^k|}\frac{1}{s}\mathbb{P}\left( P_{i_{\ell}}\leq \frac{sq^{(\ell)}}{\prod_{k=1}^{\ell}|\mathcal{F}_{i_{k-1}}^k|}, B_s^{(i_1, \ldots, i_{\ell})}\right)\label{mainprds}
\end{align}
Note that for any sequence of indices $(i_1,\ldots, i_{\ell})$ such that $i_1\in \mathcal{F}_{1},$ $i_2\in \mathcal{F}_{i_1}^2,$ $\ldots,$ $i_{\ell-1}\in\mathcal{F}_{i_{\ell-2}}^{\ell-1},$ $i_{\ell}\in \mathcal{F}_{i_{\ell-1}}^{\ell}\cap \mathcal{H}_0,$  and $s\in\{1, \ldots, \prod_{k=1}^{\ell}|\mathcal{F}_{i_{k-1}}^k|\},$
it holds:
\begin{align}
\mathbb{P}\left( P_{i_{\ell}}\leq \frac{sq^{(\ell)}}{\prod_{k=1}^{\ell}|\mathcal{F}_{i_{k-1}}^k|}, B_s^{(i_1, \ldots, i_{\ell})}\right)=&\mathbb{P}\left(B_s^{(i_1, \ldots, i_{\ell})}\mid  P_{i_{\ell}}\leq \frac{sq^{(\ell)}}{\prod_{k=1}^{\ell}|\mathcal{F}_{i_{k-1}}^k|}\right)\mathbb{P}\left(P_{i_{\ell}}\leq
\frac{sq^{(\ell)}}{\prod_{k=1}^{\ell}|\mathcal{F}_{i_{k-1}}^k|} \right)\notag\\\leq&
\mathbb{P}\left(B_s^{(i_1, \ldots, i_{\ell})}\mid  P_{i_{\ell}}\leq \frac{sq^{(\ell)}}{\prod_{k=1}^{\ell}|\mathcal{F}_{i_{k-1}}^k|}\right)
\frac{sq^{(\ell)}}{\prod_{k=1}^{\ell}|\mathcal{F}_{i_{k-1}}^k|},\label{unif}
\end{align}
where the inequality in (\ref{unif}) follows from the fact that $P_{i_{\ell}}$ is a $p$-value of a true null hypothesis, therefore it satisfies $\mathbb{P}(P_{i_{\ell}}\leq x)\leq x$ for all $x\in[0,1].$
Combining (\ref{unif}) with (\ref{mainprds}) we obtain the following inequality:
\begin{align}
&\text{sFDR}^{\ell}\leq\notag \\&q^{(\ell)}\sum_{i_1\in
\mathcal{F}^1}\sum_{i_2\in
\mathcal{F}_{i_1}^2}\cdots\!\!\!\!\!\!
\sum_{i_{\ell-1}\in
\mathcal{F}_{i_{\ell-2}}^{\ell-1}}\sum_{i_{\ell}\in
\mathcal{F}_{i_{\ell-1}}^{\ell}\cap \mathcal{H}_0}\frac{1}{\prod_{k=1}^{\ell}|\mathcal{F}_{i_{k-1}}^k|}\sum_{s=1}^{\prod_{k=1}^{\ell}|\mathcal{F}_{i_{k-1}}^k|}\mathbb{P}\left(B_s^{(i_1, \ldots, i_{\ell})}\mid P_{i_{\ell}}\leq \frac{sq^{(\ell)}}{\prod_{k=1}^{\ell}|\mathcal{F}_{i_{k-1}}^k|}\right)\label{int-fdr}
\end{align}
It is enough to prove that for each sequence of indices $(i_1, \ldots, i_{\ell})$  such that $i_1\in \mathcal{F}_{1},$ $i_2\in \mathcal{F}_{i_1}^2,$ $\ldots,$ $i_{\ell-1}\in\mathcal{F}_{i_{\ell-2}}^{\ell-1},$ $i_{\ell}\in \mathcal{F}_{i_{\ell-1}}^{\ell}\cap \mathcal{H}_0,$  and $s\in\{1, \ldots, \prod_{k=1}^{\ell}|\mathcal{F}_{i_{k-1}}^k|\},$
\begin{align}
\sum_{s=1}^{\prod_{k=1}^{\ell}|\mathcal{F}_{i_{k-1}}^k|}\mathbb{P}\left(B_s^{(i_1, \ldots, i_{\ell})}\mid P_{i_{\ell}}\leq \frac{sq^{(\ell)}}{\prod_{k=1}^{\ell}|\mathcal{F}_{i_{k-1}}^k|}\right)\leq 1,\label{enough}
\end{align}
because combining (\ref{enough}) with (\ref{int-fdr}) we obtain
\begin{align*}
&\text{sFDR}^{\ell}\leq q^{(\ell)}\sum_{i_1\in
\mathcal{F}^1}\sum_{i_2\in
\mathcal{F}_{i_1}^2}\cdots\!\!\!\!\!\!
\sum_{i_{\ell-1}\in
\mathcal{F}_{i_{\ell-2}}^{\ell-1}}\sum_{i_{\ell}\in
\mathcal{F}_{i_{\ell-1}}^{\ell}\cap \mathcal{H}_0}\frac{1}{\prod_{k=1}^{\ell}|\mathcal{F}_{i_{k-1}}^k|}\leq
\\&q^{(\ell)}\sum_{i_1\in
\mathcal{F}^1}\sum_{i_2\in
\mathcal{F}_{i_1}^2}\cdots\!\!\!\!\!\!
\sum_{i_{\ell-1}\in
\mathcal{F}_{i_{\ell-2}}^{\ell-1}}\sum_{i_{\ell}\in
\mathcal{F}_{i_{\ell-1}}^{\ell}}\frac{1}{\prod_{k=1}^{\ell}|\mathcal{F}_{i_{k-1}}^k|}=q^{(\ell)}.
\end{align*}
For each sequence of indices $(i_1, \ldots, i_{\ell})$ such that $i_1\in \mathcal{F}^1,$ $i_2\in \mathcal{F}_{i_1}^2,$ $\ldots,$ $i_{\ell}\in \mathcal{F}_{i_{\ell-1}}^{\ell},$  and $t\in\{1, \ldots, \prod_{k=1}^{\ell}|\mathcal{F}_{i_{k-1}}^k|\},$
we define the following event:
\begin{align}
A_t^{(i_1, \ldots, i_{\ell})}=\bigcup_{s=1}^{t}B_s^{(i_1, \ldots, i_{\ell})}.\label{at}
\end{align}
In order to prove inequality (\ref{enough}) we will prove the following lemma.
\begin{lem}\label{incr}
For each sequence of indices $(i_1, \ldots, i_{\ell})$ such that $i_1\in \mathcal{F}^1,$ $i_2\in \mathcal{F}_{i_1}^2,$ $\ldots,$ $i_{\ell}\in \mathcal{F}_{i_{\ell-1}}^{\ell},$  and $t\in\{1, \ldots, \prod_{k=1}^{\ell}|\mathcal{F}_{i_{k-1}}^k|\},$  $A_t^{(i_1, \ldots, i_{\ell})}$
is an increasing set.
\end{lem}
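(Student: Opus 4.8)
The plan is to read off the increasing-set property from parts 1 and 2 of Lemma \ref{lemtreeBH}. Those parts guarantee that, for a fixed index sequence $(i_1,\ldots,i_\ell)$, the events $C_{r_1,\ldots,r_\ell}^{(i_1,\ldots,i_\ell)}$ are pairwise disjoint and tile the whole sample space. Hence every point $\bm p$ lies in exactly one of them, so there is a well-defined vector of rejection counts $\big(r_1(\bm p),\ldots,r_\ell(\bm p)\big)$ attached to $\bm p$, and $A_t^{(i_1,\ldots,i_\ell)}=\{\bm p:\prod_{k=1}^{\ell}r_k(\bm p)\le t\}$. Proving that $A_t$ is increasing is therefore equivalent to proving that $s(\bm p)=\prod_{k=1}^{\ell}r_k(\bm p)$ is nonincreasing in each coordinate of $\bm p$ on which it depends, namely the $p$-values of the families $\mathcal{F}_{i_{k-1}}^{k}\setminus\{i_k\}$, $k=1,\ldots,\ell$; on all remaining coordinates $A_t$ is constant, hence trivially increasing there.

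The core consists of two single-family monotonicity facts, both essentially contained in the proof of Theorem 1.2 of \cite{BY01}. First, for a fixed threshold $\alpha$, the count $r_k$ singled out by $C_{r_k}^{(i_k)}[\alpha]$ — the number of BH rejections in $\mathcal{F}_{i_{k-1}}^{k}$ at level $\alpha$ when $H_{i_k}$ is counted as rejected — is a nonincreasing function of the other $p$-values $\bm p[\mathcal{F}_{i_{k-1}}^{k}\setminus\{i_k\}]$: raising a $p$-value can only move an order statistic above its BH critical value, never below, so it can only delete rejections. Second, for fixed $p$-values, $r_k$ is nondecreasing in $\alpha$, since enlarging $\alpha$ raises every critical value $j\alpha/|\mathcal{F}_{i_{k-1}}^{k}|$ and can only create rejections. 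I would verify both directly from the rank inequalities in the definition of $C_{r_k}^{(i_k)}[\alpha]$.

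With these facts I would run an induction on the level $k$ to carry monotonicity through the recursively defined thresholds. Fix $\bm p\le\bm p'$ componentwise and write $r_k,r_k'$ for the counts at $\bm p,\bm p'$. For $k=1$ the threshold is the constant $q^{(1)}$, so the first fact gives $r_1'\le r_1$. Assume $r_j'\le r_j$ for all $j<k$. Since the level-$k$ threshold $q_{i_{k-1}}=\frac{\prod_{j<k}r_j}{\prod_{j<k}|\mathcal{F}_{i_{j-1}}^{j}|}\,q^{(k)}$ is nondecreasing in each $r_j$, the induction hypothesis yields $q_{i_{k-1}}(\bm p')\le q_{i_{k-1}}(\bm p)$. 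Thus at level $k$ the threshold is smaller and the family $p$-values are larger under $\bm p'$, and chaining the two single-family facts gives $r_k'\le r_k$. Consequently $s(\bm p')\le s(\bm p)$, so the condition $s\le t$ is preserved under raising $\bm p$, which is exactly the increasing-set property. Note that no ordering of the $q^{(\ell)}$ is used.

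The main obstacle I anticipate is the careful bookkeeping of the two single-family facts under the purely rank-based, conditional definition of $C_{r_k}^{(i_k)}[\alpha]$, in which $H_{i_k}$ is excluded and only hypothetically rejected, together with checking that the threshold perturbation and the $p$-value perturbation both act on $r_k$ in the same downward direction, so that the inductive coupling across levels never reverses sign. A minor additional point arises if the coarser $p$-values are instead obtained from the Level-$L$ $p$-values by a combination rule: any monotone nondecreasing rule (in particular Simes) maps an increase of the Level-$L$ $p$-values to an increase of the family $p$-values, so the argument applies verbatim on the Level-$L$ coordinate space.
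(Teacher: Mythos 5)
Your proof is correct, but it takes a genuinely different route from the paper's. The paper never forms the count function $s(\bm p)$ and never runs an induction over levels; instead it introduces the cumulative events $D_{r_k}^{(i_k)}[q_{i_{k-1}}]=\bigcup_{j=1}^{r_k}C_{j}^{(i_k)}[q_{i_{k-1}}]$, which are defined purely by strict lower bounds on order statistics and are therefore manifestly increasing, and proves the set identity $A_t^{(i_1,\ldots,i_\ell)}=\bigcup_{s=1}^{t}\bigcup_{(r_1,\ldots,r_\ell)\in I_s}\bigcap_{k=1}^{\ell}D_{r_k}^{(i_k)}[q_{i_{k-1}}]$, after which the lemma follows because unions and intersections of increasing sets are increasing. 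The mathematical engine is the same in both arguments: the nontrivial inclusion in the paper's identity (from an occurrence of $\bigcap_{k}D_{r_k}^{(i_k)}$ extract $\tilde r_k\le r_k$ with $\bigcap_{k}C_{\tilde r_k}^{(i_k)}$ occurring) is precisely your recursive construction, resting on your two single-family facts --- the BH rejection count is nonincreasing in the family $p$-values and nondecreasing in the testing level --- together with the monotonicity of $q_{i_{k-1}}$ in the earlier counts. Your version makes explicit the cross-level coupling that the paper compresses into a single ``it follows that there exists'' sentence, and your pointwise statement $s(\bm p')\le s(\bm p)$ for $\bm p\le \bm p'$ delivers all the sets $A_t$ at once; it also relies on items 1 and 2 of Lemma \ref{lemtreeBH} (to obtain the partition defining $r_k(\bm p)$), which is legitimate and non-circular since that lemma is proved independently of Lemma \ref{incr}, whereas the paper's proof of the present lemma is self-contained. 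What the paper's $D$-representation buys is direct reuse: in the proofs of Theorem 3 and of the supplementary Theorem \ref{thm2-gen}, the identical representation is invoked to show that $A_s$ is an increasing set of the Level-$L$ $p$-values when coarser $p$-values arise from monotone nondecreasing combination rules --- the extension you address, correctly, in your closing paragraph. Finally, you are right that no ordering of the targets $q^{(\ell)}$ is used anywhere in this lemma.
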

Let  $(i_1,\ldots, i_{\ell})$ be an arbitrary sequence of indices such that $i_1\in \mathcal{F}_{1},$ $i_2\in \mathcal{F}_{i_1}^2,$ $\ldots,$ $i_{\ell-1}\in\mathcal{F}_{i_{\ell-2}}^{\ell-1},$ $i_{\ell}\in \mathcal{F}_{i_{\ell-1}}^{\ell}\cap \mathcal{H}_0.$ Using the result of Lemma \ref{incr} and the assumption that
the $p$-values corresponding to the hypotheses in the set $\mathcal{F}_{i_{\ell-1}}^{\ell}\cup\left[\cup_{k=1}^{\ell-1}\left(\mathcal{F}_{i_{k-1}}^{k}\setminus\{i_k\}\right)\right]$
 satisfy the PRDS property on the subset of $p$-values corresponding to hypotheses with indices in $\mathcal{F}_{i_{\ell-1}}^{\ell}\cap \mathcal{H}_0,$ we obtain for $s=1,\ldots, \prod_{k=1}^{\ell}|\mathcal{F}_{i_{k-1}}^k|-1,$
\begin{align}\label{prds}
\mathbb{P}\left(A_s^{(i_1, \ldots, i_{\ell})}|P_{i_{\ell}}\leq \frac{sq^{(\ell)}}{\prod_{k=1}^{\ell}|\mathcal{F}_{i_{k-1}}^k|}\right)\leq \mathbb{P}\left(A_s^{(i_1, \ldots, i_{\ell})}|P_{i_{\ell}}\leq \frac{(s+1)q^{(\ell)}}{\prod_{k=1}^{\ell}|\mathcal{F}_{i_{k-1}}^k|}\right).
\end{align}
The events $B_s^{(i_1, \ldots, i_{\ell})}$ and $B_{s'}^{(i_1, \ldots, i_{\ell})}$ are disjoint for $s\neq s',$ due to item 2 of Lemma \ref{lemtreeBH}. Therefore we obtain for any $t=1,\ldots, \prod_{k=1}^{\ell}|\mathcal{F}_{i_{k-1}}^k|$ and $\alpha\in[0,1]$
\begin{align}
\mathbb{P}\left(A_t^{(i_1, \ldots, i_{\ell})}\mid P_{i_{\ell}}\leq \alpha\right)=\sum_{s=1}^{t}\mathbb{P}\left(B_s^{(i_1, \ldots, i_{\ell})}\mid P_{i_{\ell}}\leq \alpha\right).\label{disjoint}
\end{align}
Using (\ref{prds}) and (\ref{disjoint}), we obtain for $s=1, \ldots, \prod_{k=1}^{\ell}|\mathcal{F}_{i_{k-1}}^k|-1,$
\begin{align}
&\mathbb{P}\left(A_s^{(i_1, \ldots, i_{\ell})}\mid P_{i_{\ell}}\leq \frac{sq^{(\ell)}}{\prod_{k=1}^{\ell}|\mathcal{F}_{i_{k-1}}^k|}\right)+\mathbb{P}\left(B_{s+1}^{(i_1, \ldots, i_{\ell})}\mid P_{i_{\ell}}\leq \frac{(s+1)q^{(\ell)}}{\prod_{k=1}^{\ell}|\mathcal{F}_{i_{k-1}}^k|}\right)\leq\label{mainineq}\\&\mathbb{P}\left(A_s^{(i_1, \ldots, i_{\ell})}\mid P_{i_{\ell}}\leq \frac{(s+1)q^{(\ell)}}{\prod_{k=1}^{\ell}|\mathcal{F}_{i_{k-1}}^k|}\right)+\mathbb{P}\left(B_{s+1}^{(i_1, \ldots, i_{\ell})}\mid P_{i_{\ell}}\leq \frac{(s+1)q^{(\ell)}}{\prod_{k=1}^{\ell}|\mathcal{F}_{i_{k-1}}^k|}\right)=\notag\\&\mathbb{P}\left(A_{s+1}^{(i_1, \ldots, i_{\ell})}\mid P_{i_{\ell}}\leq \frac{(s+1)q^{(\ell)}}{\prod_{k=1}^{\ell}|\mathcal{F}_{i_{k-1}}^k|}\right).\notag
\end{align}
Note that \begin{align}\mathbb{P}\left(A_1^{(i_1, \ldots, i_{\ell})}\mid P_{i_{\ell}}\leq \frac{q^{(\ell)}}{\prod_{k=1}^{\ell}|\mathcal{F}_{i_{k-1}}^k|}\right)=\mathbb{P}\left(B_1^{(i_1, \ldots, i_{\ell})}\mid P_{i_{\ell}}\leq \frac{q^{(\ell)}}{\prod_{k=1}^{\ell}|\mathcal{F}_{i_{k-1}}^k|}\right).\label{simple}\end{align}
Using (\ref{simple}) along with (\ref{disjoint}) and applying the inequality in (\ref{mainineq}) repeatedly, we obtain
\begin{align*}
&\sum_{s=1}^{\prod_{k=1}^{\ell}|\mathcal{F}_{i_{k-1}}^k|}\mathbb{P}\left(B_s^{(i_1, \ldots, i_{\ell})}\mid P_{i_{\ell}}\leq \frac{sq^{(\ell)}}{\prod_{k=1}^{\ell}|\mathcal{F}_{i_{k-1}}^k|}\right)\leq
\mathbb{P}\left(A_{\prod_{k=1}^{\ell}|\mathcal{F}_{i_{k-1}}^k|}^{(i_1, \ldots, i_{\ell})}\mid P_{i_{\ell}}\leq q^{(\ell)}\right)=1,
\end{align*}
where the last equality follows from the fact that $A_{\prod_{k=1}^{\ell}|\mathcal{F}_{i_{k-1}}^k|}^{(i_1, \ldots, i_{\ell})}$ is the entire sample space, due to item 2 of Lemma \ref{lemtreeBH}. This gives inequality (\ref{enough}), which completes the proof.
\begin{proof}[Proof of Lemma \ref{incr}]
Let $i_1\in \mathcal{F}^1,$ $i_2\in \mathcal{F}_{i_1}^2,$ $\ldots,$ $i_{\ell}\in \mathcal{F}_{i_{\ell-1}}^{\ell},$  and $s\in\{1, \ldots, \prod_{k=1}^{\ell}|\mathcal{F}_{i_{k-1}}^k|\}$ be arbitrary fixed. For simplicity, throughout this proof we denote the set $I_t^{(i_1, \ldots, i_{\ell})}$ by $I_t.$ Let us first show that
\begin{align*}
A_s^{(i_1, \ldots, i_{\ell})}=\bigcup_{t=1}^{s}\bigcup_{(r_1, \ldots, r_{\ell})\in I_t}\bigcap_{k=1}^{\ell}D_{r_k}^{(i_k)}[q_{i_{k-1}}],
\end{align*}
where $q_{i_{k-1}}$ is defined in (\ref{q_i}),
and
$$D_{r_k}^{(i_k)}[q_{i_{k-1}}]=\bigcup_{j=1}^{r_k}C_{j}^{(i_k)}[q_{i_{k-1}}],$$
i.e.\
$$D_{r_k}^{(i_k)}[q_{i_{k-1}}]=\left\{\bm p[\mathcal{F}_{i_{k-1}}^k\setminus\{i_k\}]: p_{(r_k)}^{(i_k)}>\frac{(r_k+1)q_{i_{k-1}}}{|\mathcal{F}_{i_{k-1}}^k|},  p_{(r_k+1)}^{(i_k)}>\frac{(r_k+2)q_{i_{k-1}}}{|\mathcal{F}_{i_{k-1}}^k|}, \ldots,  p_{(|\mathcal{F}_{i_{k-1}}^k|-1)}^{(i_k)}>q_{i_{k-1}}\right\}.$$
Using the definition of $B_s^{(i_1, \ldots, i_{\ell})}$ and $C_{r_1, \ldots, r_{\ell}}^{(i_1, \ldots, i_{\ell})},$ we obtain that
\begin{align*}
A_s^{(i_1, \ldots, i_{\ell})}=\bigcup_{t=1}^{s}\bigcup_{(r_1, \ldots, r_{\ell})\in I_t}\bigcap_{k=1}^{\ell} C_{r_k}^{(i_k)}[q_{i_{k-1}}],
\end{align*}
therefore we need to show that
\begin{align}\bigcup_{t=1}^{s}\bigcup_{(r_1, \ldots, r_{\ell})\in I_t}\bigcap_{k=1}^{\ell} C_{r_k}^{(i_k)}[q_{i_{k-1}}]=\bigcup_{t=1}^{s}\bigcup_{(r_1, \ldots, r_{\ell})\in I_t}\bigcap_{k=1}^{\ell}D_{r_k}^{(i_k)}[q_{i_{k-1}}].\label{equiv} \end{align}
Obviously, for each $t\leq s$ and sequence $(r_1, \ldots, r_{\ell})\in I_t,$ $C_{r_k}^{(i_k)}[q_{i_{k-1}}]\subseteq D_{r_k}^{(i_k)}[q_{i_{k-1}}]$ for $k=1, \ldots, \ell,$ therefore
$$ \bigcup_{t=1}^{s}\bigcup_{(r_1, \ldots, r_{\ell})\in I_t}\bigcap_{k=1}^{\ell} C_{r_k}^{(i_k)}[q_{i_{k-1}}]\subseteq\bigcup_{t=1}^{s}\bigcup_{(r_1, \ldots, r_{\ell})\in I_t}\bigcap_{k=1}^{\ell}D_{r_k}^{(i_k)}[q_{i_{k-1}}].$$
Let us now show that
 \begin{align}\bigcup_{t=1}^{s}\bigcup_{(r_1, \ldots, r_{\ell})\in I_t}\bigcap_{k=1}^{\ell} C_{r_k}^{(i_k)}[q_{i_{k-1}}]\supseteq\bigcup_{t=1}^{s}\bigcup_{(r_1, \ldots, r_{\ell})\in I_t}\bigcap_{k=1}^{\ell}D_{r_k}^{(i_k)}[q_{i_{k-1}}].\label{subset} \end{align}
Assume that the event $\bigcup_{t=1}^{s}\bigcup_{(r_1, \ldots, r_{\ell})\in I_t}\bigcap_{k=1}^{\ell}D_{r_k}^{(i_k)}[q_{i_{k-1}}]$ occurs. Then there exists $t_0\leq s$ and a sequence of integers $(r_1, \ldots, r_{\ell})\in I_{t_0}$ such that $\bigcap_{k=1}^{\ell}D_{r_k}^{(i_k)}[q_{i_{k-1}}]$ occurs. It follows that there exists a sequence of integers $(\tilde{r}_1, \ldots, \tilde{r}_{\ell})$ such that $1\leq \tilde{r}_k\leq r_k $ for each $k\leq \ell,$ and the event
$\bigcap_{k=1}^{\ell}C_{\tilde{r}_k}^{(i_k)}[q_{i_{k-1}}]$ occurs. Let $\tilde{t}=\prod_{k=1}^{\ell}\tilde{r}_k.$ Obviously, $\tilde{t}\leq \prod_{k=1}^{\ell}r_k=t_0\leq s,$ therefore the event $\bigcup_{t=1}^{s}\bigcup_{(r_1, \ldots, r_{\ell})\in I_t}\bigcap_{k=1}^{\ell} C_{r_k}^{(i_k)}[q_{i_{k-1}}]$ occurs. Thus we have proved (\ref{subset}), and therefore (\ref{equiv}) holds.

It is easy to see that for each $t\leq s$ and sequence $(r_1, \ldots, r_{\ell})\in I_t,$ $D_{r_k}^{(i_k)}[q_{i_{k-1}}]$ is an increasing set for each $k\in\{1, \ldots, \ell\}.$ Both intersection of increasing sets and union of  increasing sets are also increasing sets, therefore $A_s^{(i_1, \ldots, i_{\ell})}$ is an increasing set. This completes the proof.
\end{proof}
\subsubsection*{Proof of Theorem 3}
\paragraph{Proof of item 1}
 Let $\bm p^{L}$ denote the vector of $p$-values for the Level-$L$ hypotheses. In the setting of Theorem 3, the events $C_{r_1, \ldots, r_{L}}^{(i_1, \ldots, i_L)}$ and $B_s^{(i_1, \ldots, i_L)}$ are defined on the space of $\bm p^{L}.$ According to the proof of Theorem 1, it is enough to prove inequality (\ref{enough}) for $\ell=L.$ Let us first show that the set $A_s^{(i_1, \ldots, i_{L})}$ defined in (\ref{at}) is an increasing set of $\bm p^L$ for each sequence of indices $(i_1, \ldots, i_{L})$ such that $i_1\in \mathcal{F}^1,$ $i_2\in \mathcal{F}_{i_1}^2,$ $\ldots,$ $i_{L}\in \mathcal{F}_{i_{L-1}}^{L}$  and $s\in\{1, \ldots, \prod_{k=1}^{L}|\mathcal{F}_{i_{k-1}}^k|\}.$ Let us fix such $(i_1, \ldots, i_{L})$ and $s.$ It is shown in the proof of Lemma \ref{incr} that
\begin{align*}
A_s^{(i_1, \ldots, i_{L})}=\bigcup_{t=1}^{s}\bigcup_{(r_1, \ldots, r_{L})\in I_t}\bigcap_{k=1}^{L}D_{r_k}^{(i_k)}[q_{i_{k-1}}],
\end{align*}
where
$$D_{r_k}^{(i_k)}[q_{i_{k-1}}]=\left\{\bm p[\mathcal{F}_{i_{k-1}}^k\setminus\{i_k\}]: p_{(r_k)}^{(i_k)}>\frac{(r_k+1)q_{i_{k-1}}}{|\mathcal{F}_{i_{k-1}}^k|},  p_{(r_k+1)}^{(i_k)}>\frac{(r_k+2)q_{i_{k-1}}}{|\mathcal{F}_{i_{k-1}}^k|}, \ldots,  p_{(|\mathcal{F}_{i_{k-1}}^k|-1)}^{(i_k)}>q_{i_{k-1}}\right\}.$$
As shown in the proof of Lemma \ref{incr}, in order to show that $A_s^{(i_1, \ldots, i_{L})}$ is an increasing set, it is enough to show that $D_{r_k}^{(i_k)}[q_{i_{k-1}}]$ is an increasing set for each $t\leq s,$ $(r_1, \ldots, r_L)\in I_t,$ and $k=1, \ldots, L.$ Let us fix such $t,$ $(r_1, \ldots, r_L),$ and $k.$
Let $\bm p_1^L\leq \bm p_2^L$ be two vectors of Level-$L$ $p$-values, where the inequality is understood coordinate-wise. Let
$\bm p_1[\mathcal{F}_{i_{k-1}}^k\setminus\{i_k\}]$ and $\bm p_2[\mathcal{F}_{i_{k-1}}^k\setminus\{i_k\}] $ be the vectors of $p$-values for the hypotheses with indices in $\mathcal{F}_{i_{k-1}}^k\setminus\{i_k\},$
based on $\bm p_1^L$ and $\bm p_2^L$ respectively. Assume that the event $D_{r_k}^{(i_k)}[q_{i_{k-1}}]$
occurs for $\bm p_1[\mathcal{F}_{i_{k-1}}^k\setminus\{i_k\}].$
   Since each of the combination rules used for computing $p$-values at levels $1,\ldots,L-1$ is a
monotone non-decreasing function of each of its input $p$-values, it follows that
   $\bm p_1[\mathcal{F}_{i_{k-1}}^k\setminus\{i_k\}]\leq \bm p_2[\mathcal{F}_{i_{k-1}}^k\setminus\{i_k\}].$ Therefore, using the definition of $D_{r_k}^{(i_k)}[q_{i_{k-1}}],$ we obtain that $D_{r_k}^{(i_k)}[q_{i_{k-1}}]$ occurs for $\bm p_2[\mathcal{F}_{i_{k-1}}^k\setminus\{i_k\}]$ as well. Thus we have shown that
   $D_{r_k}^{(i_k)}[q_{i_{k-1}}]$ is an increasing set of $\bm p^L,$ which yields that  $A_s^{(i_1, \ldots, i_{L})}$ is an increasing set of $\bm p^L.$ Using this fact and the assumption that $\bm p^L$ satisfies the PRDS property, we obtain that
\begin{align*}
\mathbb{P}\left(A_s^{(i_1, \ldots, i_{L})}|P_{i_{L}}\leq \frac{sq^{(L)}}{\prod_{k=1}^{L}|\mathcal{F}_{i_{k-1}}^k|}\right)\leq \mathbb{P}\left(A_s^{(i_1, \ldots, i_{L})}|P_{i_{L}}\leq \frac{(s+1)q^{(L)}}{\prod_{k=1}^{L}|\mathcal{F}_{i_{k-1}}^k|}\right)
\end{align*}
for each sequence of indices $(i_1,\ldots, i_{L})$  such that $i_1\in \mathcal{F}_{1}, i_2\in \mathcal{F}_{i_1}^2,\ldots, i_{L-1}\in\mathcal{F}_{i_{L-2}}^{L-1},$ $i_{L}\in \mathcal{F}_{i_{L-1}}^{L}\cap \mathcal{H}_0,$ and $s\in\{1, \ldots, \prod_{k=1}^{L}|\mathcal{F}_{i_{k-1}}^k|-1\}.$ We have shown that inequality (\ref{prds}) holds for $\ell=L.$
The inequality (\ref{enough}) for $\ell=L$ follows using the derivations following inequality (\ref{prds}) in the proof of Theorem 1.
Thus we have proved that $\text{sFDR}^{L}\leq q^{(L)}.$

\paragraph{Proof of item 2} When $q^{(1)}= q^{(2)}= \ldots = q^{(L)}=q$ and Simes' method is used for obtaining $p$-values for the intersection hypotheses, the TreeBH procedure satisfies the consonance property:
for all the levels $\ell= 1,\ldots,L-1,$ it is guaranteed that each
rejected parent hypothesis has at least one rejected child hypothesis, as explained in Section 3 of the main manuscript. Therefore, we obtain from Proposition 1 that $\text{sFDR}^{\ell}\leq \text{sFDR}^{L}$ for $\ell=1, \ldots, L-1.$ Since Simes' combination rule is non-decreasing in each of its input $p$-values, it follows from item 1 that $\text{sFDR}^{L}\leq q^{(L)}=q,$ therefore we obtain $\text{sFDR}^{\ell}\leq q$ for $\ell=1, \ldots, L.$
\subsubsection*{Proof of Theorem 4}
Similarly to the proof of Lemma \ref{lemtreeBH}, it can be shown that for the modified variant of TreeBH for arbitrary dependence, for each $\ell\in\{1, \ldots, L\},$ the following inequality holds:
\begin{align}&\text{sFDR}^{\ell}\leq\notag\\&\sum_{i_1\in
\mathcal{F}^1}\sum_{i_2\in
\mathcal{F}_{i_1}^2}\cdots\!\!\!\!\!\!
\sum_{i_{\ell-1}\in
\mathcal{F}_{i_{\ell-2}}^{\ell-1}}\sum_{i_{\ell}\in
\mathcal{F}_{i_{\ell-1}}^{\ell}\cap \mathcal{H}_0}\sum_{s=1}^{\prod_{k=1}^{\ell}|\mathcal{F}_{i_{k-1}}^k|}\frac{1}{s}\mathbb{P}\left( P_{i_{\ell}}\leq \frac{sq^{(\ell)}}{\prod_{k=1}^{\ell}|\mathcal{F}_{i_{k-1}}^k|g\left(\prod_{k=1}^{\ell}|\mathcal{F}_{i_{k-1}}^k|\right)}, \tilde{B}_s^{(i_1, \ldots, i_{\ell})}\right),\label{mainmod}\end{align}
where $g(m)=\sum_{i=1}^m1/i$ for each natural number $m,$ as defined in the main manuscript, and for each sequence of indices $(i_1, \ldots, i_{\ell})$ such that $i_1\in \mathcal{F}^1,$ $i_2\in \mathcal{F}_{i_1}^2,$ $\ldots,$ $i_{\ell}\in \mathcal{F}_{i_{\ell-1}}^{\ell},$  and $s\in\{1, \ldots, \prod_{k=1}^{\ell}|\mathcal{F}_{i_{k-1}}^k|\},$
$$\tilde{B}_s^{(i_1, \ldots, i_{\ell})}=\bigcup_{(r_1, \ldots, r_{\ell})\in I_s^{(i_1, \ldots, i_{\ell})}}\tilde{C}_{r_1, r_2,\ldots, r_{\ell}}^{(i_1, \ldots, i_{\ell})}$$ for
\begin{align*}&I_s^{(i_1, \ldots, i_{\ell})}=\{(r_1, \ldots, r_{\ell}): r_k\in\{1,\ldots,|\mathcal{F}_{i_{k-1}}^k|\} \,\,\text{for}\,\,k=1,\ldots, \ell,\,\, \prod_{k=1}^{\ell}r_k=s\} \\
&\tilde{C}_{r_{1},\ldots, r_{\ell}}^{(i_1,\ldots, i_{\ell})}=\bigcap_{k=1}^{\ell}C_{r_k}^{(i_k)}[\tilde{q}_{i_{k-1}}],\end{align*}
where  $\tilde{q}_{i_0}=q^{(1)}/g(|\mathcal{F}^1|)$ and for $k>1,$ $$\tilde{q}_{i_{k-1}}=\frac{\prod_{j=1}^{k-1}r_{j}}{{\prod_{j=1}^{k-1}|\mathcal{F}_{i_{j-1}}^j|}g\left(\prod_{j=1}^{k}|\mathcal{F}_{i_{j-1}}^j|\right)}q^{(k)}$$ i.e.\ $\tilde{q}_{i_{k-1}}$ is the testing level of the BH procedure applied on the $p$-values of family $\mathcal{F}_{i_{k-1}}^{k}$ when the numbers of rejections in the families of ancestors of $\mathcal{F}_{i_{k-1}}^{k},$ i.e.\ $\mathcal{F}^1,\ldots, \mathcal{F}^{k-1}_{i_{k-2}},$ are $r_{1}, \ldots, r_{k-1},$ respectively, according to the modified TreeBH for arbitrary dependence with input parameters $(q^{(1)}, \ldots, q^{(L)}).$ Using similar arguments to those in the proof of Lemma \ref{lemtreeBH}, it can be shown that the events $\tilde{B}_s^{(i_1, \ldots, i_{\ell})},$ $s=1, \ldots, \prod_{k=1}^{\ell}|\mathcal{F}_{i_{k-1}}^k|,$ are disjoint and their union is the entire space.

Now the proof is similar to the proof of Theorem 3.1 in \cite{BY01}. Let $\ell\in\{1, \ldots, L\}$ be arbitrary fixed. For each sequence of indices $(i_1,\ldots, i_{\ell})$ such that $i_1\in \mathcal{F}_{1},$ $i_2\in \mathcal{F}_{i_1}^2,$ $\ldots,$ $i_{\ell-1}\in\mathcal{F}_{i_{\ell-2}}^{\ell-1}, i_{\ell}\in \mathcal{F}_{i_{\ell-1}}^{\ell}\cap \mathcal{H}_0,$  $s\in\{1, \ldots, \prod_{j=1}^{k}|\mathcal{F}_{i_{j-1}}^j|\},$ and
$j\in\{1, \ldots, s\},$ we define
$$p_{js}^{(i_1, \ldots, i_{\ell})}=\mathbb{P}\left(P_{i_{\ell}}\in \Bigg(\frac{(j-1)q^{(\ell)}}{\prod_{k=1}^{\ell}|\mathcal{F}_{i_{k-1}}^k|g\left(\prod_{k=1}^{\ell}|\mathcal{F}_{i_{k-1}}^k|\right)}, \frac{jq^{(\ell)}}{\prod_{k=1}^{\ell}|\mathcal{F}_{i_{k-1}}^k|g\left(\prod_{k=1}^{\ell}|\mathcal{F}_{i_{k-1}}^k|\right)}\Bigg], \tilde{B}_s^{(i_1,\ldots, i_{\ell})} \right).$$
Since $P_{i_{\ell}}$ is a $p$-value of a true null hypothesis, it satisfies $\mathbb{P}(P_{i_{\ell}}\leq x)\leq x$ for each $x\in[0,1],$ in particular, $\mathbb{P}(P_{i_{\ell}}\leq 0)=0.$ Therefore,
\begin{align}
\sum_{j=1}^{s}p_{js}^{(i_1, \ldots, i_{\ell})}=\mathbb{P}\left(P_{i_{\ell}}\leq \frac{sq^{(\ell)}}{\prod_{k=1}^{\ell}|\mathcal{F}_{i_{k-1}}^k|g\left(\prod_{k=1}^{\ell}|\mathcal{F}_{i_{k-1}}^k|\right)}, \tilde{B}_s^{(i_1,\ldots, i_{\ell})} \right).\label{sum1}
\end{align}
Using (\ref{mainmod}) and (\ref{sum1}) we obtain
\begin{align}&\text{sFDR}^{\ell}\leq \notag\\&\sum_{i_1\in
\mathcal{F}^1}\sum_{i_2\in
\mathcal{F}_{i_1}^2}\cdots\!\!\!\!\!\!
\sum_{i_{\ell-1}\in
\mathcal{F}_{i_{\ell-2}}^{\ell-1}}\sum_{i_{\ell}\in
\mathcal{F}_{i_{\ell-1}}^{\ell}\cap \mathcal{H}_0}\sum_{s=1}^{\prod_{k=1}^{\ell}|\mathcal{F}_{i_{k-1}}^k|}\frac{1}{s}\mathbb{P}\left( P_{i_{\ell}}\leq \frac{sq^{(\ell)}}{\prod_{k=1}^{\ell}|\mathcal{F}_{i_{k-1}}^k|g\left(\prod_{k=1}^{\ell}|\mathcal{F}_{i_{k-1}}^k|\right)}, \tilde{B}_s^{(i_1, \ldots, i_{\ell})}\right)=
\notag\\&\sum_{i_1\in
\mathcal{F}^1}\sum_{i_2\in
\mathcal{F}_{i_1}^2}\cdots\!\!\!\!\!\!
\sum_{i_{\ell-1}\in
\mathcal{F}_{i_{\ell-2}}^{\ell-1}}\sum_{i_{\ell}\in
\mathcal{F}_{i_{\ell-1}}^{\ell}\cap \mathcal{H}_0}\sum_{s=1}^{\prod_{k=1}^{\ell}|\mathcal{F}_{i_{k-1}}^k|}\frac{1}{s}\sum_{j=1}^{s}p_{js}^{(i_1, \ldots, i_{\ell})}=\notag\\&
\sum_{i_1\in
\mathcal{F}^1}\sum_{i_2\in
\mathcal{F}_{i_1}^2}\cdots\!\!\!\!\!\!
\sum_{i_{\ell-1}\in
\mathcal{F}_{i_{\ell-2}}^{\ell-1}}\sum_{i_{\ell}\in
\mathcal{F}_{i_{\ell-1}}^{\ell}\cap \mathcal{H}_0}\sum_{j=1}^{\prod_{k=1}^{\ell}|\mathcal{F}_{i_{k-1}}^k|}\sum_{s=j}^{\prod_{k=1}^{\ell}|\mathcal{F}_{i_{k-1}}^k|}\frac{1}{s}p_{js}^{(i_1, \ldots, i_{\ell})}\leq\notag\\&
\sum_{i_1\in
\mathcal{F}^1}\sum_{i_2\in
\mathcal{F}_{i_1}^2}\cdots\!\!\!\!\!\!
\sum_{i_{\ell-1}\in
\mathcal{F}_{i_{\ell-2}}^{\ell-1}}\sum_{i_{\ell}\in
\mathcal{F}_{i_{\ell-1}}^{\ell}\cap \mathcal{H}_0}\sum_{j=1}^{\prod_{k=1}^{\ell}|\mathcal{F}_{i_{k-1}}^k|}\sum_{s=j}^{\prod_{k=1}^{\ell}|\mathcal{F}_{i_{k-1}}^k|}\frac{1}{j}p_{js}^{(i_1, \ldots, i_{\ell})}\leq\notag\\&
\sum_{i_1\in
\mathcal{F}^1}\sum_{i_2\in
\mathcal{F}_{i_1}^2}\cdots\!\!\!\!\!\!
\sum_{i_{\ell-1}\in
\mathcal{F}_{i_{\ell-2}}^{\ell-1}}\sum_{i_{\ell}\in
\mathcal{F}_{i_{\ell-1}}^{\ell}\cap \mathcal{H}_0}\sum_{j=1}^{\prod_{k=1}^{\ell}|\mathcal{F}_{i_{k-1}}^k|}\frac{1}{j}\sum_{s=1}^{\prod_{k=1}^{\ell}|\mathcal{F}_{i_{k-1}}^k|}p_{js}^{(i_1, \ldots, i_{\ell})}\label{mainstop}
\end{align}
Since $\bigcup_{s=1}^{\prod_{k=1}^{\ell}|\mathcal{F}_{i_{k-1}}^k|}\tilde{B}_s^{(i_1, \ldots, i_{\ell})}$ is the whole sample space represented as a union of disjoint events, we obtain
\begin{align}
&\sum_{s=1}^{\prod_{k=1}^{\ell}|\mathcal{F}_{i_{k-1}}^k|}p_{js}^{(i_1, \ldots, i_{\ell})}=\notag\\&
\mathbb{P}\left(P_{i_{\ell}}\in \Bigg(\frac{(j-1)q^{(\ell)}}{\prod_{k=1}^{\ell}|\mathcal{F}_{i_{k-1}}^k|g\left(\prod_{k=1}^{\ell}|\mathcal{F}_{i_{k-1}}^k|\right)}, \frac{jq^{(\ell)}}{\prod_{k=1}^{\ell}|\mathcal{F}_{i_{k-1}}^k|g\left(\prod_{k=1}^{\ell}|\mathcal{F}_{i_{k-1}}^k|\right)}\Bigg],     \bigcup_{s=1}^{\prod_{k=1}^{\ell}|\mathcal{F}_{i_{k-1}}^k|}\tilde{B}_s^{(i_1, \ldots, i_{\ell})}
\right)=\notag\\&\mathbb{P}\left(P_{i_{\ell}}\in \Bigg(\frac{(j-1)q^{(\ell)}}{\prod_{k=1}^{\ell}|\mathcal{F}_{i_{k-1}}^k|g\left(\prod_{k=1}^{\ell}|\mathcal{F}_{i_{k-1}}^k|\right)}, \frac{jq^{(\ell)}}{\prod_{k=1}^{\ell}|\mathcal{F}_{i_{k-1}}^k|g\left(\prod_{k=1}^{\ell}|\mathcal{F}_{i_{k-1}}^k|\right)}\Bigg]\right)=\notag\\&
\mathbb{P}\left(P_{i_{\ell}}\leq \frac{jq^{(\ell)}}{\prod_{k=1}^{\ell}|\mathcal{F}_{i_{k-1}}^k|g\left(\prod_{k=1}^{\ell}|\mathcal{F}_{i_{k-1}}^k|\right)}\right)-\mathbb{P}\left(\frac{(j-1)q^{(\ell)}}{\prod_{k=1}^{\ell}|\mathcal{F}_{i_{k-1}}^k|g\left(\prod_{k=1}^{\ell}|\mathcal{F}_{i_{k-1}}^k|\right)}\right)\label{full}
\end{align}
Combining (\ref{full}) with (\ref{mainstop}) we obtain
\begin{align}&\text{sFDR}^{\ell}\leq
\sum_{i_1\in
\mathcal{F}^1}\sum_{i_2\in
\mathcal{F}_{i_1}^2}\cdots\!\!\!\!\!\!
\sum_{i_{\ell-1}\in
\mathcal{F}_{i_{\ell-2}}^{\ell-1}}\sum_{i_{\ell}\in
\mathcal{F}_{i_{\ell-1}}^{\ell}\cap \mathcal{H}_0}a_{(i_1, \ldots, i_{\ell})}\label{main-2}
\end{align}
where for each sequence of indices $(i_1,\ldots, i_{\ell})$ such that $i_1\in \mathcal{F}_{1},$ $i_2\in \mathcal{F}_{i_1}^2,$ $\ldots,$ $i_{\ell-1}\in\mathcal{F}_{i_{\ell-2}}^{\ell-1}, i_{\ell}\in \mathcal{F}_{i_{\ell-1}}^{\ell}\cap \mathcal{H}_0,$ $a_{(i_1, \ldots, i_{\ell})}$ is defined as follows.
\begin{align*}
&a_{(i_1, \ldots, i_{\ell})}=\\&\sum_{j=1}^{\prod_{k=1}^{\ell}|\mathcal{F}_{i_{k-1}}^k|}\frac{1}{j}\left\{\mathbb{P}\left(P_{i_{\ell}}\leq \frac{jq^{(\ell)}}{\prod_{k=1}^{\ell}|\mathcal{F}_{i_{k-1}}^k|g\left(\prod_{k=1}^{\ell}|\mathcal{F}_{i_{k-1}}^k|\right)}\right)-\mathbb{P}\left(\frac{(j-1)q^{(\ell)}}{\prod_{k=1}^{\ell}|\mathcal{F}_{i_{k-1}}^k|g\left(\prod_{k=1}^{\ell}|\mathcal{F}_{i_{k-1}}^k|\right)}\right)\right\}
\end{align*}
Let us obtain an equivalent expression for $a_{(i_1, \ldots, i_{\ell})}.$
\begin{align}
&a_{(i_1, \ldots, i_{\ell})}=\notag\\&
\sum_{j=1}^{\prod_{k=1}^{\ell}|\mathcal{F}_{i_{k-1}}^k|}\frac{1}{j}\mathbb{P}\left(P_{i_{\ell}}\leq \frac{jq^{(\ell)}}{\prod_{k=1}^{\ell}|\mathcal{F}_{i_{k-1}}^k|g\left(\prod_{k=1}^{\ell}|\mathcal{F}_{i_{k-1}}^k|\right)}\right)-
\notag\\&\sum_{j=0}^{\prod_{k=1}^{\ell}|\mathcal{F}_{i_{k-1}}^k|-1}\frac{1}{j+1}\mathbb{P}\left(P_{i_{\ell}}\leq \frac{jq^{(\ell)}}{\prod_{k=1}^{\ell}|\mathcal{F}_{i_{k-1}}^k|g\left(\prod_{k=1}^{\ell}|\mathcal{F}_{i_{k-1}}^k|\right)}\right)=\notag\\&
\sum_{j=1}^{\prod_{k=1}^{\ell}|\mathcal{F}_{i_{k-1}}^k|-1}\left(\frac{1}{j}-\frac{1}{j+1}\right)\mathbb{P}\left(P_{i_{\ell}}\leq \frac{jq^{(\ell)}}{\prod_{k=1}^{\ell}|\mathcal{F}_{i_{k-1}}^k|g\left(\prod_{k=1}^{\ell}|\mathcal{F}_{i_{k-1}}^k|\right)}\right)+\notag\\&\frac{1}{\prod_{k=1}^{\ell}|\mathcal{F}_{i_{k-1}}^k|}
\mathbb{P}\left(P_{i_{\ell}}\leq \frac{q^{(\ell)}}{g\left(\prod_{k=1}^{\ell}|\mathcal{F}_{i_{k-1}}^k|\right)}\right)\label{pre-last}
\end{align}
For $i_{\ell}\in \mathcal{H}_0,$  $\mathbb{P}(P_{i_{\ell}}\leq x)\leq x$ for all $x\in[0,1],$ therefore
\begin{align}
&\sum_{j=1}^{\prod_{k=1}^{\ell}|\mathcal{F}_{i_{k-1}}^k|-1}\left(\frac{1}{j}-\frac{1}{j+1}\right)\mathbb{P}\left(P_{i_{\ell}}\leq \frac{jq^{(\ell)}}{\prod_{k=1}^{\ell}|\mathcal{F}_{i_{k-1}}^k|g\left(\prod_{k=1}^{\ell}|\mathcal{F}_{i_{k-1}}^k|\right)}\right)+\notag\\&\frac{1}{\prod_{k=1}^{\ell}|\mathcal{F}_{i_{k-1}}^k|}
\mathbb{P}\left(P_{i_{\ell}}\leq \frac{q^{(\ell)}}{g\left(\prod_{k=1}^{\ell}|\mathcal{F}_{i_{k-1}}^k|\right)}\right)\leq\notag\\
&\sum_{j=1}^{\prod_{k=1}^{\ell}|\mathcal{F}_{i_{k-1}}^k|-1} \frac{q^{(\ell)}}{(j+1)\prod_{k=1}^{\ell}|\mathcal{F}_{i_{k-1}}^k|g\left(\prod_{k=1}^{\ell}|\mathcal{F}_{i_{k-1}}^k|\right)}+
\frac{q^{(\ell)}}{\prod_{k=1}^{\ell}|\mathcal{F}_{i_{k-1}}^k|g\left(\prod_{k=1}^{\ell}|\mathcal{F}_{i_{k-1}}^k|\right)}=\notag\\&
\frac{q^{(\ell)}}{\prod_{k=1}^{\ell}|\mathcal{F}_{i_{k-1}}^k|g\left(\prod_{k=1}^{\ell}|\mathcal{F}_{i_{k-1}}^k|\right)}\sum_{j=1}^{\prod_{k=1}^{\ell}|\mathcal{F}_{i_{k-1}}^k|}\frac{1}{j}=
\frac{q^{(\ell)}}{\prod_{k=1}^{\ell}|\mathcal{F}_{i_{k-1}}^k|}\label{last3}
\end{align}
Combining (\ref{main-2}), (\ref{pre-last}), and (\ref{last3}), we obtain
\begin{align*}
&\text{sFDR}^{\ell}\leq \sum_{i_1\in
\mathcal{F}^1}\sum_{i_2\in
\mathcal{F}_{i_1}^2}\cdots\!\!\!\!\!\!
\sum_{i_{\ell-1}\in
\mathcal{F}_{i_{\ell-2}}^{\ell-1}}\sum_{i_{\ell}\in
\mathcal{F}_{i_{\ell-1}}^{\ell}\cap \mathcal{H}_0}\frac{q^{(\ell)}}{\prod_{k=1}^{\ell}|\mathcal{F}_{i_{k-1}}^k|}\leq
\\&\sum_{i_1\in
\mathcal{F}^1}\sum_{i_2\in
\mathcal{F}_{i_1}^2}\cdots\!\!\!\!\!\!
\sum_{i_{\ell-1}\in
\mathcal{F}_{i_{\ell-2}}^{\ell-1}}\sum_{i_{\ell}\in
\mathcal{F}_{i_{\ell-1}}^{\ell}}\frac{q^{(\ell)}}{\prod_{k=1}^{\ell}|\mathcal{F}_{i_{k-1}}^k|}=q^{(\ell)}
\end{align*}
Thus the proof is complete.

\subsubsection*{Proof of Lemma \ref{lemtreeBH}}
In the proofs of items, let $\ell\in\{1, \ldots, L\}$ be an arbitrary fixed level.
\paragraph{Proof of item 1} Let us fix a sequence $(i_1, \ldots, i_{\ell})$ such that $i_1\in \mathcal{F}^1, i_2\in \mathcal{F}_{i_1}^2, \ldots, i_{\ell}\in \mathcal{F}_{i_{\ell-1}}^{\ell},$ and sequences $(r_1,\ldots r_{\ell}), (r'_1,\ldots r'_{\ell})$ such that
$(r_1,\ldots r_{\ell})\neq(r'_1,\ldots r'_{\ell}),$ and
$r_k, r'_k\in \{1,\ldots, |\mathcal{F}_{i_{k-1}}^{k}|\}$ for $k=1, \ldots, \ell.$ It is easy to see that for any $0<\alpha<1,$ the events $C_{r_k}^{(i_k)}[\alpha]$ and $C_{r'_k}^{(i_k)}[\alpha]$ are disjoint for each $k=1, \ldots, \ell.$ Assume that $r_1\neq r'_1.$ Then the events $C_{r_1}[q^{(1)}]$ and $C_{r'_1}[q^{(1)}]$ are disjoint, which yields that the events $C_{r_1, \ldots, r_{\ell}}^{(i_1, \ldots, i_{\ell})}$ and $C_{r'_1, \ldots, r'_{\ell}}^{(i_1, \ldots, i_{\ell})}$ are disjoint. Assume now that $r_1=r'_1.$ Let $k_0=\min\{j: r_j\neq r'_j\}.$ According to our assumptions, $1<k_0\leq \ell.$ Let $q_{i_{k-1}}$ and $q'_{i_{k-1}}$ be the testing levels of BH for $\mathcal{F}_{i_{k-1}}^{k}$ according to TreeBH, when the numbers of rejections in $\mathcal{F}_{i_{j-1}}^{j},$ $j=1, \ldots, k-1$ are $r_1, \ldots, r_{k-1}$ and $r'_1, \ldots, r'_{k-1},$ respectively. Since $r_j=r'_j$ for each $j=1, \ldots, k_0-1,$  we obtain $q_{i_{k_0-1}}=q'_{i_{k_0-1}}.$
Hence the events $C_{r_{k_0}}^{(i_{k_0})}[q_{i_{k_0-1}}] $ and  $C_{r'_{k_0}}^{(i_{k_0})}[q'_{i_{k_0-1}}] $ are disjoint, because $r_{k_0}\neq r'_{k_0}.$ This yields that the events  $C_{r_1, \ldots, r_{\ell}}^{(i_1, \ldots, i_{\ell})}$ and $C_{r'_1, \ldots, r'_{\ell}}^{(i_1, \ldots, i_{\ell})}$ are disjoint.
\paragraph{Proof of item 2} Let us fix a sequence $(i_1, \ldots, i_{\ell})$ such that $i_1\in \mathcal{F}^1, i_2\in \mathcal{F}_{i_1}^2, \ldots, i_{\ell}\in \mathcal{F}_{i_{\ell-1}}^{\ell}.$ By the contrary, assume that there exist $s, s'\in\{1, \ldots, \prod_{k=1}^{\ell}\mathcal{F}_{i_{k-1}}^k\}$ such that $s\neq s'$ and both $B_s^{(i_1, \ldots, i_{\ell})}$ and $B_{s'}^{(i_1, \ldots, i_{\ell})}$ occur.
Then there exists $(r_1,\ldots r_{\ell})\in I_s^{(i_1, \ldots, i_{\ell})}$ and $(r'_1,\ldots r'_{\ell})\in I_{s'}^{(i_1, \ldots, i_{\ell})}$
such that both $C_{r_1, \ldots, r_{\ell}}^{(i_1, \ldots, i_{\ell})}$ and $C_{r'_1, \ldots, r'_{\ell}}^{(i_1, \ldots, i_{\ell})}$ occur. However, $\prod_{k=1}^{\ell}r_k=s,$ while  $\prod_{k=1}^{\ell}r'_k=s',$ therefore $(r_1,\ldots r_{\ell})\neq(r'_1,\ldots r'_{\ell}).$ Thus we obtain a contradiction to item 1 of Lemma \ref{lemtreeBH}. We have proved that the events $B_s^{(i_1, \ldots, i_{\ell})}$ and $B_{s'}^{(i_1, \ldots, i_{\ell})}$ are disjoint for $s\neq s'.$ Let us now prove that the event $\cup_{s=1}^{\prod_{k=1}^{\ell}|\mathcal{F}_{i_{k-1}}^{k}|}B_s^{(i_1, \ldots, i_{\ell})}$ occurs with probability one.
Note that for each $0<\alpha<1$ and $k\in\{1, \ldots, \ell\},$ the event $\cup_{r_k=1}^{|\mathcal{F}_{i_{k-1}}^k|}C_{r_k}^{(i_k)}[\alpha]$ occurs with probability one. 
 Let $\bm p[\cup_{k=1}^{\ell}\mathcal{F}_{i_{k-1}}^k\setminus\{i_k\}]$ be an arbitrary fixed vector of $p$-values corresponding to the hypotheses with indices in $\cup_{k=1}^{\ell}\mathcal{F}_{i_{k-1}}^k\setminus\{i_k\}.$ If only the $p$-values for the finest level $L$ hypotheses are available, and the $p$-values for coarser resolution hypotheses are combinations of Level-$L$ $p$-values, then let $\bm p[\cup_{k=1}^{\ell}\mathcal{F}_{i_{k-1}}^k\setminus\{i_k\}]$ be a vector of $p$-values corresponding to the hypotheses with indices in $\cup_{k=1}^{\ell}\mathcal{F}_{i_{k-1}}^k\setminus\{i_k\},$ which is based on an arbitrary fixed vector of Level-$L$ $p$-values. Consider the subvector of $\bm p[\cup_{k=1}^{\ell}\mathcal{F}_{i_{k-1}}^k\setminus\{i_k\}]$ corresponding to the hypotheses in $\mathcal{F}^1\setminus\{i_1\}.$ For this subvector, there exists $r_1^0$ such that $C_{r_1^0}[q^{(1)}]$ occurs. Consider now the subvector of $\bm p[\cup_{k=1}^{\ell}\mathcal{F}_{i_{k-1}}^k\setminus\{i_k\}]$ corresponding to the hypotheses in $\mathcal{F}^2_{i_1}\setminus\{i_2\}.$
For this subvector, there exists $r_2^0$ such that $C_{r_2^0}\left[r_1^0q^{(1)}/|\mathcal{F}^1|\right]$ occurs. Continuing in this fashion, we obtain that there exists a vector $(r_1^0, \ldots, r_{\ell}^0)$ such that the event $\cap_{k=1}^{\ell}C_{r_k^0}^{(i_k)}[q_{i_{k-1}}]=C_{r_1^0, \ldots, r_{\ell}^0}^{(i_1, \ldots, i_{\ell})} $ occurs for $\bm p[\cup_{k=1}^{\ell}\mathcal{F}_{i_{k-1}}^k\setminus\{i_k\}].$ Letting $s_0=\prod_{k=1}^{\ell}r_k^0,$ we obtain that the event $B_{s_0}^{(i_1, \ldots, i_{\ell})}$ occurs for $\bm p[\cup_{k=1}^{\ell}\mathcal{F}_{i_{k-1}}^k\setminus\{i_k\}].$ Thus we have proved that the event $\cup_{s=1}^{\prod_{k=1}^{\ell}|\mathcal{F}_{i_{k-1}}^{k}|}B_s^{(i_1, \ldots, i_{\ell})}$ occurs with probability one. 
\paragraph{Proof of item 3} The error rate $\text{sFDR}^{\ell}$ has the following equivalent definition.
Let $\overline{\text{sFDP}}^{1}=\text{FDP}^1,$ and
\begin{eqnarray*}\overline{\text{sFDP}}^{\ell}&\!\!\!\!=\!\!\!\!&
\sum_{i_1\in
\mathcal{S}^1_{0}}\frac{1}{\max\{|\mathcal{S}^1_{0}|,1\}}\sum_{i_2\in
\mathcal{S}^2_{i_1}}\frac{1}{\max\{|\mathcal{S}^2_{i_1}|,1\}}\cdots\!\!\!\!\sum_{i_{\ell-1}\in
\mathcal{S}^{\ell-1}_{i_{\ell-2}}}\frac{1}{\max\{|\mathcal{S}^{\ell-1}_{i_{\ell-2}}|,1\}}\sum_{i_{\ell}\in
\mathcal{S}^{\ell}_{i_{\ell-1}}\cap \mathcal{H}_0}\frac{1}{\max\{|\mathcal{S}^{\ell}_{i_{\ell-1}}|,1\}}
\\
\text{sFDR}^{\ell}&\!\!\!\!=\!\!\!\!&\mathbb{E}(\overline{\text{sFDP}}^{\ell})
\end{eqnarray*}
Therefore, $\text{sFDR}^{\ell}$ is equal to
\begin{align}
&\sum_{i_1\in
\mathcal{F}^1}\sum_{i_2\in
\mathcal{F}_{i_1}^2}\cdots\!\!\!\!\!\!
\sum_{i_{\ell-1}\in
\mathcal{F}_{i_{\ell-2}}^{\ell-1}}\sum_{i_{\ell}\in
\mathcal{F}_{i_{\ell-1}}^{\ell}\cap \mathcal{H}_0}\sum_{j=1}^{\ell}\sum_{r_j=1}^{|\mathcal{F}_{i_{j-1}}^{j}|}\frac{\mathbb{P}\left(i_k\in \mathcal{S}_{i_{k-1}}^{k}, \,\,|\mathcal{S}_{i_{k-1}}^{k}|=r_k \text{ for } k=1, \ldots, \ell\right)}{\prod_{k=1}^{\ell}r_k}\label{lemma-main-1}
\end{align}
Let us fix a sequence $(i_1, \ldots, i_{\ell})$ such that $i_1\in \mathcal{F}^1, i_2\in \mathcal{F}_{i_1}^2, \ldots, i_{\ell}\in \mathcal{F}_{i_{\ell-1}}^{\ell},$ $i_{\ell}\in \mathcal{H}_0,$ and a sequence $(r_1,\ldots r_{\ell})$ such that $r_k\in \{1,\ldots |\mathcal{F}_{i_{k-1}}^{k}|\}$ for $k=1, \ldots, \ell.$
Note that
\begin{align*}
&\mathbb{P}\left(i_k\in \mathcal{S}_{i_{k-1}}^{k}, \,\,|\mathcal{S}_{i_{k-1}}^{k}|=r_k \text{ for } k=1, \ldots, \ell\right)=
\mathbb{P}\left(\cap_{k=1}^{\ell}\left[\{P_{i_k}\leq r_kq_{i_{k-1}}/|\mathcal{F}_{i_{k-1}}^{k}|\}\cap\{|\mathcal{S}_{i_{k-1}}^{k}|=r_k\}\right]\right)=\\&
\mathbb{P}\left(\cap_{k=1}^{\ell}\left[\{P_{i_k}\leq r_kq_{i_{k-1}}/|\mathcal{F}_{i_{k-1}}^{k}|\}\cap C_{r_k}^{(i_k)}[q_{i_{k-1}}]\right]\right)=
\mathbb{P}\left(\cap_{k=1}^{\ell}\{P_{i_k}\leq r_kq_{i_{k-1}}/|\mathcal{F}_{i_{k-1}}^{k}|\}\cap\left[ \cap_{k=1}^{\ell}C_{r_k}^{(i_k)}[q_{i_{k-1}}]\right]\right)
\end{align*}
Recall that $\cap_{k=1}^{\ell}C_{r_k}^{i_k}[q_{i_{k-1}}]=C_{r_1, \ldots, r_{\ell}}^{(i_1, \ldots, i_{\ell})}$ by definition, thus we obtain
\begin{align*}
&\mathbb{P}\left(i_k\in \mathcal{S}_{i_{k-1}}^{k}, \,\,|\mathcal{S}_{i_{k-1}}^{k}|=r_k \text{ for } k=1, \ldots, \ell\right)=
\mathbb{P}\left(\cap_{k=1}^{\ell}\{P_{i_k}\leq r_kq_{i_{k-1}}/|\mathcal{F}_{i_{k-1}}^{k}|\}\cap C_{r_1, \ldots, r_{\ell}}^{(i_1, \ldots, i_{\ell})}\right)\leq\\&\mathbb{P}\left(P_{i_{\ell}}\leq r_{\ell}q_{i_{\ell-1}}/|\mathcal{F}_{i_{\ell-1}}^{\ell}|, C_{r_1, \ldots, r_{\ell}}^{(i_1, \ldots, i_{\ell})}\right)=
\mathbb{P}\left(P_{i_{\ell}}\leq \frac{\left(\prod_{k=1}^{\ell}r_k\right)q^{(\ell)}}{\prod_{k=1}^{\ell}|\mathcal{F}_{i_{k-1}}^{k}|}, C_{r_1, \ldots, r_{\ell}}^{(i_1, \ldots, i_{\ell})}\right),
\end{align*}
where the last equality follows from the definition of $q_{i_{\ell-1}}$ in (\ref{q_i}).
Thus we have proved that
\begin{align}
&\mathbb{P}\left(i_k\in \mathcal{S}_{i_{k-1}}^{k}, \,\,|\mathcal{S}_{i_{k-1}}^{k}|=r_k \text{ for } k=1, \ldots, \ell\right)\leq\mathbb{P}\left(P_{i_{\ell}}\leq \frac{\left(\prod_{k=1}^{\ell}r_k\right)q^{(\ell)}}{\prod_{k=1}^{\ell}|\mathcal{F}_{i_{k-1}}^{k}|}, C_{r_1, \ldots, r_{\ell}}^{(i_1, \ldots, i_{\ell})}\right)\label{main-lem-2}
\end{align}
for any fixed sequences $(i_1, \ldots, i_{\ell})$ and $(r_1,\ldots r_{\ell})$ such that $i_1\in \mathcal{F}^1, i_2\in \mathcal{F}_{i_1}^2, \ldots, i_{\ell}\in \mathcal{F}_{i_{\ell-1}}^{\ell},$ $i_{\ell}\in \mathcal{H}_0,$ and $r_k\in \{1, \ldots, \mathcal{F}_{i_{k-1}}^{k}\}$ for $k=1, \ldots, \ell.$
Combining (\ref{main-lem-2}) with (\ref{lemma-main-1}), we obtain
\begin{align*}
&\text{sFDR}^{\ell}\leq \\&\sum_{i_1\in
\mathcal{F}^1}\sum_{i_2\in
\mathcal{F}_{i_1}^2}\cdots\!\!\!\!\!\!
\sum_{i_{\ell-1}\in
\mathcal{F}_{i_{\ell-2}}^{\ell-1}}\sum_{i_{\ell}\in
\mathcal{F}_{i_{\ell-1}}^{\ell}\cap \mathcal{H}_0}\sum_{j=1}^{\ell}\sum_{r_j=1}^{|\mathcal{F}_{i_{j-1}}^{j}|}\frac{1}{\prod_{k=1}^{\ell}r_k}\mathbb{P}\left(P_{i_{\ell}}\leq \frac{\left(\prod_{k=1}^{\ell}r_k\right)q^{(\ell)}}{\prod_{k=1}^{\ell}|\mathcal{F}_{i_{k-1}}^{k}|}, C_{r_1, \ldots, r_{\ell}}^{(i_1, \ldots, i_{\ell})}\right)=\\&\sum_{i_1\in
\mathcal{F}^1}\sum_{i_2\in
\mathcal{F}_{i_1}^2}\sum_{r_{2}=1}^{|\mathcal{F}_{i_1}^2|}\cdots\!\!\!\!\!\!
\sum_{i_{\ell-1}\in
\mathcal{F}_{i_{\ell-2}}^{\ell-1}}\sum_{i_{\ell}\in
\mathcal{F}_{i_{\ell-1}}^{\ell}\cap \mathcal{H}_0}\sum_{s=1}^{\prod_{k=1}^{\ell}|\mathcal{F}_{i_{k-1}}^{k}|}\sum_{(r_1, \ldots, r_{\ell})\in I_s^{(i_1, \ldots, i_{\ell})}}\frac{1}{s}\mathbb{P}\left(P_{i_{\ell}}\leq \frac{sq^{(\ell)}}{\prod_{k=1}^{\ell}|\mathcal{F}_{i_{k-1}}^{k}|}, C_{r_1, \ldots, r_{\ell}}^{(i_1, \ldots, i_{\ell})}\right)=\\&
\sum_{i_1\in
\mathcal{F}^1}\sum_{i_2\in
\mathcal{F}_{i_1}^2}\sum_{r_{2}=1}^{|\mathcal{F}_{i_1}^2|}\cdots\!\!\!\!\!\!
\sum_{i_{\ell-1}\in
\mathcal{F}_{i_{\ell-2}}^{\ell-1}}\sum_{i_{\ell}\in
\mathcal{F}_{i_{\ell-1}}^{\ell}\cap \mathcal{H}_0}\sum_{s=1}^{\prod_{k=1}^{\ell}|\mathcal{F}_{i_{k-1}}^{k}|}\frac{1}{s}\mathbb{P}\left(P_{i_{\ell}}\leq \frac{sq^{(\ell)}}{\prod_{k=1}^{\ell}|\mathcal{F}_{i_{k-1}}^{k}|}, \cup_{(r_1, \ldots, r_{\ell})\in I_s^{(i_1, \ldots, i_{\ell})}}C_{r_1, \ldots, r_{\ell}}^{(i_1, \ldots, i_{\ell})}\right)=\\&
\sum_{i_1\in
\mathcal{F}^1}\sum_{i_2\in
\mathcal{F}_{i_1}^2}\sum_{r_{2}=1}^{|\mathcal{F}_{i_1}^2|}\cdots\!\!\!\!\!\!
\sum_{i_{\ell-1}\in
\mathcal{F}_{i_{\ell-2}}^{\ell-1}}\sum_{i_{\ell}\in
\mathcal{F}_{i_{\ell-1}}^{\ell}\cap \mathcal{H}_0}\sum_{s=1}^{\prod_{k=1}^{\ell}|\mathcal{F}_{i_{k-1}}^{k}|}\frac{1}{s}\mathbb{P}\left(P_{i_{\ell}}\leq \frac{sq^{(\ell)}}{\prod_{k=1}^{\ell}|\mathcal{F}_{i_{k-1}}^{k}|}, B_s^{(i_1,\ldots, i_{\ell})}\right),
 \end{align*}
 where the second equality follows from the fact that the events in the set $\{C_{r_1, \ldots, r_{\ell}}^{(i_1, \ldots, i_{\ell})}, (r_1, \ldots, r_{\ell})\in I_s^{(i_1, \ldots, i_{\ell})}\}$ are disjoint for every $s$ (due to item 1 of Lemma \ref{lemtreeBH}), and the last equality holds since $B_s^{(i_1,\ldots, i_{\ell})}$ is the union of the events in this set.

\subsection*{Methodology and theoretical results for a general class of error rates}\label{general-error}
Similarly to \cite{BB14}, we consider a general class of error rates that can be written in the form
$\mathbb{E}(\mathcal{C})$ for some measure of errors $\mathcal{C}.$ These error rates include the family-wise error rate (FWER),
$\mathbb{P}(V\geq 1)=\mathbb{E}(\textbf{I}_{\{V\geq 1\}}),$ where $V$ is the number of type I errors, the FDR, the weighted FDR, and others (see \cite{BB14} for additional examples).
Assume that each family $\mathcal{F}_i^{\ell}$ is assigned with an error measure $\mathcal{C}_i^{\ell}.$ The general error rate we address is defined as follows.
\begin{align*}&\mathbb{E}(s\mathcal{C}^{\ell})=\mathbb{E}\left(\sum_{i_1\in
\mathcal{S}^1}\frac{1}{\max\{|\mathcal{S}^1|, 1\}}\!\!\sum_{i_2\in
\mathcal{S}^2_{i_1}}\frac{1}{\max\{|\mathcal{S}^2_{i_1}|, 1\}}\cdots\!\!\!\!\!\!\!\!\sum_{i_{\ell-1}\in
\mathcal{S}^{\ell-1}_{i_{\ell-2}}}\!\!\frac{1}{\max\{|\mathcal{S}^{\ell-1}_{i_{\ell-2}}|, 1\}}\mathcal{C}^{\ell}_{i_{\ell-1}}\!\!\right)
\end{align*}
for $\ell>1,$ and $\mathbb{E}(s\mathcal{C}^{1})=\mathbb{E}\left(\mathcal{C}^{1}_0\right),$
or equivalently
\begin{align*}
&\text{For each selected hypothesis $H_i$ at level $\ell-1,$}\,\,\,
\overline{\mathcal{C}}_{i}=\mathcal{C}_i^{\ell}.\\&\text{For each
selected hypothesis $H_j$ at level $k\in\{0,\ldots,\ell-2\},$}\,\,\,
\overline{\mathcal{C}}_{j}=\frac{\sum_{r\in
\mathcal{S}_{j}^{k+1}}\overline{\mathcal{C}}_{r}}{\max\{|\mathcal{S}_{j}^{k+1}|, 1\}}.
\end{align*}
Then
\begin{align*}
\mathbb{E}(s\mathcal{C}^{\ell})=\mathbb{E}(\overline{\mathcal{C}}_{0}),
\end{align*}
where $H_{0}$ (which is identified with $H_{i_0}$) is the root node hypothesis residing at Level 0, i.e. the parent of $\mathcal{F}^1$ that is always selected.
We next present a general hierarchical testing strategy
for the case
where the investigator is interested only in the discoveries in
specific levels $\ell\in \mathcal{L},$ where
$\mathcal{L}\subseteq\{1,\ldots,L\},$ targeting control of
$\mathbb{E}(s\mathcal{C}^{\ell})$ for all $\ell\in \mathcal{L}.$ In
this case one may use any selection rules satisfying Definition \ref{simple1}
for selecting the
hypotheses within the families belonging to levels $\ell\notin
\mathcal{L},$ since they only serve for selection of families
residing at the levels of interest.
\newpage\makeatletter
\def\BState{\State\hskip-\ALG@thistlm}
\makeatother
\begin{algorithm}[h!]
\DontPrintSemicolon \caption{a general selection-adjusted
hierarchical procedure\label{TreeProc3}}
\SetKwInOut{Input}{Input}\SetKwInOut{Output}{Output}
\Input{The target levels for  $\mathbb{E}(s\mathcal{C}^{\ell}),$ $\ell\in \mathcal{L}: \{q^{(i)}, i\in \mathcal{L} \}$\;
For each $\ell\in \mathcal{L}$ and family $\mathcal{F}_{i}^{\ell},$
the $\mathbb{E}(\mathcal{C}_i^{\ell})$-controlling multiple testing
procedure for testing family $\mathcal{F}_{i}^{\ell}$ \; For each
$\ell\notin \mathcal{L},$ $\ell\leq K,$ where $K=\max\{\ell: \ell\in
\mathcal{L} \},$ and family $\mathcal{F}_{i}^{\ell},$ the selection
rule defining the set $\mathcal{S}_i^{\ell}$\; The $p$-values for
all the hypotheses at levels $1,\ldots, K$ \;}
\Begin{ $S^0 \longleftarrow {0}$\; $q_{0}=q^{(1)}$\;
\For{$\ell = 1,\ldots,K$}{$\mathcal{S}^{\ell} \longleftarrow
\emptyset$\; \If{$\ell\notin \mathcal{L}$} {$q^{(\ell)}\longleftarrow 1$}  }}
\For{$\ell=1,\ldots,K$}{\If{$\mathcal{S}^{\ell-1} \neq
\emptyset$}{\For{$i\in \mathcal{S}^{\ell-1}$}{If $\ell\in
\mathcal{L},$ apply the
$\mathbb{E}(\mathcal{C}_i^{\ell})$-controlling multiple testing
procedure at level $q_i$ on the $p$-values of family
$\mathcal{F}_i^{\ell},$ defining the set $\mathcal{S}_i^{\ell}$\; If $\ell\notin \mathcal{L},$ apply the
assigned selection rule for selecting the hypotheses in family
$\mathcal{F}_i^{\ell},$ defining the set $\mathcal{S}_i^{\ell}$ \;
$\mathcal{S}^{\ell}\longleftarrow \mathcal{S}^{\ell}\bigcup
\mathcal{S}_i^{\ell}$\; \For {$j\in \mathcal{S}_i^{\ell}$}{
$q_{j}\longleftarrow q^{(\ell+1)}\times q_i/q^{(\ell)}\times
|\mathcal{S}_i^{\ell}|/|\mathcal{F}_i^{\ell}|$\;} } } } \Output{The
set of all the selected hypotheses, $\bigcup_{\ell=1}^K
\mathcal{S}^{\ell}.$}
\end{algorithm}
The input for the general procedure above can be given in a sequential manner, i.e. the $p$-values for family $\mathcal{F}_i^{\ell}$ can be obtained only after $H_i$ is selected, and only at this stage the decision regarding the selection rule (or the multiple testing procedure) for this family has to be made. This general procedure reduces to the TreeBH if $\mathcal{L}=\{1, \ldots, L\},$ and for each family $\mathcal{F}_{i}^{\ell},$ $\mathcal{C}_i^{\ell}=\text{FDP}$ and the assigned multiple testing procedure is the BH procedure.
\noindent In order to prove that the general selection-adjusted hierarchical procedure (TreeC) controls
$\mathbb{E}(s\mathcal{C}^{\ell})$ for each $\ell\in \mathcal{L},$ we make the following assumptions:
\begin{enumerate}
\item [A1] For each family $\mathcal{F}_i^{\ell}$ where $\ell\in \mathcal{L},$ the error rate $\mathbb{E}(\mathcal{C}_i^{\ell})$ is such that $\mathcal{C}_i^{\ell}$ takes values in a countable set, and the procedure used for testing this family can control  $\mathbb{E}(\mathcal{C}_i^{\ell})$  at any desired level under the dependence among the $p$-values in $\mathcal{F}_i^{\ell}.$
\item [A2] For each $\ell\geq 2$ such that $\ell\in \mathcal{L},$ and a family $\mathcal{F}_i^{\ell},$  the $p$-values for the hypotheses in $\mathcal{F}_i^{\ell}$ are independent of the $p$-values for the hypotheses in $\cup_{k=1}^{\ell-1}\left(\mathcal{F}_{P^k(i)}^{\ell-k}\setminus \{H_{P^{k-1}(i)}\}\right),$ where $P^0(i)=i.$
\item [A3] For each family $\mathcal{F}_i^{k}$ where $k\leq \max\{\ell: \ell\in \mathcal{L}\}$ the selection rule defining the set $\mathcal{S}_i^k$ (which is defined by a multiple testing procedure for $k\in \mathcal{L}$) is a simple selection rule (see definition below).
\end{enumerate}
As in \cite{BB14} we note that for all practical purposes
$\mathcal{C}_i^{\ell}$ is a count or a ratio of counts, therefore it
takes values in a countable set, as required in A1. Below we
accommodate to our setting the definition of a simple selection
rule, given in \cite{BB14}. In our context, selection of a parent
hypothesis is equivalent to the selection of the family it indexes.
\begin{definition}\label{simple1} (simple selection rule). A selection rule is called simple if for each selected hypothesis $H_j\in \mathcal{F}_i^\ell,$ when the p-values belonging to $\mathcal{F}_i^\ell \setminus \{H_j\}$ are fixed,
and the $p$-value for $H_j$ can change as long as $H_j$ is selected,
the number of selected hypotheses in $\mathcal{F}_i^\ell,$ i.e.
$|\mathcal{S}_i^\ell |,$ remains unchanged.
\end{definition}
The requirement that the selection rule is simple is a very lenient
requirement. It is shown in \cite{BB14} that any step-up or
step-down non-adaptive multiple testing procedure defines a simple
selection rule. Therefore, the BH procedure, as well as its variants incorporating penalty and/or prior weights on the hypotheses (the weighted FDR-controlling procedure of \cite{BH97}, the FDR-controlling procedure of \cite{GRW06} incorporating prior weights, and the procedure of \cite{blanchard2008two} which incorporates both penalty and prior weights), define simple selection rules. Moreover, it is easy to see that the adaptive BH procedure that incorporates the estimator of \cite{SetS04} for the proportion of true null hypotheses defines a simple selection rule, as well as a generalized procedure of \cite{RetJ17}, incorporating  the estimator of \cite{SetS04}, and both prior and penalty weights.

\begin{thm}\label{thm-alg-3}
If assumptions A1--A3 hold, the TreeC procedure with input
parameters $q^{(i)}, i\in \mathcal{L},$ guarantees for each $\ell\in
\mathcal{L}$ that
$$\mathbb{E}(s\mathcal{C}^{\ell})\leq q^{(\ell)}.$$
\end{thm}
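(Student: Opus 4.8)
The plan is to follow the template of the proof of Theorem~3 in \cite{BB14} together with the combinatorial bookkeeping of Lemma~\ref{lemtreeBH}, but to replace the final uniformity argument (which is special to FDP on true nulls) by a direct appeal to assumption~A1. Fix a target level $\ell\in\mathcal{L}$ and expand the selective error rate in its path form. Writing $r_j=|\mathcal{S}^j_{i_{j-1}}|$ for the number of selections in the ancestor family at level $j$, I would decompose
\[
\mathbb{E}(s\mathcal{C}^\ell)=\sum_{(i_1,\ldots,i_{\ell-1})}\ \sum_{(r_1,\ldots,r_{\ell-1})}\frac{1}{\prod_{j=1}^{\ell-1}r_j}\,\mathbb{E}\!\left[\mathcal{C}^{\ell}_{i_{\ell-1}}\,\mathbf{I}(A)\right],
\]
where the outer sum runs over root-to-level-$(\ell-1)$ paths and $A$ is the event that each $H_{i_j}$ is selected in $\mathcal{F}^j_{i_{j-1}}$ with exactly $r_j$ selections. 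On $A$ the family $\mathcal{F}^\ell_{i_{\ell-1}}$ is tested at the fixed level $q_{i_{\ell-1}}=q^{(\ell)}\prod_{j=1}^{\ell-1}r_j/|\mathcal{F}^j_{i_{j-1}}|$ (this telescoped value is insensitive to the placeholder $q^{(j)}=1$ assigned at levels outside $\mathcal{L}$), so on this term I may read $\mathcal{C}^{\ell}_{i_{\ell-1}}$ as the error of the assigned procedure run at that fixed level.

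For each level $j$ I would then invoke the simple-selection assumption~A3 to introduce a sibling-only event $C_{r_j}^{(i_j)}$, depending solely on $\mathbf{p}[\mathcal{F}^j_{i_{j-1}}\setminus\{i_j\}]$, recording that ``were $H_{i_j}$ selected at the frozen level $q_{i_{j-1}}$, its family would contain exactly $r_j$ selections.'' This is well defined precisely because a simple rule leaves $|\mathcal{S}^j_{i_{j-1}}|$ unchanged as $p_{i_j}$ varies over its selection region. Selection of the whole path forces each of these sibling events, so $A\subseteq\bigcap_j C_{r_j}^{(i_j)}$, and since $\mathcal{C}^{\ell}_{i_{\ell-1}}\ge 0$ I may bound $\mathbf{I}(A)\le\prod_j\mathbf{I}(C_{r_j}^{(i_j)})$, harmlessly discarding the threshold conditions on the ancestor $p$-values (which, for combination rules such as Simes, are functions of the target family's data). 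The crucial observation is that $\bigcup_j\big(\mathcal{F}^j_{i_{j-1}}\setminus\{i_j\}\big)$ is exactly the index set in~A2, whence $\prod_j\mathbf{I}(C_{r_j}^{(i_j)})$ is independent of the $p$-values of $\mathcal{F}^\ell_{i_{\ell-1}}$. Factorizing and applying~A1 gives
\[
\mathbb{E}\!\left[\mathcal{C}^{\ell}_{i_{\ell-1}}\mathbf{I}(A)\right]\le\mathbb{E}\!\left[\mathcal{C}^{\ell}_{i_{\ell-1}}\right]\mathbb{P}\!\Big(\bigcap_j C_{r_j}^{(i_j)}\Big)\le q^{(\ell)}\prod_{j=1}^{\ell-1}\frac{r_j}{|\mathcal{F}^j_{i_{j-1}}|}\,\mathbb{P}\!\Big(\bigcap_j C_{r_j}^{(i_j)}\Big).
\]

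Substituting this back cancels $\prod_j r_j$ against the weight and collapses the two remaining sums. Summing over the size vector $(r_1,\ldots,r_{\ell-1})$ yields $\sum\mathbb{P}(\bigcap_j C_{r_j}^{(i_j)})\le 1$, because for a fixed path these events are disjoint across distinct size vectors: two vectors first differ at some level, and the corresponding $C$-events, evaluated at the same frozen testing level, are disjoint, exactly as in items~1 and~2 of Lemma~\ref{lemtreeBH}. Summing over paths then gives $\sum 1/\prod_j|\mathcal{F}^j_{i_{j-1}}|=1$ by telescoping the identity $\sum_{i_j\in\mathcal{F}^j_{i_{j-1}}}1/|\mathcal{F}^j_{i_{j-1}}|=1$ level by level, producing $\mathbb{E}(s\mathcal{C}^\ell)\le q^{(\ell)}$.

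The main obstacle is the middle step: making rigorous the passage from the realized event $A$ to the sibling-only events $C_{r_j}^{(i_j)}$ simultaneously at every level. One must check that a simple rule genuinely licenses this decomposition with the nested, data-dependent levels $q_{i_{j-1}}$ frozen at their values on $A$, and that the ancestor $p$-values, correlated with $\mathcal{C}^\ell_{i_{\ell-1}}$ through combination rules, enter only through indicators bounded by one. The countability requirement in~A1 is what ultimately legitimizes partitioning the family-level error along the discrete sizes $r_j$ and invoking control value by value, as in \cite{BB14}; levels outside $\mathcal{L}$, governed by arbitrary simple rules, require no extra treatment since they enter only through the path structure.
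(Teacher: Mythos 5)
Your proposal is correct and takes essentially the same route as the paper's own proof: the path/size-vector decomposition of $\mathbb{E}(s\mathcal{C}^{\ell})$, the sibling-only events $C_{r_j}^{(i_j)}$ licensed by the simple-selection assumption A3 (which the paper makes rigorous in its Lemma \ref{lem1}, handling the nested data-dependent levels by induction over the levels of $\mathcal{L}$), the factorization via the independence assumption A2, the bound $\mathbb{E}(\mathcal{C}^{\ell}_{i_{\ell-1}})\leq q^{(\ell)}\prod_j r_j/|\mathcal{F}^j_{i_{j-1}}|$ from A1, and the final disjointness-plus-telescoping summation all coincide with the paper's argument. The only presentational difference is that you work with indicator inequalities ($\mathbf{I}(A)\leq\prod_j\mathbf{I}(C_{r_j}^{(i_j)})$) rather than the paper's countable-value equalities over the support of $\mathcal{C}^{\ell}_{i_{\ell-1}}$, which if anything streamlines the step where the ancestor selection conditions are discarded.
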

The result of Theorem 2 in the main manuscript follows from Theorem \ref{thm-alg-3}. Theorem 2 in the main manuscript considers the TreeC procedure with $\mathcal{L}=\{1, \ldots, L\},$ where each selected family is tested by a multiple testing procedure, and its assumptions are equivalent to the assumptions of Theorem \ref{thm-alg-3} for this case.


Assumption A2 does not imply any
special dependency structure between the $p$-values of hypotheses in
the same branch of
the tree, i.e.\ 
ancestor hypotheses of each $H_i$ residing at level $L:$  these can be arbitrarily dependent. 
For example, it holds when each parent hypothesis is the intersection of all its children,  the $p$-values within each family at level $L$ are PRDS, they are independent
of the $p$-values in any other family at Level $L,$ and the $p$-value for each parent hypothesis
is obtained by Simes' method applied on the $p$-values in the family it indexes. In this case all the $p$-values in the tree are valid, and the $p$-values for the hypotheses residing at the same level $\ell$ are independent, for $\ell=1, \ldots, L-1.$ If one has prior and/or penalty weights for all the hypotheses in the tree, one can replace the BH procedure in the TreeBH algorithm by the doubly-weighted BH procedure of \cite{blanchard2008two} for testing the selected Level-$L$ families, since it guarantees weighted FDR control under PRDS dependency (\cite{blanchard2008two}). For testing coarser resolution families, one can in addition incorporate the estimator of \cite{SetS04} for the proportion of true null hypotheses within a family, by using the procedure of \cite{RetJ17}, which is valid under independence (\cite{RetJ17}). Due to Theorem \ref{thm-alg-3}, the corresponding algorithm guarantees control of $\mathbb{E}(s\mathcal{C}^{\ell})$ for $\ell=1, \ldots, L,$ where for each family $\mathcal{F}_i^{\ell},$ $\mathcal{C}_i^{\ell}$ is the weighted FDP. Similarly, one can consider other selective error rates, possibly assigning different error measures to different families, and corresponding multiple testing procedures satisfying assumptions A1--A3.

The following theorem is a generalization of Theorem 3 in the main manuscript. 
 According to the definition of a simple selection rule, for each such selection rule, family $\mathcal{F}_{i}^{\ell},$ $H_j\in \mathcal{F}_{i}^{\ell}$ and $k\in\
\{1, \ldots, |\mathcal{F}_{i}^{\ell}|\},$ we can define the event $C_k^{(j)}$ on the space of $p$-values in $\mathcal{F}_{i}^{\ell}\setminus\{H_j\},$  in which if $j\in \mathcal{S}_i^{\ell},$ then the number of selected hypotheses in $\mathcal{F}_{i}^{\ell}$ is $k.$ If the simple selection rule is defined by a multiple testing procedure, then
we can define the event $C_k^{(j)}[\alpha]$ on the space of $p$-values in $\mathcal{F}_{i}^{\ell}\setminus\{H_j\},$  in which if $H_j$ is rejected by the given procedure at level $\alpha,$ then the number of rejected hypotheses in $\mathcal{F}_{i}^{\ell}$ is $k.$
Let us now consider for each simple selection rule and family $\mathcal{F}_{i}^{\ell}$ the event \begin{align}D_s^{(j)}=\cup_{k=1}^{s}C_k^{(j)}\label{d1}\end{align} for $s=1,\ldots, |\mathcal{F}_{i}^{\ell}|$ and $H_j\in \mathcal{F}_{i}^{\ell}.$ This is the event in which if $j\in \mathcal{S}_i^{\ell},$ then the number of selected hypotheses in $\mathcal{F}_{i}^{\ell}$ is at most $s.$ Similarly, if the selection within the family $\mathcal{F}_{i}^{\ell}$ is made by a multiple testing procedure, then we define \begin{align}D_s^{(j)}[\alpha]=\cup_{k=1}^{s}C_k^{(j)}[\alpha]\label{d2}\end{align} for $s=1,\ldots, |\mathcal{F}_{i}^{\ell}|$ and $H_j\in \mathcal{F}_{i}^{\ell}.$ When the set $D_s^{(j)}$ is an increasing set for each $s$ and $j,$ the selection rule is called \textit{concordant}. When the selection rule is defined by a multiple testing procedure, it is called concordant if $D_s^{(j)}[\alpha]$ is an increasing set for every $0<\alpha<1.$   This definition is closely related to Definition 5 of \cite{BY05j} and to Definition 3 of \cite{BB14}. We consider here the concordance property only for simple selection rules, while in \cite{BB14} the definition of concordance is given for general selection rules.

\begin{thm}\label{thm2-gen}
Assume that each parent hypothesis is the intersection of the hypotheses within the family it indexes. Assume that the $p$-values for the hypotheses at the finest Level $L$ satisfy the PRDS property, and each of the $p$-values for the hypotheses at Level $\ell-1$ is calculated using a certain combination rule on the $p$-values of the family of hypotheses it indexes at Level $\ell,$ for $\ell=2, \ldots, L,$
where each such combination rule is a monotone non-decreasing function of each of its input $p$-values. Consider the TreeC procedure where $L\in\mathcal{L},$  each selected Level-$L$ family  is tested using the BH procedure, and the selection rules for families at levels $\ell=1, \ldots, L-1$ are simple and concordant. Under these conditions we have the following results:
\begin{enumerate}
\item The TreeC procedure guarantees  $\text{sFDR}^L\leq q^{(L)}$ where $q^{(L)}$ is the targeted $\text{sFDR}^{L}$ input parameter.
\item If the TreeC  procedure guarantees for each $\ell=2, \ldots, L$ that if a parent hypothesis at level $\ell-1$ is selected, at least one of the hypotheses within the family it indexes is selected, then the TreeC procedure guarantees $\text{sFDR}^{\ell}\leq q^{(L)}$ for each $\ell\in\{1, \ldots, L\}.$
\end{enumerate}
\end{thm}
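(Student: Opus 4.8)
The plan is to mirror the proof of Theorem~3, which itself rests on Lemma~\ref{lemtreeBH} and the telescoping argument in the proof of Theorem~\ref{thm-prds}, replacing the Benjamini--Hochberg events used at the intermediate levels by the general simple-selection events $C_k^{(j)}$ and $D_s^{(j)}$ of (\ref{d1})--(\ref{d2}). As in those proofs I would work entirely on the space of Level-$L$ $p$-values $\bm p^L$; this is legitimate because every coarser $p$-value, and hence every selection event at levels $1,\ldots,L-1$, is a function of $\bm p^L$ through the combination rules.

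For item~1, fix a chain $i_1\in\mathcal{F}^1,\ldots,i_L\in\mathcal{F}_{i_{L-1}}^L$, a true null $i_L$, and a vector of selection counts $(r_1,\ldots,r_L)$, and form the composite event $C_{r_1,\ldots,r_L}^{(i_1,\ldots,i_L)}=C_{r_L}^{(i_L)}[q_{i_{L-1}}]\cap\bigcap_{k=1}^{L-1}C_{r_k}^{(i_k)}$, where at the finest level the factor is the BH event of \cite{BY01} and at each coarser level $C_{r_k}^{(i_k)}$ is the event that $\mathcal{F}_{i_{k-1}}^k$ has selection count $r_k$ given $i_k$ selected. Defining $B_s$ and $A_s=\bigcup_{t\leq s}B_t$ as before, the analog of Lemma~\ref{lemtreeBH} goes through: disjointness of the $C$'s and the partition property follow from the simple-selection-rule property (the count of each family is unchanged as an individual selected $p$-value varies), and the same expansion of $\overline{\text{sFDP}}^L$ yields the bound $\text{sFDR}^L\leq\sum\cdots\sum\frac{1}{s}\,\mathbb{P}(P_{i_L}\leq s q^{(L)}/\prod_k|\mathcal{F}_{i_{k-1}}^k|,\,B_s)$. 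The only change in this derivation is that at the intermediate levels, where selection is no longer a threshold on $P_{i_k}$, one simply bounds the selection indicators by $1$ and retains the count events $C_{r_k}^{(i_k)}$; the factor $r_L q_{i_{L-1}}/|\mathcal{F}_{i_{L-1}}^L|$ still collapses to $s q^{(L)}/\prod_k|\mathcal{F}_{i_{k-1}}^k|$ with $s=\prod_k r_k$.

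The crux is to show that each $A_s$ is an increasing set of $\bm p^L$; once this is in hand, the PRDS property of $\bm p^L$ on the true-null coordinate $i_L$ supplies the key monotonicity inequality (the analog of (\ref{prds})) and the telescoping argument of Theorem~\ref{thm-prds} closes item~1 verbatim. Rewriting $A_s$ through the $D$-events exactly as in the proof of Lemma~\ref{incr}, it suffices that each $D_{r_k}^{(i_k)}$ is an increasing set of $\bm p^L$. At Level $L$ this is the increasingness of the BH event $D_{r_L}^{(i_L)}[\alpha]$ already established in the proof of item~1 of Theorem~3. At each coarser level, concordance gives that $D_{r_k}^{(i_k)}$ is an increasing set in the Level-$k$ $p$-values of $\mathcal{F}_{i_{k-1}}^k$; since every combination rule is monotone non-decreasing in its inputs, these Level-$k$ $p$-values are non-decreasing functions of $\bm p^L$, so the pullback of $D_{r_k}^{(i_k)}$ to $\bm p^L$ remains increasing. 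Intersections and unions of increasing sets being increasing, $A_s$ is increasing. This lifting of concordance through the combination rules is the main obstacle: concordance is stated for a family's own $p$-values, and it is precisely the monotonicity of the combination rules that transfers it to the common underlying space $\bm p^L$ on which PRDS is assumed.

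Item~2 is then immediate. The extra hypothesis is exactly the consonance condition of Proposition~\ref{prop-gen}, so $\text{sFDR}^{\ell-1}\leq\text{sFDR}^{\ell}$ for every $\ell\in\{2,\ldots,L\}$; iterating gives $\text{sFDR}^{\ell}\leq\text{sFDR}^L$ for all $\ell$, and combining with item~1 yields $\text{sFDR}^{\ell}\leq q^{(L)}$ for each $\ell\in\{1,\ldots,L\}$.
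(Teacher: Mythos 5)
Your proposal is correct and follows essentially the same route as the paper's proof: you redefine the composite count events with the BH event at Level $L$ and general simple-selection count events at coarser levels, verify the analog of Lemma \ref{lemtreeBH}, establish that each $A_s$ is an increasing set of $\bm p^L$ by combining concordance with the monotonicity of the combination rules, and then invoke the PRDS telescoping argument, with item~2 following from Proposition \ref{prop-gen} and item~1. You correctly isolate the lifting of concordance to the Level-$L$ $p$-value space as the crux, which is exactly the step the paper's proof emphasizes.
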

It follows from the proof of Theorem 1 in the main manuscript that the BH procedure defines a simple and a concordant selection rule. Therefore, item 1 of Theorem 3 in the main manuscript follows from item 1 of Theorem \ref{thm2-gen}. When the target levels of TreeBH satisfy $q^{(1)}= q^{(2)}= \ldots= q^{(L)}$ and Simes' combination rule is used for calculating the $p$-values for the parent hypotheses, the TreeBH procedure guarantees for each $\ell=2, \ldots, L$ that if a parent hypothesis at level $\ell-1$ is selected, at least one of the hypotheses within the family it indexes is selected, as explained in Section 3 in the main manuscript. In addition, Simes' combination rule is
a monotone non-decreasing function of each of its input $p$-values. Therefore, item 2 of Theorem 3 in the main manuscript follows from item 2 of Theorem \ref{thm2-gen}. Thus we have shown that indeed Theorem \ref{thm2-gen} implies the result of Theorem 3 in the main manuscript, and gives its generalization for selection rules other than BH and $p$-values for parent hypotheses based on methods other than Simes' method.

\begin{proof}[Proof of Theorem \ref{thm-alg-3}]
The proof is similar to the proof of Theorem 1 in \cite{BB14}. It is based on the following lemma. We identify each family $\mathcal{F}_i^{\ell}$ with the set of indices of hypotheses that belong to $\mathcal{F}_i^{\ell}.$
\begin{lem}\label{lem1}
If Assumption A3 holds, then for each $\ell\in\mathcal{L}$ such that $\ell>1$ and any sequence of indices $i_1, i_2, \ldots, i_{\ell-1}$ such that
$i_1\in \mathcal{F}^1,$ $i_2\in
\mathcal{F}_{i_1}^2,\ldots,i_{\ell-1}\in
\mathcal{F}_{i_{\ell-2}}^{\ell-1},$ the following property holds: if the $p$-values of $H_{i_1},$ $H_{i_2},\ldots, H_{i_{\ell-1}}$ can change as long as $H_{i_1},$ $H_{i_2},\ldots, H_{i_{\ell-1}}$ are selected, while the $p$-values of hypotheses in the sets $\mathcal{F}^1\setminus\{i_1\},$ $\mathcal{F}_{i_1}^2\setminus\{i_2\},\ldots, \mathcal{F}_{i_{\ell-2}}^{\ell-1}\setminus\{i_{\ell-1}\}$ are fixed, the numbers of selected hypotheses in families $\mathcal{F}^1, \mathcal{F}_{i_1}^2,\ldots, \mathcal{F}_{i_{\ell-2}}^{\ell-1}$ remain unchanged.
\end{lem}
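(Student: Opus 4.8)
The plan is to prove Lemma \ref{lem1} by induction on the level $k$, sweeping from the root of the tree downward along the branch $i_0 = 0, i_1, \ldots, i_{\ell-1}$, and establishing at each stage that the selection count $|\mathcal{S}_{i_{k-1}}^{k}|$ in the family $\mathcal{F}_{i_{k-1}}^{k}$ is left unchanged by the allowed perturbations of $p_{i_1}, \ldots, p_{i_{\ell-1}}$. The first thing I would record is structural: the families $\mathcal{F}^1, \mathcal{F}_{i_1}^2, \ldots, \mathcal{F}_{i_{\ell-2}}^{\ell-1}$ sit at the distinct levels $1, \ldots, \ell-1$ and are therefore pairwise disjoint as sets of hypotheses, with each perturbed coordinate $p_{i_k}$ belonging, as a \emph{member}, to exactly one of them, namely $\mathcal{F}_{i_{k-1}}^{k}$ (the hypothesis $H_{i_k}$ is the \emph{parent} of $\mathcal{F}_{i_k}^{k+1}$ but not a member of it). Hence varying $p_{i_k}$ changes the $p$-value vector of $\mathcal{F}_{i_{k-1}}^{k}$ only in the single coordinate of $H_{i_k}$, while the remaining coordinates, those of $\mathcal{F}_{i_{k-1}}^{k}\setminus\{i_k\}$, are held fixed by hypothesis.

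The crux, and the step I expect to require the most care, is the coupling introduced by the adaptive testing levels. The level $q_{i_{k-1}}$ at which $\mathcal{F}_{i_{k-1}}^{k}$ is processed is not a constant but, by the recursion in the TreeC algorithm ($q_j \leftarrow q^{(\ell+1)} q_i |\mathcal{S}_i^{\ell}| / (q^{(\ell)} |\mathcal{F}_i^{\ell}|)$, equivalently the closed form of \eqref{qitreebh}), a function of the selection counts $|\mathcal{S}_{i_0}^1|, \ldots, |\mathcal{S}_{i_{k-2}}^{k-1}|$ in the strictly coarser ancestor families together with the fixed inputs $q^{(1)}, \ldots, q^{(k)}$ and the fixed family sizes. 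Thus the selection within $\mathcal{F}_{i_{k-1}}^{k}$ depends on the perturbed $p$-values only through its own member $p_{i_k}$ and through the level $q_{i_{k-1}}$, and the latter is governed by ancestor selection counts rather than by any perturbed $p$-value directly. This is exactly the dependence the induction is designed to neutralise, by arranging that the relevant ancestor counts have already been shown invariant before $\mathcal{F}_{i_{k-1}}^{k}$ is examined.

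For the base case $k = 1$, the family $\mathcal{F}^1$ is processed at the fixed level $q_0 = q^{(1)}$, independent of every perturbed coordinate; since its selection rule is simple (Assumption A3) and $H_{i_1}$ stays selected while $\mathcal{F}^1\setminus\{i_1\}$ is fixed, Definition \ref{simple1} gives that $|\mathcal{S}_{i_0}^1|$ does not change as $p_{i_1}$ varies. For the inductive step, assuming $|\mathcal{S}_{i_{j-1}}^{j}|$ is unchanged for all $j < k$, the level $q_{i_{k-1}}$ is a fixed function of precisely those counts and of fixed constants, hence unchanged; holding $q_{i_{k-1}}$ fixed reduces the situation to a single family, and because $H_{i_k}$ remains selected with $\mathcal{F}_{i_{k-1}}^{k}\setminus\{i_k\}$ fixed, Definition \ref{simple1}, applied at the fixed level $q_{i_{k-1}}$ when $k\in\mathcal{L}$ and directly to the assigned simple selection rule when $k\notin\mathcal{L}$, yields that $|\mathcal{S}_{i_{k-1}}^{k}|$ is unchanged. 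Running the induction from $k=1$ to $k=\ell-1$ establishes the invariance of every selection count along the branch, which is the assertion of the lemma. I would close by observing that since each $H_{i_{k-1}}$ stays selected throughout, the family $\mathcal{F}_{i_{k-1}}^{k}$ continues to be tested during the whole perturbation, so no count is ever vacuously redefined.
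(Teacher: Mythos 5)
Your proof is correct and follows essentially the same route as the paper's: a downward induction along the branch that combines the simplicity of each selection rule (Assumption A3) with the observation that the adaptive testing level of each family is a function only of the already-invariant ancestor selection counts, the fixed targets $q^{(k)}$, and the fixed family sizes. The only difference is bookkeeping --- the paper runs the induction over the ordered elements of $\mathcal{L}$, disposing of the blocks of levels outside $\mathcal{L}$ (whose selection rules involve no level parameter) in one stroke, whereas you induct level by level and split the cases $k\in\mathcal{L}$ and $k\notin\mathcal{L}$ inside the inductive step; the substance is identical.
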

 Consider first the case where $\ell=1$ and $\ell\in \mathcal{L}.$ Then $\mathbb{E}(s\mathcal{C}^{\ell})=\mathbb{E}(\mathcal{C}_0^1),$ and since a valid $\mathbb{E}(\mathcal{C}_0^1)$-controlling procedure at level $q^{(1)}$ is applied on the $p$-values of $\mathcal{F}^1,$ we have $\mathbb{E}(s\mathcal{C}^{1})\leq q^{(1)}.$ It remains to prove the result of Theorem \ref{thm-alg-3} for $\ell>1$ such that $\ell\in \mathcal{L}.$ Let us fix such $\ell.$
Let $\mathcal{C}_i^{+}$ be the
countable support of $\mathcal{C}_i^\ell$ for each family
$\mathcal{F}_i^{\ell}.$  Based on Lemma \ref{lem1}, we obtain that for any sequence of indices $i_1, i_2, \ldots, i_{\ell-1}$ such that
$i_1\in \mathcal{F}^1,$ $i_2\in
\mathcal{F}_{i_1}^2,\ldots,i_{\ell-1}\in
\mathcal{F}_{i_{\ell-2}}^{\ell-1},$ the following event on the space of $p$-values for hypotheses in the union of sets $\mathcal{F}^1\setminus\{i_1\},$ $\mathcal{F}_{i_1}^2\setminus\{i_2\},\ldots, \mathcal{F}_{i_{\ell-2}}^{\ell-1}\setminus\{i_{\ell-1}\}$ can be defined: if $H_{i_1},$ $H_{i_2},\ldots, H_{i_{\ell-1}}$ are selected, then the numbers of selected hypotheses in families $\mathcal{F}^1, \mathcal{F}_{i_1}^2,\ldots, \mathcal{F}_{i_{\ell-2}}^{\ell-1}$ are $r_{i_1}, r_{i_2},\ldots, r_{i_{\ell-1}},$ respectively. We denote this event by $C_{r_{i_1},\ldots, r_{i_{\ell-1}}}^{(i_1,\ldots, i_{\ell-1})}.$ Then
\begin{align*}
&\mathbb{E}(s\mathcal{C}^{\ell})=\\&\sum_{i_1\in
\mathcal{F}^1}\sum_{r_{i_1}=1}^{|\mathcal{F}^1|}\sum_{i_2\in
\mathcal{F}_{i_1}^2}\sum_{r_{i_2}=1}^{|\mathcal{F}_{i_1}^2|}\cdots\!\!\!\!\!\!
\sum_{i_{\ell-1}\in
\mathcal{F}_{i_{\ell-2}}^{\ell-1}}\sum_{r_{i_{\ell-1}}=1}^{|\mathcal{F}_{i_{\ell-2}}^{\ell-1}|}\frac{\displaystyle\sum_{c\in
\mathcal{C}_{i_{\ell-1}}^{+}}\!\!\!\!c\mathbb{P}(\mathcal{C}_{i_{\ell-1}}^{\ell}\!\!=c,
i_k\in \mathcal{S}_{i_{k-1}}^{k} \text{ for } k=1,\ldots, \ell-1, C_{r_{i_1}, \ldots, r_{i_{\ell-1}}}^{(i_1,\ldots, i_{\ell-1})})}{\prod_{k=1}^{\ell-1}r_{i_k}}.
\end{align*}
The TreeC procedure does not make any rejections in families that
are not selected, therefore if the family
$\mathcal{F}_{i_{\ell-1}}^{\ell}$ has an ancestor hypothesis that is
not selected, $\mathcal{C}_{i_{\ell-1}}^{\ell}=0.$  Therefore, for
this procedure we obtain
\begin{align}
&\mathbb{E}(s\mathcal{C}^{\ell})=\notag\\& \sum_{i_1\in
\mathcal{F}^1}\sum_{r_{i_1}=1}^{|\mathcal{F}^1|}\sum_{i_2\in
\mathcal{F}_{i_1}^2}\sum_{r_{i_2}=1}^{|\mathcal{F}_{i_1}^2|}\cdots\!\!\!\!\!\!
\sum_{i_{\ell-1}\in
\mathcal{F}_{i_{\ell-2}}^{\ell-1}}\sum_{r_{i_{\ell-1}}=1}^{|\mathcal{F}_{i_{\ell-2}}^{\ell-1}|}\frac{\displaystyle\sum_{c\in
\mathcal{C}_{i_{\ell-1}}^{+}}c\mathbb{P}(\mathcal{C}_{i_{\ell-1}}^{\ell}=c,
 C_{r_{i_1}, \ldots, r_{i_{\ell-1}}}^{(i_1,\ldots, i_{\ell-1})})}{\prod_{k=1}^{\ell-1}r_{i_k}}\label{indep}=\\
&\sum_{i_1\in
\mathcal{F}^1}\sum_{r_{i_1}=1}^{|\mathcal{F}^1|}\sum_{i_2\in
\mathcal{F}_{i_1}^2}\sum_{r_{i_2}=1}^{|\mathcal{F}_{i_1}^2|}\cdots\!\!\!\!\!\!
\sum_{i_{\ell-1}\in
\mathcal{F}_{i_{\ell-2}}^{\ell-1}}\sum_{r_{i_{\ell-1}}=1}^{|\mathcal{F}_{i_{\ell-2}}^{\ell-1}|}\frac{\displaystyle\sum_{c\in
\mathcal{C}_{i_{\ell-1}}^{+}}c\mathbb{P}(\mathcal{C}_{i_{\ell-1}}^{\ell}=c)\mathbb{P}( C_{r_{i_1}, \ldots, r_{i_{\ell-1}}}^{(i_1,\ldots, i_{\ell-1})})}{\prod_{k=1}^{\ell-1}r_{i_k}}=
\notag\\&\sum_{i_1\in
\mathcal{F}^1}\sum_{r_{i_1}=1}^{|\mathcal{F}^1|}\sum_{i_2\in
\mathcal{F}_{i_1}^2}\sum_{r_{i_2}=1}^{|\mathcal{F}_{i_1}^2|}\cdots\!\!\!\!\!\!
\sum_{i_{\ell-1}\in
\mathcal{F}_{i_{\ell-2}}^{\ell-1}}\sum_{r_{i_{\ell-1}}=1}^{|\mathcal{F}_{i_{\ell-2}}^{\ell-1}|}\frac{\mathbb{E}(\mathcal{C}_{i_{\ell-1}}^{\ell})\mathbb{P}(C_{r_{i_1}, \ldots, r_{i_{\ell-1}}}^{(i_1,\ldots, i_{\ell-1})})}{\prod_{k=1}^{\ell-1}r_{i_k}}\label{last}
\end{align}
The equality in (\ref{indep}) follows from the independence between
the $p$-values of hypotheses in $\mathcal{F}_i^{\ell}$ and the $p$-values of
hypotheses in $\cup_{k=1}^{\ell-1}\left(\mathcal{F}_{P^k(i)}^{\ell-k}\setminus \{P^{k-1}(i)\}\right),$ where $P^0(i)=i,$
for each family $\mathcal{F}_i^{\ell}.$
In expression (\ref{last}), $\mathcal{C}_{i_{\ell-1}}^{\ell}$ is the
value of the random measure of errors assigned to family
$\mathcal{F}_{i_{\ell-1}}^{\ell}$ (where
$\mathcal{F}_{i_{0}}^{1}=\mathcal{F}^1$) when a valid
$\mathbb{E}(\mathcal{C}_{i_{\ell-1}}^{\ell})$-controlling procedure at level
$\frac{\prod_{k=1}^{\ell-1}r_{i_k}}{\prod_{k=1}^{\ell-1}|\mathcal{F}_{i_{k-1}}^k|}q^{(\ell)}
$ is applied in each selected family residing at Level $\ell.$ Since
$\mathcal{C}_{i}^{\ell}=0$ for families that are not selected, we
obtain that $\mathbb{E}(\mathcal{C}_{i}^{\ell})\leq
\frac{\prod_{k=1}^{\ell-1}r_{i_k}}{\prod_{k=1}^{\ell-1}|\mathcal{F}_{i_{k-1}}^k|}
q^{(\ell)} $ for \textit{each} family $\mathcal{F}_i^{\ell}.$ Using
this inequality and the fact that for each sequence $i_1,
i_2,\ldots, i_{\ell-1}$ such that $i_1\in \mathcal{F}^1,$ $i_2\in
\mathcal{F}_{i_1}^2,$ \ldots, $i_{\ell-1}\in
\mathcal{F}_{i_{\ell-2}}^{\ell-1}$
$$\sum_{r_{i_1}=1}^{|\mathcal{F}^1|}\sum_{r_{i_2}=1}^{|\mathcal{F}_{i_1}^2|}\ldots \sum_{r_{i_{\ell-1}}=1}^{|\mathcal{F}_{i_{\ell-2}}^{\ell-1}|}\mathbb{P}(C_{r_{i_1}, \ldots, r_{i_{\ell-1}}}^{(i_1,\ldots, i_{\ell-1})})=1,$$ we obtain
\begin{align*}&\sum_{i_1\in
\mathcal{F}^1}\sum_{r_{i_1}=1}^{|\mathcal{F}^1|}\sum_{i_2\in
\mathcal{F}_{i_1}^2}\sum_{r_{i_2}=1}^{|\mathcal{F}_{i_1}^2|}\ldots
\sum_{i_{\ell-1}\in
\mathcal{F}_{i_{\ell-2}}^{\ell-1}}\sum_{r_{i_{\ell-1}}=1}^{|\mathcal{F}_{i_{\ell-2}}^{\ell-1}|}\frac{\mathbb{E}(\mathcal{C}_{i_{\ell-1}}^{\ell})\mathbb{P}(C_{r_{i_1}, \ldots, r_{i_{\ell-1}}}^{(i_1,\ldots, i_{\ell-1})})}{\prod_{k=1}^{\ell-1}r_{i_k}}\leq\\&\sum_{i_1\in
\mathcal{F}^1}\sum_{i_2\in \mathcal{F}_{i_1}^2}\cdots\!\!\!\!\!\!
\sum_{i_{\ell-1}\in
\mathcal{F}_{i_{\ell-2}}^{\ell-1}}\frac{\prod_{k=1}^{\ell-1}r_{i_k}
}{\prod_{k=1}^{\ell-1}|\mathcal{F}_{i_{k-1}}^k|}q^{(\ell)}\frac{1}{\prod_{k=1}^{\ell-1}r_{i_k}}\sum_{r_{i_1}=1}^{|\mathcal{F}^1|}\sum_{r_{i_2}=1}^{|\mathcal{F}_{i_1}^2|}\cdots\!\!\!\!\sum_{r_{i_{\ell-1}}=1}^{|\mathcal{F}_{i_{\ell-2}}^{\ell-1}|}\!\!\mathbb{P}(C_{r_{i_1}, \ldots, r_{i_{\ell-1}}}^{(i_1,\ldots, i_{\ell-1})})=\\
&q^{(\ell)}\sum_{i_1\in
\mathcal{F}^1}\frac{1}{|\mathcal{F}^1|}\sum_{i_2\in
\mathcal{F}_{i_1}^2}\frac{1}{|\mathcal{F}_{i_1}^2|}\ldots
\sum_{i_{\ell-1}\in
\mathcal{F}_{i_{\ell-2}}^{\ell-1}}\frac{1}{|\mathcal{F}_{i_{\ell-2}}^{\ell-1}|}=q^{(\ell)}.
\end{align*}
Combining this result with the result in (\ref{last}) we obtain
$\mathbb{E}(s\mathcal{C}^{\ell})\leq q^{(\ell)}.$ 
\end{proof}
\begin{proof}[Proof of Lemma \ref{lem1}]
Let $\ell_1<\ell_2<\ldots<\ell_{|\mathcal{L}|}$ be the ordered sequence of all values in $\mathcal{L}$ that are greater than 1.  For $\ell=\ell_1,$ the result of
Lemma \ref{lem1} follows immediately from the fact that the selection rules applied on families at levels $1,\ldots, \ell_1-1$ are simple selection rules. Let us prove the result of Lemma \ref{lem1} for $\ell=\ell_2.$ Let $i_1, i_2, \ldots, i_{\ell_2-1}$ be a sequence of indices  such that $i_1\in \mathcal{F}^1,$ $i_2\in
\mathcal{F}_{i_1}^2,\ldots,i_{\ell_2-1}\in
\mathcal{F}_{i_{\ell_2-2}}^{\ell_2-1}.$ Assume that the $p$-values for $H_{i_1},$ $H_{i_2},\ldots, H_{i_{\ell_2-1}}$ can change as long as $H_{i_1},$ $H_{i_2},\ldots, H_{i_{\ell_2-1}}$ are selected, while the $p$-values of hypotheses in the sets $\mathcal{F}^1\setminus\{i_1\},$ $\mathcal{F}_{i_1}^2\setminus\{i_2\},\ldots, \mathcal{F}_{i_{\ell_2-2}}^{\ell_2-1}\setminus\{i_{\ell_2-1}\}$ are fixed. Then the number of selected hypotheses in each family $\mathcal{F}_{i_{k-1}}^{k}$ where $k\leq \ell_1-1$ or $\ell_1+1\leq k\leq \ell_2-1$ remains unchanged, as shown above. Let us denote the number of selected hypotheses in $\mathcal{F}_{i_{k-1}}^{k}$ by $r_{i_k},$ for $k\leq \ell_1-1.$
For selecting hypotheses in $\mathcal{F}^{\ell_1}_{i_{\ell_1-1}},$ an $\mathbb{E}(\mathcal{C}_{i_{\ell_1-1}}^{\ell_1})$-controlling procedure at level
$\frac{\prod_{k=1}^{\ell_1-1}{r_{i_k}}}{\prod_{k=1}^{\ell_1-1}|\mathcal{F}^{k}_{i_{k-1}}|}q^{(\ell_1)}$ is applied on the $p$-values of
$\mathcal{F}^{\ell_1}_{i_{\ell_1-1}}.$ Since this procedure defines a simple selection rule, the number of rejections in  $\mathcal{F}^{\ell_1}_{i_{\ell_1-1}}$ remains unchanged. Thus we have proven the result of Lemma \ref{lem1} for $\ell=\ell_2.$ 
The proof for $\ell=\ell_3,\ldots,\ell_{|\mathcal{L}|}$ follows from using the above arguments repeatedly.
\end{proof}
\begin{proof}[Proof of Theorem \ref{thm2-gen}]
The results follow from the arguments similar to those used in the proof of Theorem 3 in the main manuscript.  For any sequence of indices $(i_1, i_2, \ldots, i_{L})$ and sequence of numbers $(r_1, \ldots, r_{L})$ such that
$i_1\in \mathcal{F}^1,$ $i_2\in
\mathcal{F}_{i_1}^2,\ldots,i_{L}\in
\mathcal{F}_{i_{L-1}}^{L},$ and $r_k\in\{1, \ldots, |\mathcal{F}_{i_{k-1}}^{k}| \},$ the event $C_{r_1, \ldots, r_{L}}^{(i_1, \ldots, i_{L})}$
defined in (\ref{Cr}) should be defined now as
$$C_{r_{1},\ldots, r_{L}}^{(i_1,\ldots, i_{L})}=\left(\cap_{k\in \mathcal{L}}C_{r_{k}}^{(i_k)}[q_{i_{k-1}}]\right)\cap \left(\cap_{k\notin \mathcal{L}}C_{r_{k}}^{(i_k)}\right),
$$
where the events $C_r^{(j)}$
and $C_r^{(j)}[\alpha]$ are defined before the statement of Theorem \ref{thm2-gen} and $q_{i_{k-1}}$ is defined in (\ref{q_i}). Using this definition of $C_{r_{1},\ldots, r_{L}}^{(i_1,\ldots, i_{L})},$ we define the events
$B_s^{(i_1, \ldots, i_{L})}$ and $A_s^{(i_1, \ldots, i_L)}$ as in (\ref{Bevent}) and (\ref{at}) respectively. It can be easily verified that the results of Lemma \ref{lemtreeBH} hold for $\ell=L,$ therefore it is enough to prove inequality (\ref{enough}) for $\ell=L,$ as shown in the proof of Theorem 1 in the main manuscript.
 Using the arguments used in the proof of Lemma \ref{incr}, we obtain that for each sequence $(i_1, \ldots, i_{L})$ such that $i_1\in \mathcal{F}^1,$ $i_2\in
\mathcal{F}_{i_1}^2,\ldots,i_{L}\in
\mathcal{F}_{i_{L-1}}^{L},$ and $s\in\{1, \ldots, \prod_{k=1}^{L}|\mathcal{F}_{i_{k-1}}^{k}|\},$
\begin{align}
A_s^{(i_1, \ldots, i_{L})}=\bigcup_{t=1}^{s}\bigcup_{(r_1, \ldots, r_{L})\in I_t}\left\{\left(\cap_{k\in\mathcal{L}}D_{r_k}^{(i_k)}[q_{i_{k-1}}]\right)\cap \left(\cap_{j\notin\mathcal{L}}D_{r_j}^{(i_j)}\right)\right\},\label{as}
\end{align}
where the events $D_k^{(j)}$ and $D_k^{(j)}[\alpha]$ were defined in (\ref{d1}) and (\ref{d2}) respectively. Since the BH procedure is concordant, and the selection rules for families at levels $\ell=1, \ldots, L-1$ are concordant as well, the sets $D_{r_k}^{(i_k)}[q_{i_{k-1}}]$ and $D_{r_j}^{(i_j)}$ in (\ref{as}) are increasing sets of their input $p$-values, for each $k\in \mathcal{L}$ and $j\notin\mathcal{L}.$ Using the fact that each of the combination rules used for computing the $p$-values for parent hypotheses at levels $\ell=1, \ldots, L-1$ is a monotone non-decreasing function of each of its input $p$-values, and following the arguments of the proof of item 1 of Theorem 3 in the main manuscript, we obtain that the sets $D_{r_k}^{(i_k)}[q_{i_{k-1}}]$ and $D_{r_j}^{(i_j)}$ in (\ref{as}) are increasing sets of $\bm p^{L},$ which yields that the set $A_s^{(i_1, \ldots, i_{L})}$ is an increasing set of  $\bm p^{L}.$ Now the result of item 1 follows using the same arguments as in the proof of item 1 of Theorem 3 in the main manuscript. The result of item 2 follows from Proposition 1 and from the result of item 1.

\end{proof}
\newpage
\beginsupplement
\section*{Details of GWAS simulation}

We adapt the simulation study proposed in Lewin et al. (2015) \cite{MTHESS} for eQTL studies.
 Specifically, we assume the same underlying model, with $\mathbf{X}_{n\times p}$ the matrix of additively coded values for each of $p$ genetic variants in $n$ subjects.
 The ${n\times q}$ matrix $\mathbf{Y}_l$ represents the simulated gene expression values for tissue $l$, $1 \leq l \leq L$, and is generated from the $\mathbf{X}$ matrix via the linear model
\begin{equation} \label{mthess_model}
\mathbf{Y}_l  = \mathbf{X}\mathbf{B}_l + \mathbf{E}_l + \mathbf{E}_{shared},
 \end{equation} where $\mathbf{B}$ is a matrix of regression coefficients which is shared across tissues, $\mathbf{E}_l$ is a matrix of  tissue-specific residuals with entries
 $e_{ikl} \sim \mathcal{N}(0, \sigma^2_l)$, and $\mathbf{E}_{shared}$ is a matrix of residuals shared across tissues with entries $e_{ik_{shared}} \sim \mathcal{N}(0, \sigma^2_{shared})$.
 Following \cite{MTHESS}, we assume the same pattern of association within each tissue (i.e.\ the coefficient matrix $\mathbf{B}_l = \mathbf{B}$ is common across tissues).
 Note that this implies that  the Level 3 families are homogenous. The total variance of the residuals for tissue $l$ is then $\sigma^2_{total} = \sigma^2_{shared} + \sigma^2_l$. The entries in $\mathbf{B}$ are generated following the equation $\beta_{kj} = \lambda_{kj} \times \gamma_{kj}$, where the magnitude of the coefficients $\lambda_{kj} \sim \mathcal{N}(\mu, 0.001^2)$ is controlled by the signal strength parameter $\mu$ and $\gamma_{kj}$ is a binary indicator (i.e.\ $\gamma_{kj}$ determines the pattern of association, while $\lambda_{kj}$ controls the signal strength).

To better mimic a real multi-tissue eQTL study, we increase the dimensionality vs.\ that of \cite{MTHESS} (who consider a rat dataset with $n$ = 29 and $p$ = 1304 as their predictors and simulate the expression of $q$ = 150 genes in $L$ = 3 tissues), and take $\mathbf{X}_{n\times p}$ to be the chromosome 17 genotype data from the North Finland Birth Cohort (NFBC) data set ($n$ = 5402 and $p$ = 8713), where missing SNP values are replaced by the column mean and the data have been standardized. Given this genotype matrix, we then simulate the expression of $q$ = 250 genes in each of $L$ = 5 tissues following equation \eqref{mthess_model}.

 Following the scenario in \cite{MTHESS} with correlated noise, balanced across tissues, we set $\sigma^2_l = 0.0084$ and $\sigma^2_{shared} = 0.0016$, resulting in total noise standard deviation $\sigma_{total}$ = 0.1. Varying signal-to-noise ratios are achieved by varying the signal strength parameter $\mu$ (i.e.\ the mean of the normal distribution from which the non-zero entries in $\mathbf{B}$ are drawn). We assume that there are 50 causal SNPs which are each associated to the expression of 5, 10, or 20 genes. The ``true'' SNPs are selected at random in each iteration, but the configuration of which phenotype the ``true'' SNPs affect is held constant. This configuration is illustrated in Figure \ref{gamma_config}.

All $p$ = 8713 SNPs are tested for association to the simulated gene
expression traits. Association $p$-values are obtained using Matrix
eQTL \cite{MatrixEQTL}, including the first 5 principal components
of the genotypes as covariates to account for population structure.
Discoveries are considered to be true positives as long as the
discovered SNP is within 1Mb and has correlation of magnitude at
least 0.2 with a SNP that is truly relevant to the trait. Given this
definition, we compare the performance of BH, BB, 
and TreeBH,
where BB is applied with SNPs in Level 1 and all traits (i.e.\
expression for all genes across all tissues) in Level 2, 
and TreeBH is applied with SNPs in Level 1, genes in Level 2, and
tissues in Level 3. The p-filter is not included here as it was not
able to run in a timely manner (i.e.\ < 24 hrs) given similar
groupings to those in the  simulation with independent hypotheses.
All methods are applied with target level 0.05. A performance
comparison for the 4 methods compared across 25 simulated data sets
is given in Figure \ref{sim2_results}: the TreeBH 
procedure appears to at least approximately control the target error
rates, unlike BH, and has a similar FDR, sFDR, and power to BB. In
particular, we see that BB is close to controlling the target error
rates, unlike in the simulations from Section 5.2. This is
likely due to the fact that the adjustment for step 1 selection is
substantially more stringent here because of the sparsity of signal
across SNPs (specifically, the fact that only 50 of 8713 Level 1
hypotheses are non-null).

 \begin{figure}[h!]
 \includegraphics[trim = {6cm 12cm 5cm 10cm}, clip, width = .9\linewidth]{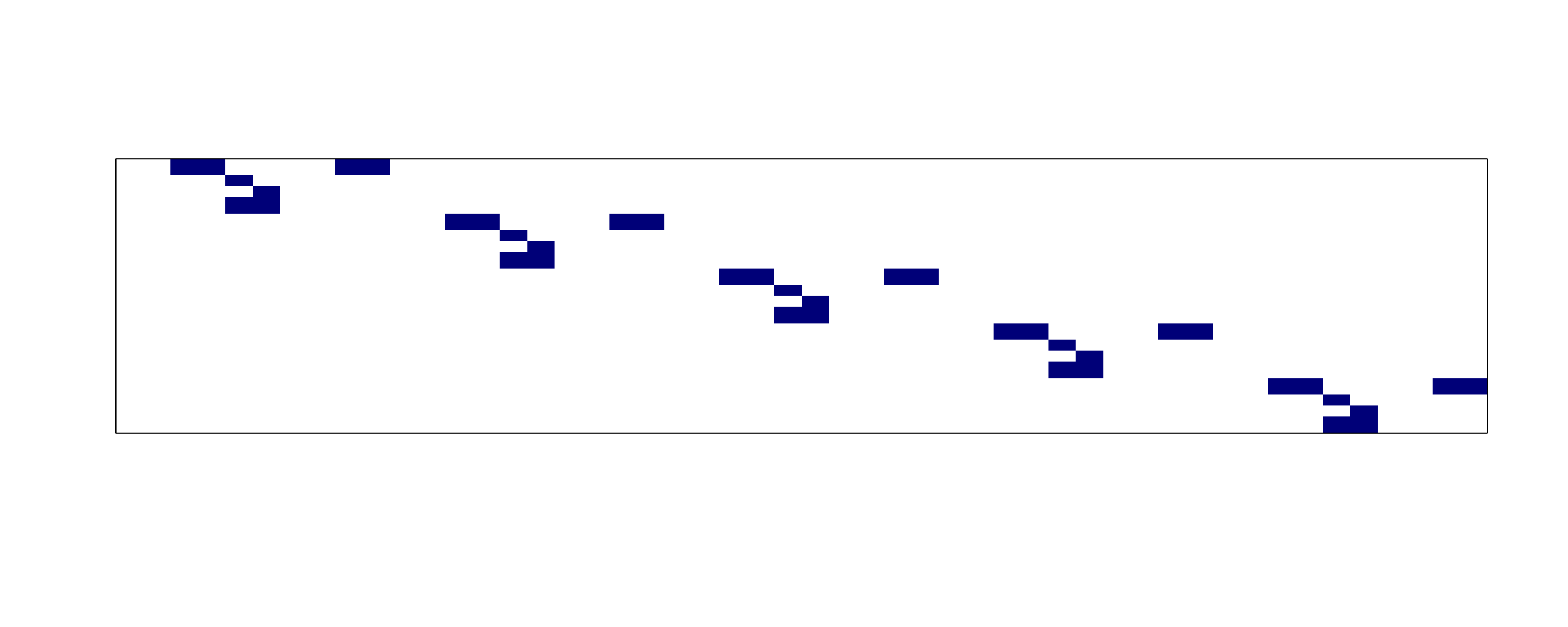}

\caption{\em Pattern of nonzero coefficients across 50 causal SNPs (rows) and 250 simulated traits (columns) for a single tissue for the simulation described in Section 5.3. Simulated traits with no genetic basis are included in the illustration (i.e.\ blank columns), but non-causal SNPs (i.e.\ blank rows, which are in fact much more numerous than the causal SNPs) are omitted. Note that in each  iteration of the simulation, the causal SNPs are selected at random, rather than being taken as a contiguous ordered block.  }
\label{gamma_config}
 \end{figure}

 \begin{figure}[h!]
\centering
\includegraphics[width = 0.9\linewidth, trim={3cm 0 0 0}, clip]{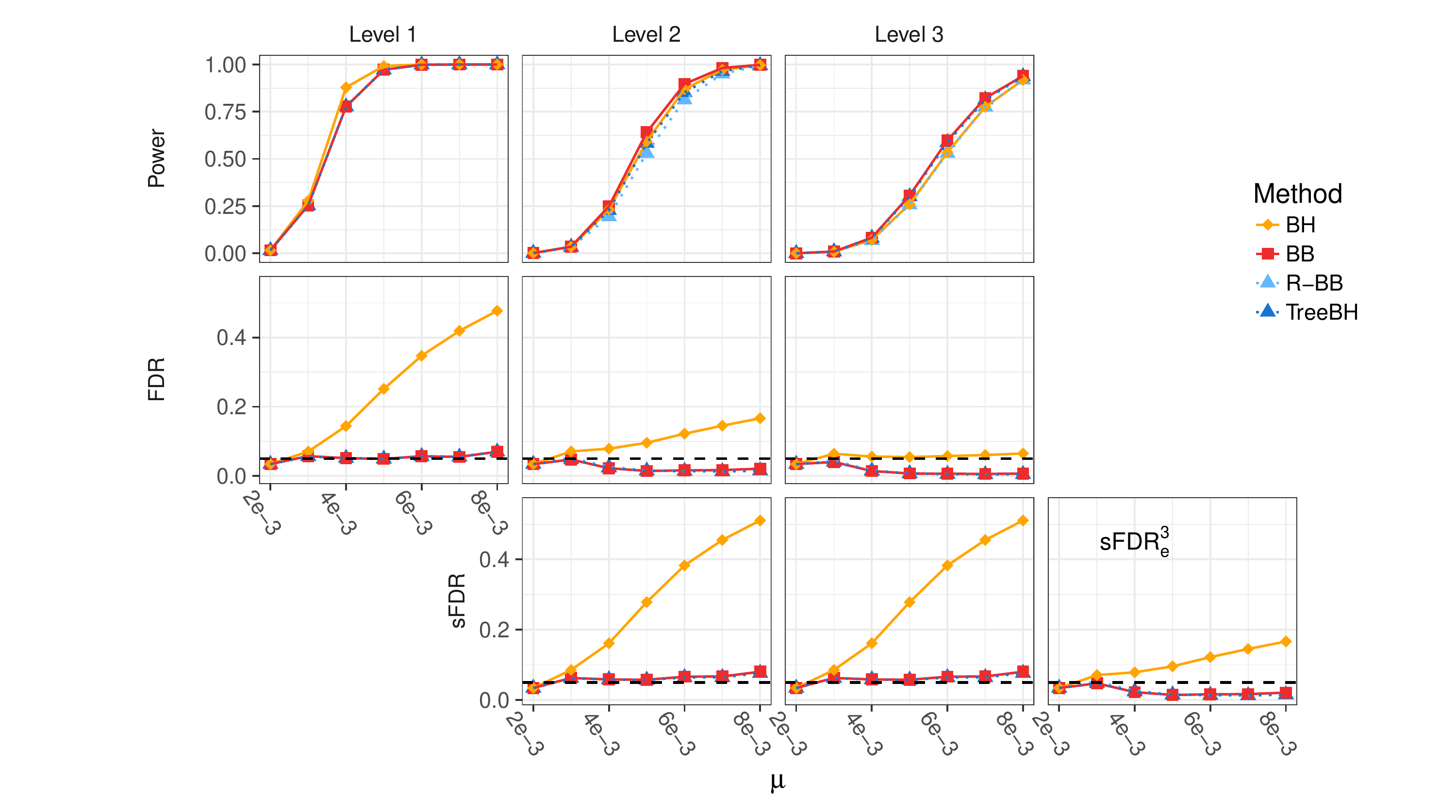}
\caption{\em Results for Example 5.3: multi-trait GWAS. Each point corresponds to the average of  25 simulated data sets using the full set of $p$ = 8713 SNPs on chromosome 17.}
\label{sim2_results}
\end{figure}

\begin{figure}[h!]
\centering
\includegraphics[width = \linewidth]{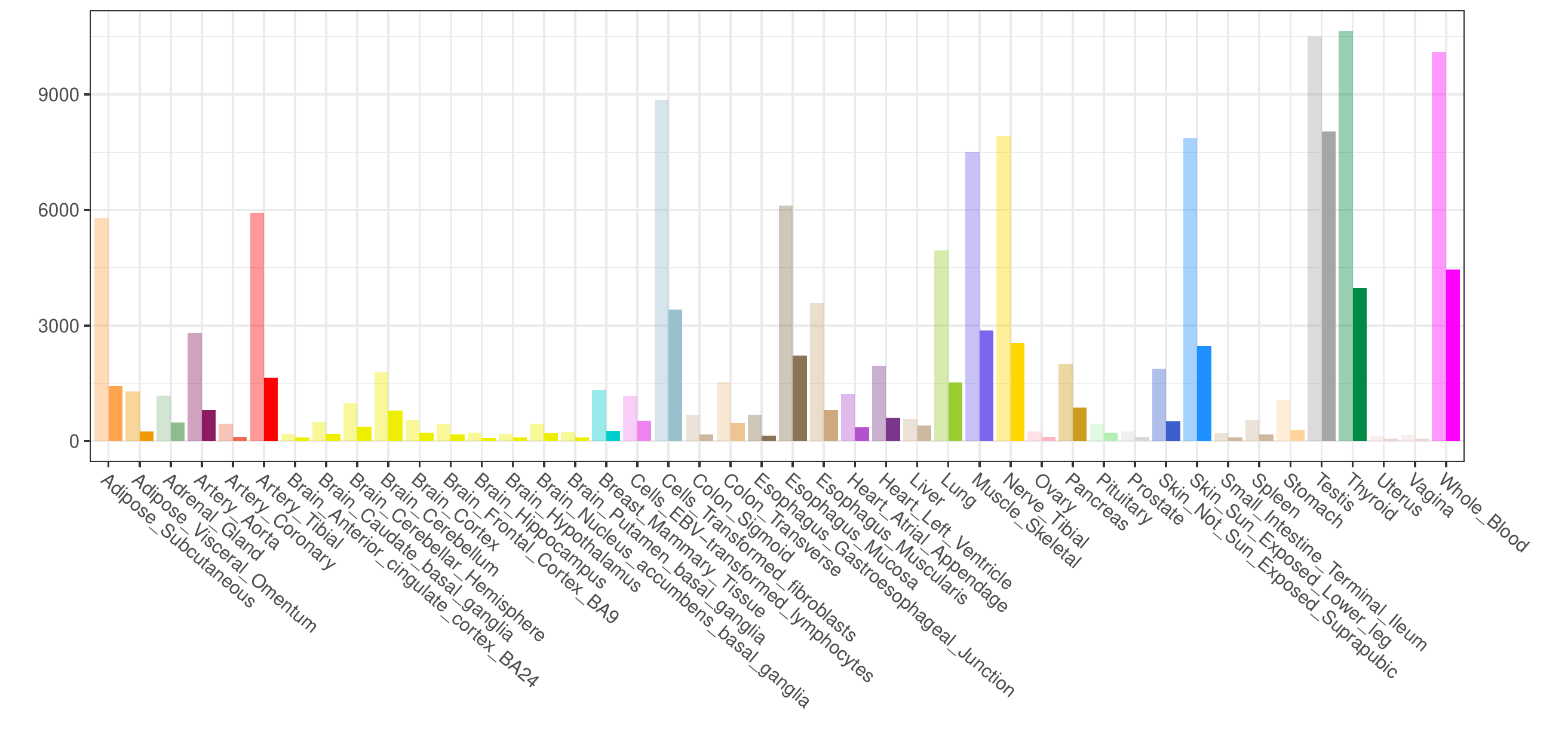} \\
\caption[]{\em Comparison of selection results for the BH procedure
applied separately by tissue (transparent) and the 3-level TreeBH procedure
(solid) in terms of the number of tissue-specific SNP-gene pairs per
tissue i.e.\ the number of SNP-gene pairs that were discovered only
in the given tissue. } \label{gtex_tissue_spec}
\end{figure}



\end{document}